\numberwithin{equation}{section}
\renewcommand{\phi}{\varphi}
\newcommand{\lb}{\left (}
\newcommand{\rb}{\right )}
\newcommand{\suvp}{\sup_{U_{AA'},V_{BB'},\ket{\phi}}}
\newcommand{\suvc}{\sup_{U_{AA'},V_{BA'},\ket{\phi}}}
\newcommand{\suv}{\sup_{U_{AA'},V_{BB'}}}
\newcommand{\bra}[1]{\langle #1 \vert}
\newcommand{\ket}[1]{\vert #1 \rangle}
\newcommand{\N}[1]{\left \Vert #1 \right \Vert}
\newcommand{\st}{\, : \,}
\newcommand{\h}{\mathcal H}
\newcommand{\mc}{\mathcal}
\theoremstyle{plain}
\newtheorem{lemma}{Lemma}[section]
\newtheorem{theorem}[lemma]{Theorem}
\newtheorem{prop}[lemma]{Proposition}
\theoremstyle{definition}
\newtheorem{definition}[lemma]{Definition}
\newtheorem{remark}[lemma]{Remark}
      \newcommand{\C}{{\mathbb C}}
    \newcommand{\He}{{\mathcal H}}
\renewcommand{\>}{\rangle}
\newcommand{\<}{\langle}
\newcommand{\supp}{\operatorname{supp}}
\newcommand{\gl}{\, \ge \, }
\begin{document}

\baselineskip=17pt

\title[Rank-one Quantum Games]{Rank-one Quantum Games}

%\author{T. Cooney, M. Junge, C. Palazuelos, D. Pérez-García}

\author{T. Cooney}
\address{Departamento de An\'alisis Matem\'atico and IMI, Universidad Complutense de
Madrid, 28040, Madrid, Spain}
\email{tomcooney1@gmail.com}

\author{M. Junge}
\address{Department of Mathematics, University of Illinois, Urbana, IL 61801, USA}
\email{junge@math.uiuc.edu}

\author{C. Palazuelos}
\address{Instituto de Ciencias Matem\'aticas, CSIC-UAM-UC3M-UCM, CSIC, 28049, Madrid. Spain}
\email{carlospalazuelos@icmat.es}

\author{D. P\'erez-Garc\'ia}
\address{Departamento de An\'alisis Matem\'atico and IMI, Universidad Complutense de
Madrid, 28040, Madrid, Spain}
\email{dperez@mat.ucm.es}

\thanks{The first, third and fourth authors are partially supported by the Spanish grants QUITEMAD, I-MATH, MTM2011-26912, S2009/ESP-1594 and the European project QUEVADIS. The second author is partially supported by NSF DMS-0901457. The third author is partially supported by ``Juan de la Cierva'' program (Spain)}

\maketitle

\begin{abstract}
In this work we study rank-one quantum games. In particular, we focus on the study of the computability of the entangled value $\omega^*$. We  show that the value $\omega^*$ can be efficiently approximated up to a multiplicative factor of $4$. We also study the behavior of $\omega^*$ under the parallel repetition of rank-one quantum games,  showing that it   does not verify a perfect parallel repetition theorem. To obtain these results, we  first connect rank-one games with the mathematical theory of operator spaces. We  also reprove with these new tools essentially known results about the entangled value of rank-one games with one-way communication $\omega_{qow}$. In particular, we  show that $\omega_{qow}$ can be computed efficiently and it satisfies a perfect parallel repetition theorem.
\end{abstract}

\section{Introduction}\label{sec introduction}

The study of two-player one-round games is a central topic in both theoretical computer science and quantum information theory (QIT). In theoretical computer science, they play a key role in analyzing  the complexity of approximating some combinatorial optimization problems. As for quantum information, two-player one-round games are a natural setting in which to understand Bell inequalities. Bell inequalities have always played a fundamental role in QIT and their applications cover a huge variety of topics, from cryptography to foundational issues.

A two-player one-round game $G$ is specified by a referee, who chooses a pair of questions according to a probability distribution and who sends one question to each of the players. These players respond with answers taken from a certain finite set. The referee decides whether the players win according to a predicate which depends on the questions and answers. The players can agree in advance on a strategy for their answers but they are not allowed to communicate with each other once the game has started. Computer scientists are mainly interested in the \emph{classical value} of the game, $\omega(G)$, which is defined as the maximum attainable winning probability of the players when they are allowed to use classical strategies. However, having in mind that quantum mechanics provides us with, in principle, new  possibilities, one can consider the maximum attainable winning probability of the players when they are allowed to share (unlimited) entanglement to define their strategies. We
  then talk of the \emph{entangled value} of the game $G$ and we denote it by $\omega^*(G)$. One of Bell's fundamental observations can be reformulated as saying that $w^*(G)\gl \sqrt{2}w(G)>w(G)$ for certain games $G$ (so the players can indeed define strictly better strategies if they are allowed to use quantum resources instead of just classical strategies). Moreover, the value $\omega^*(G)$ and the quotient $\frac{\omega^*(G)}{\omega(G)}$ have been shown to be very important parameters in QIT. Thus, the fundamental questions about $\omega(G)$  have also been studied for the entangled value of the game $\omega^*(G)$. In particular, the study of \emph{how hard it is to compute or approximate the value $\omega^*(G)$} and the behavior of the parameter $\omega^*(G)$ with respect to \emph{the parallel repetition of $G$} has captured the attention of many authors in the few last years (\cite{KKMTV}, \cite{CSUU}, \cite{KeRe}, \cite{KRT}, \cite{KeVi}).

It turns out that another class of games naturally arises in the context of quantum information. \emph{Quantum games} are those in which the communication between the referee and the players (the questions and answers) is transmitted using  quantum states. Specifically, in a quantum game the referee prepares an initial tripartite state $ABC$ and sends registers $A$ and $B$ to Alice and Bob (the players), respectively. After following their previously agreed strategy, Alice and Bob send back their new registers to the referee, who  tests the answers via a two outputs (win/lose) projective measurement. We will give a more precise explanation of these games in Section \ref{sec connections}. In the same way as above, one can define different values for quantum games according to the strategies that Alice and Bob are allowed to use. Since everything considered in these games is quantum, the classical value of the game $\omega(G)$ does not seem so natural. However, we can define the \emph{
 entangled valued} of a quantum game, $\omega^*(G)$, in exactly the same way as before. That is, $\omega^*(G)$ is the maximum attainable winning probability of the players when they are allowed to share unlimited entanglement to define their strategies. Furthermore, we can also consider the value of the game $V(G)$ when Alice and Bob are allowed to share unlimited entanglement and to send unlimited amount of two way quantum information. $V(G)$ is called the {\it maximal value of the game} and trivially coincides with the maximum attainable winning probability for a unique player with access to both Alice and Bob's Hilbert spaces. It was recently shown by Buhrman et al. (\cite[Theorem 4.1]{BCFGGOS}),  that in order to obtain the value $V(G)$ it is enough to consider the \emph{entangled value of the game with simultaneous mutual communication}. That is, when Alice and Bob share unlimited entanglement and both can send an unlimited amount of quantum information, with the restriction tha
 t  their messages cannot depend on the ones received. As an intermediate situation, we may also consider the case of \emph{one-way communication}. This means that we  allow one of the players to send information to the other one, but not the other way around. We  talk in this case about the \emph{entangled value of the game with one-way communication} and we  denote it by $\omega_{qow}(G)$. As we will show (see Section \ref{Section: three models}) the values $V(G)$, $\omega_{qow}(G)$ and $\omega^*(G)$ can be very different for certain rank-one quantum games $G$.

Two recent papers have studied quantum games from different perspectives (\cite{KKMTV}, \cite{LTW}). In the first work, the authors studied general quantum games. Following the approach mentioned before, they studied some important parameters, which arise in the context of computer science, for the entangled value of quantum games. In particular, in \cite{KKMTV} Kempe at al. studied the hardness of computing the value $\omega^*(G)$. One of the main results presented in that work states that \emph{it is NP-hard to approximate the entangled value of a general quantum game with inverse polynomial precision}. On the other hand, the approach followed in \cite{LTW} by Leung et al. was via studying some particular quantum games. Indeed, motivated by the study of how much entanglement is needed to optimally play a quantum game, in \cite{LTW} the authors considered a particular case of quantum games, the so called \emph{coherent state exchange games}. Then, the authors showed that some of the
 se quantum games can be played optimally with an infinite amount of entanglement (that is, $\omega^*(G)=1$), though no finite dimensional entangled state can define a strategy that wins with probability one. It is very interesting to mention that there is not any known analogous result for the entangled value of classical games. The reader can find some other references on quantum games considering some other problems: a single player (\cite{Wat2}, \cite{KW}), limited prior entanglement (\cite{KoMa}) or players using  classical communication but with no prior shared entanglement (\cite{BAP}).

In this paper we deal with those quantum games in which the projective measurement of the referee is defined by a rank-one projection. We call these games \emph{rank-one quantum games}. One example of rank-one quantum games are the coherent state exchange games studied by Leung et al. in \cite{LTW}. We will also introduce some other examples of rank-one quantum games that have some interesting properties. Our approach to the study of these games is via \emph{operator spaces}. Operator space theory can be understood as a non-commutative version of  Banach space theory and have been shown to be a natural mathematical tool in quantum information theory. In recent years, they have been applied in several contexts like Bell inequalities (\cite{PWJPV}, \cite{JPPVW2}, \cite{JNPPSW}, \cite{JP}), quantum channels (\cite{DJKR}, \cite{JKP}), and entanglement theory  (\cite{JKPP}). We also refer to \cite{PisierSurvey} for a very nice survey on the topic. The main connection established in this work says that \emph{given a rank-one quantum game $G$, the entangled value $\omega^*(G)$ and the entangled value with one-way communication $\omega_{qow}(G)$ can be expressed by certain operator space norms on the tensor product of two matrix spaces, one corresponding to Alice and the other to Bob}\footnote{The formal result is Theorem \ref{main- connection} in Section \ref{sec connections}.}. With this connection at hand, we are able to study both problems: the hardness of computing or approximating the value $\omega^*(G)$ as well as its behavior with respect to the parallel repetition of the game. Along  the way we also recover the corresponding results for $\omega_{qow}(G)$, which were essentially already known before  albeit using completely different techniques. We note that Rapaport and Ta-Shma describe a protocol in Section 3.3 of \cite{RA} which is basically the same as that defining a rank-one quantum game. Their Theorem 3.2 gives a formula for $\omega^*(G)$. In order to avoid confusion, we point out that ``rank-one'' has a different meaning in this work than in their paper. On the other hand, during the process of preparing this work we learnt that Regev and Vidick came up with similar connections while considering different kinds of games \cite{RegevVidick}. Although there is a connection between both works (see \cite[Section 5.1]{RegevVidick}) most of the questions, so the results, considered there are different form those treated in our work.

We strive to make this paper as accessible as possible; our presentation is aimed at readers who are not already familiar with operator spaces. There is a section devoted to the basic definitions and results from this theory; we do not attempt to give an overview of the field and only include those results that are needed in this work. In order to make this paper more self-contained, we provide simple derivations of some estimates that are well-known to specialists in operator spaces. If we do not reprove a result (like the operator space Grothendieck inequality, Theorems \ref{Grothendieck} and \ref{Grothendieck II}), we discuss how the version of this theorem applied to quantum games in this paper can be derived from the more general theorems appearing in the operator space literature.

\subsection{Summary of results}

\subsubsection{Computing and approximating the entangled value}

Since the most important parameters in the study of games are the corresponding values (classical, entangled, one-way, \dots), it is natural to ask how hard these values are to compute or approximate. In the setting of classical games the problem is quite well understood if we focus on its classical value. Indeed, as a consequence of the PCP theorem (\cite{ALMSS}, \cite{ArSa}) and the parallel repetition theorem of Raz \cite{Raz}, one can deduce that, unless P=NP, for any fixed $\epsilon>0$ there is no algorithm working in polynomial time in the number of questions and answers which can decide whether the value of a two-player one-round game is $1$ or $<\epsilon$, given the promise that one of the two options happens.
%However, there are still very important problems in the area, the most important being the unique game conjecture (\cite{Khot}).

Surprisingly, up to some results on particular kinds of games, much less is known about the computability or approximability of the entangled value of a general two-player one-round classical game. In \cite{IKM} the authors proved that it is NP-hard to approximate the entangled value of a two-player one-round classical game, $\omega^*(G)$, with inverse polynomial precision. Regarding positive results, the only known cases are XOR games, whose entangled value can be efficiently computed (\cite{CSUU})\footnote{Interestingly, Vidick has proved very recently that for any $\epsilon > 0$ the problem of finding a factor $(2 - \epsilon)$ approximation to the entangled
value of a three-player XOR game is NP-hard (\cite{Vidick}).} and with unique games (a more general class than XOR games), whose entangled value can be efficiently approximated, at least when this value is very close to one (\cite{KRT}). Regarding quantum games, however, it was proved in \cite{KKMTV} that it is NP-hard to approximate the entangled value $\omega^*(G)$ with inverse polynomial precision. In this paper we study the computability of $\omega^*(G)$  for rank-one quantum games $G$. Our main theorem states as follows.
\begin{theorem}\label{Main I}
The entangled value $\omega^*$ can be efficiently approximated up to a multiplicative constant relative error of $4$ on rank-one quantum games.
\end{theorem}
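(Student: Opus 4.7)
The plan is to follow the operator space route advertised in the introduction. The starting point is Theorem \ref{main- connection}, which identifies $\omega^*(G)$ with a specific operator space tensor norm applied to the tensor $T_G \in M_N \otimes M_N$ that encodes the rank-one projection defining the game. This translates the optimisation over entangled strategies into the problem of computing a norm of a fixed tensor in the operator space tensor product of two matrix algebras, where the full machinery of operator space theory becomes available.

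The second step is to apply the operator space Grothendieck inequality (Theorems \ref{Grothendieck} and \ref{Grothendieck II}) to sandwich this tensor norm between a more tractable factorisation-type norm $\gamma(T_G)$ and a constant multiple of itself. Concretely, one direction is a trivial inequality identifying the operator space tensor norm with a cb-bilinear form norm, while the non-trivial direction asserts the existence of a factorisation of this bilinear form through a row/column Hilbert space pair with a controlled constant. Tracking the constants produced by the two inequalities (each of which contributes a factor of $2$ in the symmetric formulation) should yield exactly $\omega^*(G) \leq \gamma(T_G) \leq 4\, \omega^*(G)$.

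The third step is to recognise $\gamma(T_G)$ as the optimal value of a semidefinite program whose size is polynomial in the number of questions and answers of $G$. Factorisation norms of this type are standard SDPs: the feasible set consists of pairs of positive operators with bounded trace that dominate certain blocks of $T_G$, and the objective is linear in these variables. Standard interior point solvers then return a value within arbitrarily small additive error in time polynomial in the input size. Combined with the sandwich from Step 2, this produces an efficient multiplicative $4$-approximation of $\omega^*(G)$.

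The main obstacle, I expect, is Step 2: showing that the version of the operator space Grothendieck inequality needed here really does apply to the specific norm produced by Theorem \ref{main- connection} on $M_N \otimes M_N$, and that the constants multiply out to exactly $4$ rather than something worse. One has to verify that the bilinear form associated with $T_G$ is jointly completely bounded (or otherwise falls into the hypothesis of the applicable version of the inequality) and that the auxiliary Hilbert spaces arising in the factorisation can be taken of polynomial dimension, so that the SDP in Step 3 remains of polynomial size. A secondary technical point is to translate the factorisation norm into an explicit SDP whose variables and constraints are written directly in terms of the game data, rather than through an abstract dual description.
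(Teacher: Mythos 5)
Your Steps 1 and 2 are essentially the paper's argument: Theorem \ref{main- connection} gives $\omega^*(G)=\|M_{AB}\|^2_{S_1^A\otimes_{min}S_1^B}$, and the operator space Grothendieck inequality (Theorem \ref{Grothendieck II}) sandwiches the minimal norm between $\frac{1}{2}\|M_{AB}\|_\mu$ and $\|M_{AB}\|_\mu$, where $\|\cdot\|_\mu$ is the symmetrized Haagerup norm. Two corrections there. First, the game tensor lives in $S_1(\He_A)\otimes S_1(\He_B)$, not in $M_N\otimes M_N$; the $B(\He)$ side is where the strategies (unitaries) live, and keeping the duality straight matters when you invoke the Grothendieck inequality, whose hypothesis is precisely that $\widehat M:B(\He_A)\to S_1(\He_B)$ is completely bounded --- which is automatic here, so your worry on that point is unfounded. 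Second, the factor $4$ is not the product of two factors of $2$ coming from two inequalities: one direction of the sandwich is free (the minimal norm is dominated by both the $h$- and $h^t$-norms, hence by $\mu$), only the Grothendieck direction carries the constant $2$, and the $4$ arises because $\omega^*$ is the \emph{square} of the tensor norm.

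The genuine gap is Step 3. You assert that the factorisation norm ``is a standard SDP'' whose feasible set consists of ``pairs of positive operators with bounded trace that dominate certain blocks of $T_G$,'' but this is neither an accurate description of $\|\cdot\|_\mu$ (its dual formulation involves four states $f_1,f_2,g_1,g_2$ dominating the bilinear form as in (\ref{u1Grothendieck}), and $\|\cdot\|_\mu$ itself is a supremum over the intersection of the $h$- and $h^t$-unit balls of $B(\He_A)\otimes B(\He_B)$, equivalently an infimum over decompositions $M=u+v$) nor is it justified. The paper does not exhibit a direct SDP for $\|\cdot\|_\mu$; it computes it efficiently by a different chain: the Yudin--Nemirovskii equivalence of optimisation and membership for convex bodies reduces the problem to efficiently deciding membership in the $h$-unit ball, i.e.\ to computing $\|\cdot\|_{B(\He)\otimes_h B(\He)}$; the Haagerup norm on $B(\He)\otimes_h B(\He)$ is isometrically the completely bounded norm of the map $A\mapsto\sum_i x_iAy_i$ (Theorem \ref{HaagerupMnCB}); and the cb norm of a map on $B(\He)$ is computable by a polynomial-size SDP by Watrous's theorem (Theorem \ref{SDP Watrous}). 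This route also disposes of your concern about the dimension of the auxiliary Hilbert spaces, since the cb norm is attained at dimension $\dim\He$ by Smith's theorem. Without either this chain or an explicit SDP for $\|\cdot\|_\mu$ (which Regev and Vidick later supplied via Pisier's characterisation of the $\mu$-norm), your Step 3 is an assertion rather than a proof.
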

Notice that, in contrast to the main result in \cite{KKMTV} (which proves the NP-hardness result of approximating $\omega^*(G)$ up to an inverse polynomial for general quantum games), Theorem \ref{Main I} shows how to approximate the value $\omega^*(G)$ for rank-one quantum games up to a multiplicative constant. As we will show, the approximability result on $\omega^*(G)$ is based on a deep theorem in operator space theory which is a \emph{non-commutative} version of Grothendieck's theorem. As far as we know, Theorem \ref{Main I} is the first result giving a positive result in this direction. Since the proof in \cite{KKMTV} does not apply to rank-one quantum games, we do not know if $\omega^*(G)$ can be efficiently computed or even approximated to polynomial precision. With not much extra effort, we will reprove with operator space ideas the {\it known} result that $\omega_{qow}$ can be efficiently computed.
\subsubsection{Parallel repetition of the game}

One of the most important problems in the study of classical games is how to amplify the gap: what procedure can decrease the value of games with value less than $1$ but without altering the value of those games that initially had value 1?
 This can be easily done if one allows to increase the number of rounds and/or the number of players (just repeating the game sequentially and/or repeating the game in parallel with independent pairs of players). However, the problem becomes much more difficult if we want to decrease the value of the game while using the same number of rounds and players. The natural way to do this is to repeat the game many times in parallel. That is, in the setting of classical games the referee chooses $n$ pairs of questions independently and sends to each player the corresponding $n$-tuple of questions. Then, each player responds with a $n$-tuple of answers, which are accepted if each of the $n$ answer pairs would have been accepted in the original game. If we denote by $G^n$ the game played $n$ times in parallel, it is trivial to see that
\begin{align*}%\label{perfect parallel}
\omega(G^n)\geq \omega(G)^n.
\end{align*}Somewhat surprisingly, the previous inequality is in general a strict inequality (see \cite{Feige}). The problem of parallel repetition is then to find good upper bounds for the value $\omega(G^n)$. A long series of works on this problem culminated with the work of Raz (\cite{Raz}), where he proved the \emph{parallel repetition theorem}. That is, the value of a game repeated in parallel decreases exponentially with the number of repetitions $n$ (although not exactly at rate $\omega(G)^n$). We say that certain games verify a \emph{perfect parallel repetition theorem} if $\omega(G^n)= \omega(G)^n$ for every $n$. Note that this problem can also be stated exactly in the same way for the entangled value of a classical game $\omega^*(G)$. However, in this context the situation is not so well understood. In \cite{CSUU} the authors showed a perfect parallel repetition for the entangled value of XOR games. After that, in \cite{KRT} the authors proved a parallel repetition theorem
 for unique games. Regarding the general situation, the best known result was given in the very recent work \cite{KeVi}, where the authors showed that the value $\omega^*(G)$ can be indeed reduced through parallel repetition, provided it was not initially $1$. The best rate of decrease for the value $\omega^*(G)$ obtained by repeating the game is still an open problem.

In this work we will study the parallel repetition of a rank-one quantum game. Given a rank-one quantum game $G$, one can analogously define a parallel repetition of this game $G^n$ just by considering the tensor product of both the preparation state and the rank-one projection which defines the referee's test (in particular, a parallel repetition of a rank-one quantum game is again a rank-one quantum game). We will study here whether there exists a perfect parallel repetition for the value $\omega^*(G)$ on rank-one quantum games. We will show
\begin{theorem}\label{Main II}
The entangled value does not verify a perfect parallel repetition theorem on rank-one quantum games. Specifically, for every natural number $n$ there exists a rank-one quantum game $G$ of local dimension $n$ for which $$\frac{\omega^*(G^2)}{\omega^*(G)^2} \succeq n^2,$$where $\succeq$ denotes inequality up to a universal constant independent of $n$.
\end{theorem}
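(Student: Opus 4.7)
The plan is to exploit the operator-space characterization of $\omega^*$ given by Theorem~\ref{main- connection}. Under that identification a rank-one quantum game $G$ is encoded by a tensor $T_G \in M_a \otimes M_b$ and $\omega^*(G) = \N{T_G}$, where $\N{\cdot}$ is the operator-space tensor norm provided by that theorem. A direct check from the definition of parallel repetition of a rank-one game (tensor the preparation state, tensor the rank-one projection) shows that the tensor associated to $G^2$ is $T_G \otimes T_G$ sitting inside $(M_a \otimes M_a) \otimes (M_b \otimes M_b)$. Theorem~\ref{Main II} therefore reduces to producing, for each $n$, a tensor $T_n$ of this type with
\[
\frac{\N{T_n\otimes T_n}}{\N{T_n}^{2}} \succeq n^{2}.
\]

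The candidate I would take is a tensor built out of the canonical ``identity'' element of a space like $R_n \otimes C_n$ (equivalently, a normalized maximally entangled vector viewed as a linear functional on $M_n$), lifted to an element of $M_n \otimes M_n$ via Theorem~\ref{main- connection}. The source of the non-multiplicativity is the well-known fact that the min tensor norm on operator spaces is not in general multiplicative, and that this failure can be amplified to the dimension-dependent rate $n^2$ by making the row and column operator-space structures on the two factors mismatch appropriately. Dualized, this will give rank-one quantum games in which a collective strategy jointly processing the two copies outperforms any product of single-copy strategies by a factor of order $n^2$.

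Concretely, the argument splits into two parts. First, an upper bound $\omega^*(G_n) \kl C/n$ for some universal $C$: here one must rule out every entangled strategy, and I would do so by computing the single-copy operator-space norm of $T_n$, invoking either the operator-space Grothendieck inequality (Theorems~\ref{Grothendieck} and~\ref{Grothendieck II}) or a direct CB-factorisation argument on $R_n$ and $C_n$. Second, a lower bound $\omega^*(G_n^2) \gl c$ for some universal $c$: this part is constructive, and I would write down an explicit strategy for Alice and Bob using two copies of the maximally entangled state together with a joint measurement that correlates the two copies (for instance, a collective swap-type measurement), and then verify that plugging this strategy into $T_n \otimes T_n$ yields a value of order $1$. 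Combined, these two bounds give $\omega^*(G_n^2)/\omega^*(G_n)^{2} \succeq n^{2}$, which is the content of Theorem~\ref{Main II}.

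The main obstacle is the upper bound in the first part: one has to prove that no single-copy entangled strategy, with unbounded ancillary Hilbert spaces, can beat the target bound. This is precisely the content of an operator-space tensor-norm estimate, and the delicate point is to choose the concrete $T_n$ whose single-copy norm admits a clean operator-space computation while simultaneously admitting a joint strategy on two copies with much larger norm.
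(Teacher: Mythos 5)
Your high-level plan --- reduce the statement via Theorem \ref{main- connection} to the non-multiplicativity of $\|\cdot\|_{S_1^n\otimes_{min}S_1^n}$, prove an upper bound on $\omega^*(G)$ by an operator-space norm estimate and a lower bound on $\omega^*(G^2)$ by an explicit strategy --- is exactly the paper's strategy. But the concrete candidate you name does not work, and the idea you are missing is precisely the one that makes the theorem true. The ``canonical identity element of $R_n\otimes C_n$'' is (up to normalization) the game $G_C=\frac{1}{n}\sum_i |i\rangle\langle 1|\otimes |1\rangle\langle i|$ of Section \ref{Section: three models}, and that game satisfies a \emph{perfect} parallel repetition theorem: since $R_n\otimes R_n=R_{n^2}$ and $C_n\otimes C_n=C_{n^2}$ completely isometrically, one computes $\|G_C^{\otimes 2}\|_{S_1^{n^2}\otimes_{min}S_1^{n^2}}=\frac{1}{n^2}\|id:C_{n^2}\to C_{n^2}\|_{cb}=\frac{1}{n^2}=\|G_C\|_{min}^2$, so $\omega^*(G_C^2)=\omega^*(G_C)^2$; the same holds for $G_R$. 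The ``mismatch of row and column structures'' you allude to must be built into a \emph{single} game: the paper takes $G_{C+R}=\frac{1}{2}(G_C+G_R)$, which has one component in $R_n\otimes C_n$ and one in $C_n\otimes R_n$. Its square then contains the cross term $G_C\otimes G_R$, which after regrouping Alice's and Bob's registers is the maximally entangled vector sitting in $(R_n\otimes C_n)\otimes(C_n\otimes R_n)$, whose ambient structure inside $S_1^{n^2}\otimes_{min}S_1^{n^2}$ is $S_1^n\otimes_{min}S_1^n$ (Remark \ref{tensor product structures II}); its norm there is $\frac{1}{n^2}\|id:M_n\to S_1^n\|=\frac{1}{n}$, a factor $n$ larger than $\|G_C\|_{min}\|G_R\|_{min}=\frac{1}{n^2}$. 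No single element of $R_n\otimes C_n$ can produce this effect, so without the summation step your construction collapses to the perfectly multiplicative case.

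Two further points where your outline diverges from what the construction actually delivers. First, the numerology $\omega^*(G)\le C/n$ together with $\omega^*(G^2)\ge c$ is not what one gets: for $G_{C+R}$ the paper proves $\omega^*(G_{C+R})=\frac{1}{n^2}$ and $\omega^*(G_{C+R}^2)=\frac{1}{4n^2}\big(1+\frac{1}{n}\big)^2$, and you exhibit no game whose two-copy value stays bounded below by a universal constant; your target bounds are therefore unsupported as stated, even though the ratio they would imply is the right one. Second, the optimal two-copy strategy needs \emph{no} shared entanglement and no joint measurement: Alice and Bob each simply swap their own two registers, and the matching of upper and lower bounds comes precisely from the fact that the cross terms attain their minimal tensor norm already at the Banach-space (rather than completely bounded) level. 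Your instinct that a swap across the two copies is the source of the advantage is correct, but the ``two copies of the maximally entangled state'' are not needed. The single-copy upper bound, which you correctly identify as the delicate part, is handled in the paper by a Cauchy--Schwarz argument on the unitaries $U_{AA'}$, $V_{BB'}$ rather than by Grothendieck's inequality.
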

Having in mind that we have a perfect parallel repetition theorem for the entangled value of XOR classical games, the second result is somehow surprising. It says that even in the most basic scenario of quantum games, the rank-one quantum games, we do not have a perfect parallel repetition theorem. Motivated by this fact, we will present a quite large family of rank-one quantum games for which perfect parallel repetition theorem is not far from being true. However, we will show that even for those games perfect parallel repetition fails. Again, as before, we also use our techniques to recover the known statement that $\omega_{qow}$ satisfies a perfect parallel repetition theorem.

To conclude, we mention that the techniques used in this paper also apply to the general case of quantum games. In particular, one can describe general quantum games via certain tensor norms in the framework of operator spaces. However, since the results for the general case require even more of the technology of tensor products, we defer the treatment of general quantum games to a forthcoming paper.

\subsubsection{New examples of quantum games}

We introduce families of games that clearly demonstrate the relative power of various quantum resources and which prove that the three values $\omega^*$, $\omega_{qow}$, and $V$ describe genuinely different scenarios. The restriction to rank-one projective measurements makes these games easier to study but not at a too high a price;  rank-one quantum games remain a wide enough class of games to be useful in studying the power of these resources. The games $G_C$ show that quantum one-way communication provide a greater advantage than shared entanglement; we construct games $G_C$ with an arbitrarily large separation between the values $\omega^*(G_C)$ and $\omega_{qow}(G_C)$. Similarly, the games $G_R$ show that quantum one-way communication is less powerful than the two-way quantum communication that allows one to attain the maximal value $V(G_R)$. We prove that the games $G_R$ and $G_C$ yield a separation between these values that is optimal in the dimensions of the player's Hilbert s
 paces.

We introduce two more families of games which we call Schur games and $OH_n$-games. Schur games satisfy a seemingly very restrictive condition and yet remain rich enough to demonstrate interesting features of entanglement. The coherent state exchange games of \cite{LTW} discussed above are Schur games. Despite their apparent simplicity, Schur games indeed form a nontrivial class of games; we provide examples of Schur games whose maximal and entangled values differ by an arbitrarily large multiplicative constant. We also show that, in contrast to the case of general rank-one quantum games, quantum one-way communication provides only a bounded advantage over the use of entanglement  for Schur games. We then return to the study of the parallel repetition of rank-one quantum games; the $OH_n$-games provide examples of games whose entangled values do not satisfy perfect parallel repetition theorems but which still satisfy strict bounds on the growth of $\omega^*(G^k)$.

\subsection{Structure of the paper}

The paper is organized as follows. Section \ref{sec operator spaces} is devoted to introducing the basic tools that we will need about operator spaces. In order to make this work more accessible to its intended audience, we reprove many previously known estimates and explain how the versions of results that we will apply to quantum games can be obtained from those already appearing in the operator space literature. In the first part of Section \ref{sec connections} we will explain in detail quantum games and the values $\omega^*$, $\omega_{qow}$; we will show the connections between these values and certain tensor norms in the category of operator spaces. In Section \ref{Section: three models}, we discuss some particularly interesting examples of rank-one quantum games. We use these games to show the existence of arbitrarily large gaps between $\omega^*$, $\omega_{qow}$, and the maximal value $V$, demonstrating that these three values do indeed describe different scenarios. Sections
 \ref{Sec approx} and \ref{sec parallel repetition} will be devoted, respectively, to proving Theorem \ref{Main I} and Theorem \ref{Main II}.
In Section \ref{Section:SchurOHn} we introduce two new families of quantum games that can be effectively studied using operator space techniques.

%%%%%%%%%%%%%%%%%}

\section{Operator Spaces}\label{sec operator spaces}

In this section we introduce some basic concepts from operator space theory. We focus only on those aspects which are useful for this work and we direct the interested reader to the standard references \cite{EffrosRuan}, \cite{Pisier}. With this in mind, we include a few simple proofs of certain estimates, well-known to specialists in operator spaces, which are at the heart of many results in this work. In the following, given Hilbert spaces $\mathcal H$ and $\mathcal K$, we will denote by $B(\mathcal H, \mathcal K)$ the space of bounded operators from  $\mathcal H$ to $\mathcal K$ endowed with the standard operator norm. In this work, $\C^n$ is always endowed with its Hilbertian norm $\|\sum_{i=1}^n\alpha_i|i\rangle\|=\big(\sum_{i=1}^n|\alpha_i|^2\big)^\frac{1}{2}$. When $\mathcal H=\C^n$ and $\mathcal K=\C^m$ we will denote $M_{n,m}=B(\C^n, \C^m)$ and in the case where $n=m$ we will just write $M_n$.
\subsection{Operator spaces and completely bounded maps}
An \emph{operator space} $V$ is a complex Banach space together with a sequence of \emph{matrix norms} $\N{\cdot}_n$ on $M_n(V)$ satisfying the following conditions:
\begin{itemize}\label{OS properties}
\item $\N{v \oplus w}_{m+n} = \max\{\N{v}_m,\N{w}_n \}$ and
\item $\N{\alpha v \beta}_n \leq \N{\alpha} \, \N{v}_m \, \N{\beta}$
\end{itemize}
for all $v \in M_m(V)$, $w \in M_n(V)$, $\alpha \in M_{n,m}$, and $\beta \in M_{m,n}$.

In order to understand this theory, one also needs to study the morphisms that preserve the operator space structure. In contrast to Banach space theory, where one needs to study the bounded maps between Banach spaces, in the theory of operator spaces we need to study the \emph{completely bounded maps}. Given operator spaces $V$ and $W$ and a linear map $T:V \to W$, let $T_n: M_n(V) \to M_n(W)$ denote the linear map defined by
\[
T_n(v)=(id_n \otimes T)(v)=(T(v_{ij}))_{i,j}.
\]
A map is said to be \emph{completely bounded} if
\[
\N{T}_{cb} = \sup_n \N{T_n}< \infty,
\]
and this quantity is then called the \emph{completely bounded} norm of $T$. It is not difficult to see that $\|T^*\|_{cb}=\|T\|_{cb}$ for every $T:V \to W$, where $T^*$ denotes the adjoint map of $T$. We will say that $T$ is \emph{completely contractive} if $\|T\|_{cb}\leq 1$. Moreover, $T$ is said to be a \emph{complete isomorphism} (resp. \emph{complete isometry}) if each map $T_n$ is an isomorphism (resp. isometry). In particular, given two operator spaces $V$ and $W$, we define the \emph{completely bounded Banach-Mazur distance} between them as
\begin{align}\label{B-M distance}
d_{cb}(V,W)=\inf\Big\{\|T\|_{cb}\|T^{-1}\|_{cb}: T:V\rightarrow W \text{   }\text{   is an isomorphism}\Big\}.
\end{align}
A simple, but important, example of an operator space is $M_N$ with its operator space structure given by the usual sequence of matrix norms $\N{\cdot}_n$ defined by the identification $M_n(M_N)=M_{nN}$. Given a linear map $T:M_N\rightarrow M_N$, we can then compute its completely bounded norm
$$\|T\|_{cb}=\sup_n \|(id_n \otimes T):M_{nN}\rightarrow M_{nN}\|=\|(id_N \otimes T):M_{N^2}\rightarrow M_{N^2}\|,$$
where the last equality is a well known result proved by Smith (\cite{Smith}). In fact, it can be seen that given a linear map $T:M_N\rightarrow M_N$, $$\|T\|_{cb}=\|T^*\|_\diamondsuit,$$where $\|\cdot\|_\diamondsuit$ denotes the \emph{diamond norm} introduced in \cite{Kitev} and already used in many different contexts in complexity theory and information theory.
\subsection{The Column and Row structures}

In general, a Banach space can be endowed with different operator space structures that are not completely isometrically isomorphic to each other. We now describe two operator space structures on $\C^N$ which will play a central role in this paper. The \emph{Column} and \emph{Row} Hilbertian operator space structures are defined, respectively, by the sequences of matrix norms
\begin{align}\label{CR structures}
\big\|\sum_{i=1}^NA_i\otimes |i\rangle\big\|_{M_n(\C^N)}=\big\|\sum_{i=1}^NA_i^*A_i\big\|^\frac{1}{2}_{M_n}, \text{    } \text{    }
\big\|\sum_{i=1}^NA_i\otimes |i\rangle\big\|_{M_n(\C^N)}=\big\|\sum_{i=1}^NA_iA_i^*\big\|^\frac{1}{2}_{M_n}.
\end{align}
We will denote by $C_N$ (resp. $R_N$) the space $\C^N$ endowed with its Column (resp. Row) operator space structure. It is easy to deduce from the previous definition that these structures do not depend on the basis $(|i\rangle)_{i=1}^N$ chosen. The computation of the completely bounded norm of a linear map $T:C_N\rightarrow R_N$ is particularly easy, in contrast to the general case. Indeed, the following well-known lemma tells us how to compute such a norm.
\begin{lemma}
Let $T:\C^N\rightarrow \C^N$ be a linear map. Then,
\begin{align}
\|T:C_N\rightarrow R_N\|_{cb}=\big(\sum_{i=1}^N|\lambda_i|^2\big)^\frac{1}{2}=\|T:R_N\rightarrow C_N\|_{cb},
\end{align}
where $T=\sum_{i=1}^N\lambda_i|e_i\rangle\langle f_i|$ is the singular value decomposition of $T$.
\end{lemma}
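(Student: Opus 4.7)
The plan is to compute the two cb-norms directly from the matrix-space realisations $M_n(C_N)\cong M_{nN,n}$ and $M_n(R_N)\cong M_{n,nN}$; an element $\sum_i A_i\otimes |i\rangle$ becomes the block column stack of the $A_i$'s or the block row stack of the $A_i$'s respectively, and the norms in (\ref{CR structures}) reduce to ordinary operator norms of these rectangular matrices. Since the $C_N$ and $R_N$ structures are basis-independent, I may work with $\{|f_i\rangle\}$ on the domain and $\{|e_i\rangle\}$ on the codomain, so that $T|f_i\rangle=\lambda_i|e_i\rangle$ and the action of $\id_n\otimes T$ on a generic $v=\sum_i A_i\otimes|f_i\rangle$ is transparent.

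For the lower bound I would take $n=N$ and $A_i=|1\rangle\langle f_i|\in M_N$. A direct computation gives $\sum_i A_i^*A_i=\sum_i|f_i\rangle\langle f_i|=I_N$, so $\|v\|_{M_N(C_N)}=1$, while $\sum_i|\lambda_i|^2 A_iA_i^*=\bigl(\sum_i|\lambda_i|^2\bigr)|1\rangle\langle 1|$, which yields $\|(\id_N\otimes T)(v)\|_{M_N(R_N)}=\bigl(\sum_i|\lambda_i|^2\bigr)^{1/2}$. Hence $\|T:C_N\to R_N\|_{cb}\ge\bigl(\sum_i|\lambda_i|^2\bigr)^{1/2}$, and the same choice with the roles of $C_N$ and $R_N$ exchanged gives the analogous bound for $R_N\to C_N$.

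The matching upper bound is the only mildly delicate step. Given $v=\sum_i A_i\otimes|f_i\rangle\in M_n(C_N)$, the image $(\id_n\otimes T)(v)\in M_n(R_N)$ corresponds to the block row $[\lambda_1 A_1,\ldots,\lambda_N A_N]\in M_{n,nN}$, and the key observation is the factorisation
\begin{equation*}
[\lambda_1 A_1,\ldots,\lambda_N A_N]=(\vec{\lambda}^{\,T}\otimes I_n)\cdot\mathrm{diag}(A_1,\ldots,A_N),
\end{equation*}
with $\vec{\lambda}=(\lambda_i)_{i=1}^N\in\mathbb{C}^N$ and $\mathrm{diag}(A_1,\ldots,A_N)\in M_{nN}$ the block-diagonal matrix. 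Taking operator norms, $\|\vec{\lambda}^{\,T}\otimes I_n\|=\|\vec{\lambda}\|_2=\bigl(\sum_i|\lambda_i|^2\bigr)^{1/2}$ and $\|\mathrm{diag}(A_1,\ldots,A_N)\|=\max_i\|A_i\|$; and since each $A_i^*A_i\le\sum_j A_j^*A_j$ as positive operators, $\max_i\|A_i\|\le\|v\|_{M_n(C_N)}$. Hence $\|(\id_n\otimes T)(v)\|_{M_n(R_N)}\le\bigl(\sum_i|\lambda_i|^2\bigr)^{1/2}\|v\|_{M_n(C_N)}$ for every $n$, which is the desired upper bound. The identity for $T:R_N\to C_N$ follows by the mirror factorisation $(\id_n\otimes T)(v)=\mathrm{diag}(A_1,\ldots,A_N)\cdot(\vec{\lambda}\otimes I_n)$ together with the bound $\max_i\|A_i\|\le\|v\|_{M_n(R_N)}$ coming from $A_iA_i^*\le\sum_j A_jA_j^*$. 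I expect the only non-obvious ingredient to be spotting this factorisation, which separates the $\ell^2$-norm of the singular values from a block-diagonal factor whose norm is exactly controlled by the column/row structures in (\ref{CR structures}).
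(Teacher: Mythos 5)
Your proof is correct and follows essentially the same route as the paper's: the lower bound uses the identical test element $A_i=|1\rangle\langle f_i|$, and your upper bound via the factorisation $(\vec{\lambda}^{\,T}\otimes I_n)\cdot\mathrm{diag}(A_1,\dots,A_N)$ is just a repackaging of the paper's direct estimate $\big\|\sum_i|\lambda_i|^2A_iA_i^*\big\|\leq\sum_i|\lambda_i|^2\max_i\|A_iA_i^*\|$, both resting on the observation that $\max_i\|A_i\|\leq\|v\|_{M_n(C_N)}$.
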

As the above lemma is key to many results in this paper, we include the proof.
\begin{proof}
We will only prove the first equality since the second one can be proved in the same way. By the comments above, we can assume that $T=\sum_{i=1}^N\lambda_i|i\rangle\langle i|$.

Let us denote $A_i=|1\rangle\langle i|\in M_N$ for every $i=1,\cdots, N$. Then, according to Equation (\ref{CR structures}) we trivially have $\big\|\sum_{i=1}^NA_i\otimes |i\rangle\big\|_{M_n(C_N)}\leq 1$. Therefore,
\begin{align*}
\Big\|(id_N \otimes T)\Big(\sum_{i=1}^NA_i\otimes |i\rangle\Big)\Big\|_{M_N(R_N)}=\big\|\sum_{i=1}^N\lambda_iA_i\otimes |i\rangle\big\|_{M_N(R_N)}\leq \|T\|_{cb}.
\end{align*}However, note that $\big\|\sum_{i=1}^N\lambda_iA_i\otimes |i\rangle\big\|_{M_N(R_N)}=(\sum_{i=1}^N|\lambda_i|^2\big)^\frac{1}{2}$. Therefore, $\|T\|_{cb}\geq \big(\sum_{i=1}^N|\lambda_i|^2\big)^\frac{1}{2}$.

To see the converse inequality, let us consider a sequence of matrices $(A_i)_{i=1}^N\subset M_n$ such that $\big\|\sum_{i=1}^NA_i\otimes |i\rangle\big\|_{M_n(C_N)}=\big\|\sum_{i=1}^NA_i^*A_i\big\|^\frac{1}{2}_{M_n}\leq 1$. Then,
\begin{align*}
\Big\|(id_N \otimes T)\Big(\sum_{i=1}^NA_i\otimes |i\rangle\Big)\Big\|_{M_n(R_N)}=\Big\|\sum_{i=1}^N\lambda_iA_i\otimes |i\rangle\Big\|_{M_n(R_N)}\\=\Big\|\sum_{i=1}^N|\lambda_i|^2A_iA_i^*\Big\|^\frac{1}{2}_{M_n}\leq \big(\sum_{i=1}^N|\lambda_i|^2\big)^\frac{1}{2},
\end{align*}where we have used that $\|A_iA_i^*\|\leq 1$ for every $i$.
\end{proof}
A direct consequence of the previous lemma is that
\begin{align*}
\|id:R_N\rightarrow C_N\|_{cb}=\|id:C_N\rightarrow R_N\|_{cb}=\sqrt{N}.
\end{align*}
Moreover, one can even deduce that
\begin{align}\label{RC Banach-Mazur}
d_{cb}(C_N, R_N)=N
\end{align}Indeed, the upper bound follows trivially by considering $T=id$ in (\ref{B-M distance}). On the other hand, whenever we have an isomorphism $u:C_N\rightarrow R_N$ we will have
\begin{align*}
N=tr(id)=tr(u^{-1}\circ u)\leq \big(\sum_{i=1}^N|\beta_i|^2\big)^\frac{1}{2}\big(\sum_{i=1}^N|\lambda_i|^2\big)^\frac{1}{2}\\=\|u^{-1}:R_N\rightarrow C_N\|_{cb}\|u:C_N\rightarrow R_N\|_{cb},
\end{align*}where we have used Cauchy-Schwartz inequality and the singular value decompositions $u=\sum_{i=1}^N\lambda_i|e_i\rangle\langle f_i|$,  $u^{-1}=\sum_{i=1}^N\beta_i|e_i\rangle\langle f_i|$.

We encourage the reader to prove the following well known estimates (which can be found in \cite{Pisier})
\begin{align}\label{distance R-C-M_N}
\sqrt{N}=&\|id:M_N\rightarrow R_{N^2}\|_{cb}=\|id:R_{N^2}\rightarrow M_N\|_{cb}\\\nonumber=&
\|id:M_N\rightarrow C_{N^2}\|_{cb}=\|id:C_{N^2}\rightarrow M_N\|_{cb}.
\end{align}
Given a linear map between two operator spaces $T:V\rightarrow W$ we will define
\begin{align*}%\label{Row-factorizing norm}
\Gamma_R(T)=\inf\big\{\|a:V\rightarrow R_n\|_{cb}\|b:R_n\rightarrow W\|_{cb}\big\},
\end{align*}where the infimum runs over all $n$ and all possible factorizations $T=b\circ a$, where $a$ and $b$ are linear maps. It is trivial to check that $\|T\|_{cb}\leq \Gamma_R(T)$ for every linear map $T$ and that
\begin{align}\label{R fact R}
\Gamma_R(T:V\rightarrow W)=\|T\|_{cb} \text{     } \text{     }\text{     if $V$ or $W$ is equal to $R_N$ for some $N$.}
\end{align}On the other hand,
\begin{align}\label{R fact CC}
\Gamma_R(id:C_N\rightarrow C_N)=N \text{     } \text{    and     }\text{    } \Gamma_C(id:R_N\rightarrow R_N)=N
\end{align}
is a trivial consequence of (\ref{RC Banach-Mazur}).

Note that the norms defined in (\ref{CR structures}) can be understood in the following way. Let us identify the space $\C^N$ with the linear space described by the first column (resp. row) of the matrices in $M_N$. That is,
\begin{align*}
C_N = \textrm{span}\ \{|i\rangle\langle 1| \mid 1 \leq i \leq N\} \subseteq M_N, \text{    } \text{    }R_N = \textrm{span}\ \{|1\rangle\langle i| \mid 1 \leq i \leq N\} \subseteq M_N.
\end{align*}Then, one can easily check that the norms defined in (\ref{CR structures}) are nothing else than the corresponding inherit norms on $M_n(M_N)$. This leads us to an equivalent definition of operator spaces, as those subspaces of $B(\mathcal H)$. Note that, given a subspace $V\subset B(\mathcal H)$ we have a family of matrix norms, by identifying $M_n(V)\subset M_n(B(\mathcal H))=B(\C^n\otimes \mathcal H)$, which can be shown to be an \emph{acceptable} sequence of matrix norms. The converse statement is known as Ruan's Theorem and can be found in \cite[Theorem 2.3.5]{EffrosRuan}.
\begin{remark}\label{tensor product structures I}
It is very easy to see that the operator space structure on $R_N\otimes R_N$ (resp. $C_N\otimes C_N$) as a subspaces of $M_{N^2}=M_N\otimes M_N$ is $R_{N^2}$ (resp. $C_{N^2}$). On the other hand, one can also see that the operator space structure on $R_N\otimes C_N$ and $C_N\otimes R_N$ as a subspaces of $M_{N^2}$ is $M_N$ in both cases. Moreover, in all these cases there exist completely bounded norm one projections onto the included subspaces.
\end{remark}
As in Banach space theory, we can also consider the notion of duality. Given an operator space $V$, we define the \emph{dual operator space} $V^*$ by means of the matrix norms
\begin{align*}
M_n(V^*)=CB(V,M_n), \text{    } n\geq 1,
\end{align*}where the identification associates to any element $x=\sum_ia_i\otimes v_i^*\in M_n(V^*)=M_n\otimes V$ the linear map defined by $T_x(v)=\sum_i v_i^*(v)a_i$ for every $v\in V$.
With this definition it is a simple exercise to show that the following identifications are complete isometries:
\begin{align*}
C_N^* =R_N, \qquad R_N^*=C_N.
\end{align*}
If we denote by $S_1^N$ the space $M_N$ with the trace norm, we can then define a natural operator space structure on it via the duality relation $S_1^N=M_N^*$. Note that this operator space structure is \emph{not} given by the linear map identifying matrices in $S_1^N$ with matrices in $M_N$ as this map does not induce the correct norm on $S_1^N$. By \cite[Theorem 3.2.3]{EffrosRuan}, the \emph{scalar pairing}
\begin{align}\label{Scalar Pairing}
\langle b, c\rangle =\sum_{s,t} b_{s,t}c_{s,t}=\mathrm{tr}\, (bc^{tr})
\end{align}
yields completely isometric isomorphisms $M_N^* = S_1^N$ and $(S_1^N)^* = M_N$.

Although we will not use it explicitly in this work, one can prove (\cite[Theorem 5.3]{Pisier2}) that the sequence of matrix norms defining the previous operator space structure on $S_1^N$ is given by
\begin{align*}
\Big\|\sum_{i,j=1}^Nx_{i,j}\otimes |i\rangle\langle j|\Big\|_{M_n(S_1^N)}=\sup\Big\{\Big\|\sum_{i,j=1}^Nax_{i,j}b\otimes |i\rangle\langle j|\Big\|_{S_1^{nN}}\Big\},
\end{align*}where the supremum runs over all matrices $a,b\in M_n$ with Hilbert Schmidt norm lower than or equal to one. However, it can be seen that this norm coincides with the completely bounded norm of the map $T:M_N\rightarrow M_n$ defined by
\begin{align*}
T(A)=\sum_{i,j=1}^N\langle i|A|j\rangle x_{i,j}.
\end{align*}Therefore, in this case one can ``easily'' define the family of matrix norms on $M_n(S_1^N)$. Note, however, that we are not giving the explicit form of the embedding of $S_1^N$ in $B(\mathcal H)$ as we did for the Column and Row structures on $\C^n$. In fact, finding ``good embedding'' is in general a very tough problem.

The previous duality relation together with (\ref{distance R-C-M_N}) guarantees that
\begin{align}\label{distance R-C-S_1^N}
\sqrt{N}=&\|id:S_1^N\rightarrow R_{N^2}\|_{cb}=\|id:R_{N^2}\rightarrow S_1^N\|_{cb}\\\nonumber=&
\|id:S_1^N\rightarrow C_{N^2}\|_{cb}=\|id:C_{N^2}\rightarrow S_1^N\|_{cb}.
\end{align}
If $P : M_N \to R_N$ is the projection onto the first row, then $P^*:C_N \to S_1^N$ provides a completely isometrically embedding of $C_N$ into $S_1^N$. Similarly, $R_N$ can be identified with the first column inside $S_1^N$ in a completely isometrically way. In fact, the corresponding projections are complete contractions. We can collect this information by saying that the following inclusions are completely complemented and completely isometric isomorphisms.
\begin{align}\label{CR in S_1}
C_N = \textrm{span}\ \{|1\rangle\langle i| \mid 1 \leq i \leq N\} \subseteq S_1^N, \text{     }\text{     }R_N = \textrm{span}\ \{|i\rangle\langle 1| \mid 1 \leq i \leq N\} \subseteq S_1^N.
\end{align}
\begin{remark}\label{tensor product structures II}
Remark \ref{tensor product structures I} and the above duality relations allow us to state that the operator space structure on $R_N\otimes R_N$ (resp. $C_N\otimes C_N$) as a subspaces of $S_1^{N^2}$ is $R_{N^2}$ (resp. $C_{N^2}$). On the other hand, one can also see that the operator space structure on $R_N\otimes C_N$ and $C_N\otimes R_N$ as a subspaces of $S_1^{N^2}$ is $S_1^N$ in both cases. Moreover, in all these cases there exist completely bounded norm one projections onto the specified subspaces.
\end{remark}
\subsection{The minimal and the Haagerup tensor norms}
Tensor norms will be very important in our work. In particular, we will need to introduce the minimal and the Haagerup tensor norms. Given two operator spaces $V \hookrightarrow B(\mathcal H_V)$ and $W \hookrightarrow B(\mathcal H_W)$, their algebraic tensor product $V \otimes W$ can be seen as a subspace of $B(\mathcal H_V \otimes \mathcal H_W)$ and their \emph{minimal operator space tensor product} $V \otimes_{min} W$ is the closure of $V \otimes W$ in $B(\mathcal H_V \otimes \mathcal H_W)$. An equivalent formulation, more useful for us, can be stated by saying that if $u =\sum_{i=1}^l v_i \otimes w_i\in V\otimes W$ we have
\begin{align*}
\|u\|_{V\otimes_{min}W}=\sup \Big\{\Big\|
\sum_{i=1}^l T(v_i) \otimes S(w_i)\Big\|\Big\}_{B(\mathcal H_V \otimes \mathcal H_W)},
\end{align*}where the supremum runs over all completely contractions $T:V\rightarrow B(\mathcal H_V)$, $S:V\rightarrow B(\mathcal H_W)$.
%In order to understand why both definitions are the same, we first note that by taking $T$ and $S$ the corresponding embedding which define $V \hookrightarrow \mathcal B(H_V)$ and $W \hookrightarrow B(\mathcal K_W)$ we immediately obtain that the first definition is smaller than or equal to the second one. The other inequality follows from a classical result (see for instance \cite[Theorem 1.6]{Pisier}) which ensures that for every operator space defined by $k: X\hookrightarrow B(\mathcal H)$ and any linear map $u:X\rightarrow B(\mathcal K)$ there exists a linear map $\tilde{u}:B(\mathcal H)\rightarrow B(\mathcal K)$ with $\|\tilde{u}\|_{cb}\leq \|u\|_{cb}$ and such that $u=\tilde{u}\circ k$.
Moreover, it is easy to see that we can restrict to finite dimensional Hilbert spaces $\mathcal H_V$ and $\mathcal H_W$. It is also straightforward to see that this tensor norm is commutative and associative.

If $V$ and $W$ are finite dimensional, it is straightforward to check from the previous definition that we have the following isometric identification.
\begin{align}\label{min-operator norm}
V \otimes_{min} W = CB(V^*,W),
\end{align}where here the correspondence is defined by $\big(\sum_{i=1}^nv_i\otimes w_i\big)(v^*)=\sum_{i=1}^n\langle v_i,v^*\rangle w_i$.
%This corresponds to the Banach space identity $E \otimes_{\eps} F =B(E^*,F)$ for finite-dimensional Banach spaces.

A second fundamental tensor norm is the Haagerup tensor norm. Suppose that $V$ and $W$ are operator spaces. The \emph{Haagerup tensor product} norm of $u\in V\otimes W$ is defined as
\begin{align}\label{eq:Haagerup}
\|u\|_{V\otimes_h W}=\inf\Big\{\big\|\sum_{i=1}^k|i\rangle\otimes a_i\big\|_{R_k\otimes_{min} V}\big\|\sum_{i=1}^k|i\rangle\otimes b_i\big\|_{C_k\otimes_{min} W}\Big\},
\end{align}where the infimum runs over all possible ways to write $u$ as a finite sum $\sum_{i=1}^ka_i\otimes b_i\in V\otimes W$.
%This formula becomes particularly nice if $u\in B(\mathcal H_1)\otimes B(\mathcal H_2)$, where we will write
%\begin{align}\label{Haggerup projective}
%\|u\|_{V\otimes_h W}=\inf\Big\{\big\|\sum_{i=1}^ka_ia_i^*\big\|_{B(\mathcal H_1)}^\frac{1}{2}\big\|\big\|\sum_{i=1}^ka_ia_i^*\big\|_{B(\mathcal H_2)}^\frac{1}{2}\Big\},
%\end{align}where the infimum is taken as above.
It is easy to see that this tensor norm is associative but not commutative. Indeed, one can define the transpose version of the Haagerup norm $h^t$ as
\begin{align*}
\|u\|_{V\otimes_{h^t} W}=\inf\Big\{\big\|\sum_{i=1}^k|i\rangle\otimes a_i\big\|_{C_k\otimes_{min} V}\big\|\sum_{i=1}^k|i\rangle\otimes b_i\big\|_{R_k\otimes_{min} W}\Big\},
\end{align*}
where the infimum runs again over all possible ways to write $u$ as a finite sum $\sum_{i=1}^ka_i\otimes b_i\in V\otimes W$.
According to (\ref{min-operator norm}), $\big\|\sum_{i=1}^k|i\rangle\otimes a_i\big\|_{R_k\otimes_{min} V}$ and $\|\sum_{i=1}^k|i\rangle\otimes b_i\big\|_{C_k\otimes_{min} W}$ can be understood as the completely bounded norm of the maps $T:V^*\rightarrow R_k$, defined by $T(x)=\sum_{i=1}^k\langle a_i,x\rangle|i\rangle$, and $S:R_k\rightarrow W$, defined by $S(y)=\sum_{i=1}^k\langle i|y\rangle b_i$, respectively. Thus, we see that the analogous isometric identification to (\ref{min-operator norm}) for the Haagerup tensor norm is given by
\begin{align}\label{Row-Haagerup norm}
V \otimes_{h} W &= \Gamma_R(V^*,W),\\
V \otimes_{h^t} W &= \Gamma_C(V^*,W). \nonumber
\end{align}
Remarkably, the Haagerup tensor norm can be stated in the following equivalent form: Given $u= \sum_{k=1}^l v_k \otimes w_k \in V \otimes W$ we have
\begin{align*}
\N{u}_h =\sup \Big\{\Big\|\sum_{k=1}^l T(v_k)S(w_k)\Big\|_{B(\mathcal H)} \Big\},
\end{align*}
where the supremum is taken over all Hilbert spaces $\mathcal H$ and completely contractions $T:V \to B(\mathcal H)$, $S:W \to B(\mathcal H)$.
We refer to \cite[Theorem 5.1]{Pisier} for the non trivial proof of the equivalence between both definitions. This norm has some ``magic'' properties with no analogy in the Banach space category. In particular, it is self-dual (\cite[Corollary 5.8]{Pisier}): For every finite dimensional operator spaces $V$ and $W$ we have that
\begin{align}\label{duality haagerup}
(V\otimes_h W)^*=V^*\otimes_h W^*\text{       } \text{isometrically}.
\end{align}
It is easy to check from their definitions (see \cite[Section 2.1]{Pisier} and \cite[Chapter 5]{Pisier} for details) that both norms verify the following property:
\begin{align}\label{metric mapping property}
\big\|T\otimes S:V\otimes_\alpha W\hookrightarrow V'\otimes_\alpha W'\big\|\leq \|T\|_{cb}\|S\|_{cb}
\end{align}for $\alpha=min$ or $\alpha=h$ and for all linear maps $T:V\rightarrow V'$ and $S:W\rightarrow W'$. Furthermore, both norms can be seen to be injective:
\begin{align}\label{injectivity}
T\otimes S:V\otimes_\alpha W\hookrightarrow V'\otimes_\alpha W \text{    }\text{    }\text{is an isometry}
\end{align}for $\alpha=min$ or $\alpha=h$, whenever $T$ and $S$ are completely isometries.

Equation (\ref{metric mapping property}) and the estimates given in (\ref{distance R-C-M_N}), (\ref{distance R-C-S_1^N}) give us the following estimates\footnote{In fact, the proof of some of these estimates is implicit in some of the proofs provided in Section \ref{Section: three models}.}:
\begin{align}\label{optimality gaps model}
\sqrt{N}&=\big\|id\otimes id:S_1^N\otimes_{min} S_1^N\rightarrow C_{N^2}\otimes_{min} S_1^N\big\|=\big\|id\otimes id:C_{N^2}\otimes_{h} S_1^N\rightarrow S_1^N\otimes_{h} S_1^N\big\|\\&
=\nonumber\big\|id\otimes id:S_1^N\otimes_{h} S_1^N\rightarrow R_{N^2}\otimes_{h} S_1^N\big\|=\big\|id\otimes id:R_{N^2}\otimes_{h} S_1^N\rightarrow S_1^{N^2}\big\|.
\end{align}
Finally, the reader should note that we have defined the previous tensor products $V\otimes_\alpha W$ as Banach spaces. However, the minimal and the Haagerup tensor products have a natural operator space structure; that is, one can define a natural sequence of matrix norms $M_n(V\otimes_\alpha W)$ in both cases. Since this structure will not play any role in most of the results given in this work we have preferred not to include it in this introduction. Some extra information will be provided in Section \ref{sec parallel repetition}, where the use of this structure will allow us to give a simple proof of a perfect parallel repetition theorem for the one-way quantum value of a rank-one quantum game. However, the interested reader can find more information on this in \cite{Pisier} and, in particular, check that the previous properties (\ref{duality haagerup}), (\ref{metric mapping property}), (\ref{injectivity}) are still true after being changed to read ``completely isometrically'', $
 \|T\otimes_\alpha S\|_{cb}\leq \|T\|_{cb}\|S\|_{cb}$, and ``complete isometry'' respectively.
\subsection{Grothendieck's inequality}
In the very particular case in which the operator space under consideration is $S_1^N$, there exists a close connection between the minimal and the Haagerup tensor norms. This result is known in the literature as the \emph{operator space Grothendieck inequality}\footnote{See the review \cite{PisierSurvey} for the history and applications of Grothendieck type inequalities.}. Let us first state the theorem in its standard form (albeit only for finite dimensional Hilbert spaces) and we will explain later how to obtain the different versions we are interested in. We refer the reader to \cite{PS}, \cite{HM}, \cite{PisierSurvey} for a more general statement of the theorem.
\begin{theorem}[\cite{PS}, \cite{HM}]\label{Grothendieck}
Given two finite dimensional Hilbert spaces $\He_A$, $\He_B$ and an element $M\in S_1(\He_A)\otimes S_1(\He_B)$, let us denote by $\widehat{M}:B(\He_A)\rightarrow S_1(\He_B)$ the corresponding linear map. Then, there exist states $f_1, f_2\in S_1(\He_A)$ and $g_1, g_2\in S_1(\He_B)$ such that for every $a\in B(\He_A)$ and $b\in B(\He_B)$ we have
\begin{align*}
|\langle \widehat{M}(a),b\rangle|\leq \|\widehat{M}\|_{cb}\big(f_1(aa^*)^\frac{1}{2}g_1(b^*b)^\frac{1}{2}+
f_2(a^*a)^\frac{1}{2}g_2(bb^*)^\frac{1}{2}\big).
\end{align*}
\end{theorem}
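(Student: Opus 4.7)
The plan is to deduce this theorem from the operator space Grothendieck inequality of Pisier-Shlyakhtenko \cite{PS} and Haagerup-Musat \cite{HM} in its standard form for jointly completely bounded bilinear forms on C*-algebras.

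The first step is to associate to $M$ the bilinear form
$$u:B(\He_A)\times B(\He_B)\to\C,\qquad u(a,b)=\langle\widehat{M}(a),b\rangle,$$
defined via the scalar pairing \eqref{Scalar Pairing}, and to observe that its jointly completely bounded norm (i.e.\ the norm of $M$ as an element of $(B(\He_A)\otimes_{min} B(\He_B))^*$) is dominated by $\|\widehat M\|_{cb}$. This rests on the fact that the completely bounded norm of any map $T:V\to W^*$ dominates the joint cb norm of the bilinear form it induces on $V\times W$; concretely, the latter can be computed by testing $M$ against elementary tensors $\alpha\otimes\beta$ with $\alpha\in M_n(B(\He_A))$ and $\beta\in M_n(B(\He_B))$ of operator norm at most one, and the bound follows from $\|(id_n\otimes\widehat M)(\alpha)\|_{M_n(S_1(\He_B))}\leq\|\widehat M\|_{cb}$ together with trace duality between $M_n(S_1(\He_B))$ and $M_n(B(\He_B))$.

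The second step is to apply the Pisier-Shlyakhtenko-Haagerup-Musat theorem to $u$: for any jointly completely bounded bilinear form on a pair of C*-algebras $A\times B$, there exist states $f_1,f_2\in A^*$ and $g_1,g_2\in B^*$ such that
$$|u(a,b)|\leq \|u\|_{jcb}\bigl(f_1(aa^*)^{1/2}g_1(b^*b)^{1/2}+f_2(a^*a)^{1/2}g_2(bb^*)^{1/2}\bigr).$$
Specialized to $A=B(\He_A)$ and $B=B(\He_B)$ and combined with Step 1, this yields exactly the assertion, the states being identified with density matrices in $S_1(\He_A)$ and $S_1(\He_B)$.

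The main obstacle lies in Step 2: the operator space Grothendieck theorem is a deep result whose full proof relies on non-trivial C*-algebra techniques (ultraproducts and factorizations through group von Neumann algebras in the non-exact case treated by Haagerup-Musat). In the finite-dimensional setting that concerns us here a more elementary route is available via the self-duality of the Haagerup tensor norm \eqref{duality haagerup}: the control on $\|\widehat M\|_{cb}$ would force $M$ to have small norm in a symmetric Haagerup (``$eh$'') tensor norm on $S_1(\He_A)\otimes S_1(\He_B)$, so that $M=M_1+M_2$ splits as a sum of an element with small $\otimes_h$-norm and one with small $\otimes_{h^t}$-norm. Extracting explicit Haagerup and transpose-Haagerup factorizations $M_1=\sum_k x_k\otimes y_k$, $M_2=\sum_k z_k\otimes w_k$ and applying Cauchy-Schwarz to the associated row/column sums would produce four positive functionals whose normalizations are exactly the required states $f_1,g_1,f_2,g_2$.
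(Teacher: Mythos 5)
Your main line of argument (Steps 1 and 2) is correct and coincides with what the paper does: Theorem \ref{Grothendieck} is not proved there but quoted from \cite{PS} and \cite{HM}, the only translation needed being the standard identification of $\|\widehat{M}\|_{cb}$ with the jointly completely bounded norm of the associated bilinear form (the paper handles this point by invoking $\N{u}_{ER}\leq\|\widehat{M}\|_{cb}$ from \cite[Proposition 3.3]{HM}). Note only that the ``more elementary route'' sketched in your last paragraph is circular as stated, since the claim that a bound on $\|\widehat{M}\|_{cb}$ forces a small symmetrized Haagerup norm \emph{is} the content of the Grothendieck inequality (cf.\ Theorem \ref{Grothendieck II}), so it cannot serve as an independent proof of it.
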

Theorem \ref{Grothendieck} was first proved by Pisier and Shlyakhtenko (\cite{PS}) with a constant factor $K$ in the right hand side of the inequality. Later, Haagerup and Musat gave a proof with $K=1$ (\cite{HM}). During the referee process of this work, a new proof of Theorem \ref{Grothendieck} was given by Regev and Vidick (\cite{RegevVidickII}). This proof, based on techniques from quantum information theory, improves certain quantitative estimates with respect to the previous proofs.

As it is explained in \cite[Lemma 3.1]{HM}, it follows from Theorem \ref{Grothendieck} that there is a decomposition of linear maps $\widehat{M}=\widehat{M}_1 + \widehat{M}_2$ and states $f_1,f_2 \in S_1(\h_A)$ and $g_1, g_2 \in S_1(\h_B)$ such that for all $a \in B(\h_A)$ and $b \in B(\h_B)$ we have
\begin{align}\label{u1Grothendieck}
|\langle\widehat{M}_1(a),b\rangle| \leq \|\widehat{M}\|_{cb} f_1(aa^*)^{1/2} g_1(b^*b)^{1/2}, \\
\nonumber |\langle\widehat{M}_2(a),b\rangle|\leq \|\widehat{M}\|_{cb} f_2(a^*a)^{1/2} g_2(bb^*)^{1/2}.
\end{align}Indeed, we are just identifying $\langle\widehat{M}(a),b\rangle=M(a,b)$, when $M$ is regarded as a bilinear form and using that the quantity $\N{u}_{ER}$ appearing in \cite{HM} satisfies $\N{u}_{ER} \leq \|\widehat{M}\|_{cb}$ by \cite[Proposition 3.3]{HM}. It is interesting to note that a standard Hahn-Banach argument (see \cite[Section 23]{PisierSurvey}) shows that condition (\ref{u1Grothendieck}) exactly means that
\begin{align*}
\|M_1\|_{S_1(\h_A)\otimes_h S_1(\h_B)}\leq \|\widehat{M}\|_{cb}=\|M\|_{S_1(\h_A)\otimes_{min} S_1(\h_B)},\\
\|M_2\|_{S_1(\h_A)\otimes_{h^t} S_1(\h_B)}\leq \|\widehat{M}\|_{cb}=\|M\|_{S_1(\h_A)\otimes_{min} S_1(\h_B)}.
\end{align*}
Now Theorem \ref{Grothendieck} can be stated in the following way:
\begin{theorem}\label{Grothendieck II}
Given two finite dimensional Hilbert spaces $\He_A$, $\He_B$ and an element $M\in S_1(\He_A)\otimes S_1(\He_B)$ we have
\begin{align}\label{constant}
\frac{1}{2}\|M\|_\mu\leq \|M\|_{S_1(\He_A)\otimes_{min} S_1(\He_B)} \leq\|M\|_\mu,
\end{align}where $\|\cdot\|_\mu$ is the so called {\em symmetrized Haagerup} tensor norm \cite{Pisier}, defined by
\begin{align*}
\|M\|_\mu=\sup \Big\{\langle M,A \rangle: \max\big\{\|A\|_{B(\He_A)\otimes_hB(\He_B)},\|A\|_{B(\He_A)\otimes_{h^t}B(\He_B)}\big\}\leq 1\Big\}.
\end{align*}
\end{theorem}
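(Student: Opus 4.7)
The plan is to reformulate $\|M\|_\mu$ as an infimum over decompositions, after which both inequalities in the theorem follow from preparatory material already collected in the excerpt. First I would note that, by its definition, $\|M\|_\mu$ is the norm on $S_1(\h_A)\otimes S_1(\h_B)$ dual to the pointwise maximum $\max(\|\cdot\|_h,\|\cdot\|_{h^t})$ on $B(\h_A)\otimes B(\h_B)$. Applying the standard functional-analytic fact that the dual of a maximum of two norms is the infimal convolution of their duals, together with the self-duality of the Haagerup norm recorded in (\ref{duality haagerup}) and its transpose analog, I would obtain
\begin{equation*}
\|M\|_\mu \;=\; \inf\bigl\{\|M_1\|_{S_1(\h_A)\otimes_h S_1(\h_B)} + \|M_2\|_{S_1(\h_A)\otimes_{h^t} S_1(\h_B)} \;:\; M = M_1 + M_2\bigr\},
\end{equation*}
where the infimum runs over all splittings of $M$ as a sum of two elements of $S_1(\h_A)\otimes S_1(\h_B)$.

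For the upper bound $\|M\|_{S_1(\h_A)\otimes_{min}S_1(\h_B)} \leq \|M\|_\mu$, I would use that the minimal tensor norm is the smallest operator space cross norm and hence is dominated by the Haagerup norm and its transpose. For any decomposition $M = M_1 + M_2$ the triangle inequality then gives
\begin{equation*}
\|M\|_{S_1\otimes_{min} S_1} \;\leq\; \|M_1\|_{S_1\otimes_{min} S_1} + \|M_2\|_{S_1\otimes_{min} S_1} \;\leq\; \|M_1\|_{S_1\otimes_h S_1} + \|M_2\|_{S_1\otimes_{h^t} S_1},
\end{equation*}
and taking the infimum over all such decompositions yields the claimed upper bound.

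For the lower bound $\tfrac{1}{2}\|M\|_\mu \leq \|M\|_{S_1(\h_A)\otimes_{min} S_1(\h_B)}$, I would invoke directly the restated form of Theorem \ref{Grothendieck}: the discussion around (\ref{u1Grothendieck}) and the Hahn--Banach identification that follows produce a decomposition $\widehat M = \widehat M_1 + \widehat M_2$ whose corresponding elements $M_1,M_2 \in S_1(\h_A)\otimes S_1(\h_B)$ satisfy $\|M_1\|_{S_1\otimes_h S_1} \leq \|\widehat M\|_{cb} = \|M\|_{min}$ and $\|M_2\|_{S_1\otimes_{h^t} S_1} \leq \|M\|_{min}$. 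Plugging this pair into the decomposition formula from the first step gives $\|M\|_\mu \leq 2\|M\|_{min}$, as required. The only nontrivial ingredient is Theorem \ref{Grothendieck} itself; beyond that the argument is purely formal duality, and the factor $2$ arises simply from summing the separate $h$- and $h^t$-bounds rather than combining them more cleverly.
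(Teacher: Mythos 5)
Your proposal is correct and follows essentially the same route as the paper: both first rewrite $\|M\|_\mu$ as the infimal convolution $\inf\{\|M_1\|_{h}+\|M_2\|_{h^t}: M=M_1+M_2\}$ via self-duality of the Haagerup norm (the paper carries out the Hahn--Banach argument explicitly where you cite the standard duality fact), then get the upper bound from $\min\leq h,h^t$ plus the triangle inequality, and the lower bound with constant $2$ from the decomposition supplied by Theorem \ref{Grothendieck} and (\ref{u1Grothendieck}).
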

Let us explain how to obtain this result from the previous one. One first notices that $\|M\|_\mu$ can be rewritten as
\begin{equation*}%\label{eq_sym-Haag}
\|M\|_\mu=\inf \Big\{\|u\|_{S_1(\He_A)\otimes_{h} S_1(\He_B)}+\|v\|_{S_1(\He_A)\otimes_{h^t} S_1(\He_B)}: M=u+v\Big\}.\end{equation*}
To see why, let us call the second expression $\|M\|_*$ and assume it to be $< 1$. There must exist a decomposition $M=u+v$ with  $\|u\|_{S_1(\He_A)\otimes_{h} S_1(\He_B)}+\|v\|_{S_1(\He_A)\otimes_{h^t} S_1(\He_B)}< 1$. Let $A$ be any element in $B(\He_A)\otimes B(\He_B)$ verifying $$\max\big\{\|A\|_{B(\He_A)\otimes_hB(\He_B)},\|A\|_{B(\He_A)\otimes_{h^t}B(\He_B)}\big\}\leq 1.$$ One has
$$|\< M,A\>|\le |\<u,A\>|+|\<v,A\>|\le \|u\|_{S_1(\He_A)\otimes_{h} S_1(\He_B)}+\|v\|_{S_1(\He_A)\otimes_{h^t} S_1(\He_B)}< 1.$$Here, we have used the self duality of the Haagerup tensor norm and its transpose. This gives $$\|M\|_\mu\le \|M\|_*.$$
To see the other inequality we consider $M$ such that $\|M\|_*=1$. By the Hahn-Banach Theorem, there must exist $A$ with $\<M,A\>=1$ and $$|\<x,A\>|\le \|x\|_*\le \min\{\|x\|_{S_1(\He_A)\otimes_{h} S_1(\He_B)},\|x\|_{S_1(\He_A)\otimes_{h^t} S_1(\He_B)}\}$$ for all $x\in S_1(\He_A)\otimes S_1(\He_B)$. By the self duality of the Haagerup tensor norm and its transpose we get that $$\max\big\{\|A\|_{B(\He_A)\otimes_hB(\He_B)},\|A\|_{B(\He_A)\otimes_{h^t}B(\He_B)}\big\}\leq 1$$
and hence $\|M\|_\mu\ge 1$. This gives  $$\|M\|_\mu\ge \|M\|_*$$ finishing the argument.

\

Now, the second inequality (\ref{constant}) is trivial since, by (\ref{min-operator norm}) and (\ref{Row-Haagerup norm}), the minimal tensor norms is always smaller than the Haagerup one and its transpose. On the other hand, the first inequality in (\ref{constant}) follows trivially from (\ref{u1Grothendieck}) and the comments below it.
\section{Connections}\label{sec connections}
\subsection{Rank-one quantum games and connections to operator spaces}

We will start by explaining in detail rank-one quantum games. Actually, as we mentioned in Section \ref{sec introduction}, we will be interested in two different scenarios. In the first one, the two players, Alice and Bob, are allowed to use an entangled quantum state to define their strategy. In this case, the game works as follows:

\begin{enumerate}
\item The referee, Charlie, prepares an initial state $\ket{\psi} \in \mathcal H_A \otimes \mathcal H_B \otimes \mathcal H_C$ and sends the registers $A$ and $B$ to Alice and Bob, respectively.
\item Alice and Bob also share (an arbitrary amount of) entanglement in the form of a state $\ket{\phi} \in \mathcal H_{A'} \otimes \mathcal H_{B'}$. All Hilbert spaces are assumed to be finite-dimensional.
\item Alice and Bob apply quantum operations $T_{AA'}$ and $T_{BB'}$ to $\mathcal H_A \otimes \mathcal H_{A'} (\simeq \mathcal H_{AA'})$ and
$\mathcal H_B \otimes \mathcal H_{B'} (\simeq \mathcal H_{BB'})$ respectively.
\item Let $\ket{\gamma}$ be a state in $\mathcal H_A \otimes \mathcal H_B \otimes \mathcal H_C$. The triple $(A,B,C)$ is measured with respect to the projective measurement system with $P_1=\ket{\gamma}\bra{\gamma}$ and $P_0=1-P_1$. The outcome $1$ indicates that Alice and Bob win while $0$ means that they lose.
\item The game $G=G(\ket{\psi},\ket{\gamma})$ is completely determined by the initial state $\ket{\psi}$ and the measurement state $\ket{\gamma}$.
\item The value of the game is the supremum over all states $\ket{\phi}$ and quantum operations $T_{AA'}$ and $T_{BB'}$ of the probability that Alice and Bob win the game. This value will be denoted by $\omega^*(G)$ and it will be called \emph{entangled value} of $G$.
\end{enumerate}

Our second scenario is that in which Alice and Bob are allowed to transmit information in one direction. In principle they can share an entangled state too but this can be incorporated in the communication. In this case, as before, we assume that Alice, Bob, and Charlie share an initial state $\ket{\psi} \in \mathcal H_A\otimes \mathcal H_B \otimes \mathcal H_C$. After Alice and Bob perform their quantum operations, Charlie measures their responses against the measurement state $\ket{\gamma} \in \mathcal H_A \otimes \mathcal H_B \otimes \mathcal H_C$. However, in this situation, Alice is allowed to communicate with Bob; they communicate via an auxiliary system $A'$, initialized in the state $\ket{0} \in \mathcal H_{A'}$. Alice applies a quantum operation to $\mathcal H_A \otimes \mathcal H_{A'}$; Bob then applies a quantum operation to $\mathcal H_B \otimes \mathcal H_{A'}$. The value of the game is again the supremum over all quantum operations $T_{AA'}$ and $T_{BA'}$ of the probabi
 lity that Alice and Bob win the game. This value will be denoted by $\omega_{qow}(G)$ and it will be called \emph{entangled value of $G$ with one-way communication}.

The main result of this paper is that operator spaces are ideally suited to describing the value of such quantum games. By taking a partial trace over the referee's register, we obtain a matrix $M_{AB}=tr_C \ket{\psi}\bra{\gamma}$ in $S_1(\mathcal H_A) \otimes S_1(\mathcal H_B)$. Remarkably, by using different operator space tensor product norms, we can characterize those matrices corresponding to quantum games and describe the value of the quantum game corresponding to the different resources the players are allowed to use.

In order to pave the way for the main result, we will start by showing that $S_1(\mathcal H_A) \otimes S_1(\mathcal H_B)$ is indeed the natural space in which to realize the rank-one quantum games.
\begin{prop}\label{connection-projective}
Let $M \in S_1(\mathcal H_A \otimes \mathcal H_B)$. Then $\|M\|_{S_1(\mathcal H_A\otimes \mathcal H_B)} \leq 1$ if and only if there exist a finite dimensional Hilbert space $\mathcal H_C$ and $\ket{\psi}, \ket{\gamma}$ in the unit sphere of $\mathcal H_A \otimes \mathcal H_B \otimes \mathcal H_C$ such that $M=tr_C \ket{\psi}\bra{\gamma}$.
\end{prop}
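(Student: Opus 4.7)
The statement is essentially a variant of the Schmidt decomposition / purification, relating the trace norm of $M$ to representations as partial traces of rank-one tensors. I would split the argument into the two directions.

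For the easy direction $(\Leftarrow)$, suppose $M = \operatorname{tr}_C\ket{\psi}\bra{\gamma}$ with $\ket{\psi},\ket{\gamma}$ unit vectors. The rank-one operator $\ket{\psi}\bra{\gamma}$ has trace norm $\|\ket{\psi}\|\,\|\ket{\gamma}\|=1$, and the partial trace is a completely positive, trace-preserving map, hence a contraction on $S_1$. This immediately gives $\|M\|_{S_1(\mathcal H_A\otimes \mathcal H_B)}\le 1$.

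For the main direction $(\Rightarrow)$, I would use the singular value decomposition. Write $M=\sum_{i=1}^r \lambda_i \ket{e_i}\bra{f_i}$ with $\lambda_i>0$ and $s:=\sum_i\lambda_i=\|M\|_1\le 1$. Introduce an auxiliary Hilbert space $\mathcal H_C$ with an orthonormal system $\{\ket{i}\}_{i=1}^{r}$ together with two additional orthogonal unit vectors $\ket{\tilde 0},\ket{\tilde 1}$ orthogonal to each $\ket{i}$, and set
\begin{equation*}
\ket{\psi}=\sum_{i=1}^r \sqrt{\lambda_i}\,\ket{e_i}\otimes\ket{i}+\sqrt{1-s}\,\ket{e_1}\otimes\ket{\tilde 0},
\qquad
\ket{\gamma}=\sum_{i=1}^r \sqrt{\lambda_i}\,\ket{f_i}\otimes\ket{i}+\sqrt{1-s}\,\ket{f_1}\otimes\ket{\tilde 1}.
\end{equation*}
Both are unit vectors by construction. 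Computing $\operatorname{tr}_C\ket{\psi}\bra{\gamma}$, the diagonal block in the $\ket{i}$ basis yields $\sum_i \lambda_i \ket{e_i}\bra{f_i}=M$, while the cross terms involve traces of the form $\operatorname{tr}(\ket{i}\bra{\tilde 1})$, $\operatorname{tr}(\ket{\tilde 0}\bra{j})$, and $\operatorname{tr}(\ket{\tilde 0}\bra{\tilde 1})$, all of which vanish by our orthogonality choices. Hence $\operatorname{tr}_C\ket{\psi}\bra{\gamma}=M$ as required.

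The only subtle point is the padding: without the $\sqrt{1-s}$ correction one would only obtain vectors of norm $\sqrt{s}$, which gives the weaker statement that $\|M\|_1\le 1$ implies $M$ is a partial trace of a rank-one operator of trace norm $\le 1$ rather than exactly~$1$. Using two distinct auxiliary vectors $\ket{\tilde 0}\ne\ket{\tilde 1}$ (instead of a single one) is the trick that normalizes $\ket{\psi}$ and $\ket{\gamma}$ simultaneously while killing the unwanted cross contributions in the partial trace. I do not expect any genuine obstacle beyond this bookkeeping.
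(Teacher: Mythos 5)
Your proof is correct and follows essentially the same route as the paper: a singular value decomposition of $M$, a dilation into an auxiliary register $\mathcal H_C$ indexed by the singular vectors, and a two-dimensional padding to normalize $\ket{\psi}$ and $\ket{\gamma}$ (the paper's passing remark about taking $\mathcal H_C=\C^{N+2}$ ``to complete norms'' is exactly your $\ket{\tilde 0},\ket{\tilde 1}$ construction, which you spell out explicitly). The only cosmetic difference is that you justify the easy direction via contractivity of the partial trace on $S_1$, where the paper simply asserts it.
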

\begin{proof}
It is very easy to see that $M=tr_C \ket{\psi}\bra{\gamma}$ is in the unit ball of $S_1(\mathcal H_A \otimes \mathcal H_B)$ for every states $\ket{\psi}, \ket{\gamma}\in\mathcal H_A \otimes \mathcal H_B \otimes \mathcal H_C$. To see the converse, let us consider the singular value decomposition of $M=\sum_{i=1}^N\alpha_i|f_i\rangle\langle g_i|$, where $(|f_i\rangle)_{i=1}^N$ and $(|g_i\rangle)_{i=1}^N$ are orthonormal bases of $\mathcal H_A \otimes \mathcal H_B$ and $(\alpha_i)_{i=1}^N$ is a sequence of nonnegative real numbers verifying $\sum_{i=1}^N\alpha_i\leq 1$. Then, we can consider $\mathcal H_C=\C^N$ and define $\ket{\psi}=\sum_{i=1}^N\sqrt{\alpha_i}|f_i\rangle|i\rangle$ and $\ket{\gamma}=\sum_{i=1}^N\sqrt{\alpha_i}|g_i\rangle|i\rangle$, where $(|i\rangle)_{i=1}^N$ is the canonical basis of $\C^N$. It is trivial that $\ket{\psi}$ and $\ket{\gamma}$ are in the unit ball of $\mathcal H_A \otimes \mathcal H_B \otimes \mathcal H_C$ and $M=tr_C \ket{\psi}\bra{\gamma}$. Finally, no
 te
 that we can assume that $\ket{\psi}$ and $\ket{\gamma}$ have both norm one just by considering  $\mathcal H_C=\C^{N+2}$ to complete norms.
\end{proof}
Proposition \ref{connection-projective} says that there is a one-to one correspondence between the unit ball of $S_1(\mathcal H_A\otimes \mathcal H_B)$ and the set of rank-one quantum games via the matrices $M_{AB}=tr_C \ket{\psi}\bra{\gamma}$. For any game $G=G(\ket{\psi},|\gamma\rangle)$, $V(G)=\|M_{AB}\|^2_{S_1(\mathcal H_A\otimes \mathcal H_B)}$ will be called \emph{the maximal value of the game}, since it represents the success probability of the game for one player that has access to both Hilbert spaces $\mathcal H_A$ and $\mathcal H_B$ (so the best possible situation). The main connection in this work states that the minimal tensor norm and the Haagerup tensor norm give us respectively the entangled value of the game and the entangled value of the game with one-way communication.
\begin{theorem}\label{main- connection}
Let $G$ be a quantum entangled game with initial state $\ket{\psi}$ and final measurement $\ket{\gamma}\bra{\gamma}$ for $\ket{\psi},\ket{\gamma} \in \mathcal H_A \otimes \mathcal H_B \otimes \mathcal H_C$. Let $M_{AB}=tr_C \ket{\psi}\bra{\gamma}$. Then,
\begin{enumerate}
\item[1.] The entangled value of the game is given by
\[
\omega^*(G)=\N{M_{AB}}_{S_1^A \otimes_{min} S_1^B}^2.
\]

\item[2.] The entangled value of the game with one-way communication from Alice to Bob is given by
\[
\omega_{qow}(G)=\N{M_{AB}}_{S_1^A \otimes_h S_1^B}^2.
\]
\item[3.] The maximal value of the game is given by
\[
V(G)=\N{M_{AB}}^2_{S_1^{AB}}.
\]
\end{enumerate}
\end{theorem}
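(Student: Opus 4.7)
The three assertions will be proved by a single common strategy: express the winning amplitude, via the identity
\[
\bra{\gamma}(X\otimes I_C)\ket{\psi}=\tr\bigl[X\,M_{AB}\bigr]\qquad(X\in B(\mathcal H_A\otimes\mathcal H_B)),
\]
which is immediate from $M_{AB}=\tr_C\ket{\psi}\bra{\gamma}$ and cyclicity of the trace, as a sum involving matrix coefficients of $M_{AB}$ and the ``block entries'' of the players' unitaries; then match that sum with the definition of the appropriate tensor norm. Part (3) is the easiest: a single player holding $\mathcal H_A\otimes\mathcal H_B$ acts effectively (via Stinespring) through an arbitrary contraction $X$ on $\mathcal H_A\otimes\mathcal H_B$, and so $V(G)=\sup_{\N{X}\leq 1}|\tr(X\,M_{AB})|^2=\N{M_{AB}}_{S_1^{AB}}^2$ by the standard duality $S_1^{AB}=B(\mathcal H_A\otimes\mathcal H_B)^*$.

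For parts (1) and (2), decompose Alice's and Bob's unitaries in block form along fixed orthonormal bases of $\mathcal H_A$ and $\mathcal H_B$,
\[
U_{AA'}=\sum_{i,j}\vert i\rangle\langle j\vert_A\otimes U_{ij},\qquad V_{B\star}=\sum_{k,l}\vert k\rangle\langle l\vert_B\otimes V_{kl},
\]
with $U_{ij}, V_{kl}$ acting on the relevant ancilla and $\star=B'$ (entangled case) or $\star=A'$ (one-way case). Applying the common identity with $X=\vert i\rangle\langle j\vert_A\otimes\vert k\rangle\langle l\vert_B$ expresses the amplitude as a weighted sum of the operators $U_{ij}\otimes V_{kl}$ in the entangled case (acting on $\mathcal H_{A'}\otimes\mathcal H_{B'}$) or as a weighted sum of the products $V_{kl} U_{ij}$ in the one-way case (composed inside the single algebra $B(\mathcal H_{A'})$, since Bob acts on the very ancilla just transmitted by Alice). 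Optimizing over the shared state $\ket{\phi}$ and over the unitaries respectively gives the squared operator norms of these effective operators. One then matches these with the alternative formulations
\[
\N{M_{AB}}_{S_1^A\otimes_{\min}S_1^B}=\sup\Bigl\|\sum_{i,j,k,l}M_{AB}^{ij,kl}\,T_A(e_{ij})\otimes T_B(e_{kl})\Bigr\|,\qquad \N{M_{AB}}_{S_1^A\otimes_h S_1^B}=\sup\Bigl\|\sum_\alpha T(a_\alpha)\,S(b_\alpha)\Bigr\|
\]
(the latter over complete contractions $T:S_1^A\to B(\mathcal H)$ and $S:S_1^B\to B(\mathcal H)$ into a \emph{common} Hilbert space and a fixed decomposition $M_{AB}=\sum_\alpha a_\alpha\otimes b_\alpha$), which are directly available from the material in Section \ref{sec operator spaces}. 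The correspondence between a unitary $U_{AA'}$ and a completely contractive map $T_A(e_{ji})=U_{ij}$ rests on the isomorphism $M_A\otimes_{\min}B(\mathcal H_{A'})\cong CB(S_1^A,B(\mathcal H_{A'}))$: a unitary, being a norm-one element of the left-hand side, corresponds exactly to a complete contraction on the right. In the one-way case, the common target $B(\mathcal H)$ in the Haagerup characterization is mirrored by the shared physical ancilla $\mathcal H_{A'}$.

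The main obstacle is the converse direction in this correspondence: showing that every complete contraction $T_A:S_1^A\to B(\mathcal K_A)$ arises from some unitary $U_{AA'}$ after an allowable enlargement of the ancilla. This is handled by applying the Stinespring-Wittstock dilation theorem (to realize $T_A$ as a corner of a $*$-homomorphism) together with Halmos' unitary dilation of a contraction in $B(\ell_2^A\otimes\mathcal K_A)$ to a unitary on a doubled Hilbert space. This larger space can be absorbed into $\mathcal H_{A'}$, which is under the players' control and can be arbitrarily enlarged; the sup over unitaries therefore exhausts the sup over complete contractions. A minor bookkeeping point is a transposition of indices in $M_{AB}$ relative to the clean operator-space formulas, which is absorbed by a canonical involution of $S_1^A$ (and $S_1^B$) under which both $\N{\cdot}_{\min}$ and $\N{\cdot}_h$ are invariant.
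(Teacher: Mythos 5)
Your proposal is correct and follows essentially the same route as the paper's proof: reduce the players' operations to unitaries via Stinespring, rewrite the winning probability as the squared operator norm of $\mathrm{tr}_{AB}$ of the unitaries against $M_{AB}\otimes 1$, and match this with the formulations $S_1^A\otimes_{\min}S_1^B=CB(B(\mathcal H_A),S_1^B)$ and the Haagerup norm as a supremum of $\|\sum_\alpha T(a_\alpha)S(b_\alpha)\|$ over complete contractions into a common ancilla, with part (3) by trace duality. The only divergence is in the converse step (every complete contraction is reached by a unitary after enlarging the ancilla): you justify it by a Halmos unitary dilation of the associated contraction in $B(\mathcal H_A\otimes\mathcal K_A)$, whereas the paper invokes the fact that the supremum of the resulting convex functional over the unit ball of $B(\mathcal H_A\otimes\mathcal H_{A'})$ is attained at its extreme points, the unitaries; both justifications are valid (though note that the Wittstock theorem you cite is not needed, since $CB(S_1^A,B(\mathcal K_A))=B(\mathcal H_A\otimes\mathcal K_A)$ already hands you the contraction directly).
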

\begin{proof}
\

\noindent 1. After Alice and Bob perform their quantum operations, we are left with the state:
\[
\lb T_{AA'} \otimes T_{BB'} \rb  \lb \ket{\psi}\ket{\phi}\bra{\psi}\bra{\phi}\rb,
\]
and the probability of winning is given by
\begin{align*}
tr\Big(\big(P_1 \otimes 1_{A'B'}\big)\big(T_{AA'} \otimes T_{BB'}\big)\big(\ket{\psi}\ket{\phi}\bra{\psi}\bra{\phi}\big)\Big)
\end{align*}
It follows from Stinespring's Theorem (see for instance \cite[Theorem 4.1]{Paulsen}) that we can write
\begin{align*}
T_{AA'}(\rho)=tr_{\mathcal K}
\Big(
 U_{AA'\mathcal K} (\rho \otimes \ket{0}\bra{0})  U_{AA'\mathcal K}^\dagger\Big),
\end{align*}
for a finite-dimensional Hilbert space $\mc K$ and a unitary $ U_{AA'\mc K}$ on $\mathcal H_A \otimes \mathcal H_{A'} \otimes \mc K$ (and similarly for $T_{BB'}$). It thus suffices to consider quantum operations of the form $T(x)=UxU^\dagger$, where $U$ is a unitary (by changing $\ket{\phi}$ to $\ket{\phi}\ket{00}$ and increasing the dimensions of $\mathcal H_{A'}$ and $\mathcal H_{B'}$).

The value of the game is then given by
\begin{align*}
\suvp tr
\Big(
 \big(
 P_1 \otimes 1_{A'B'}
 \big)
 \Big(
 U_{AA'} \otimes V_{BB'} \otimes 1_C
 )
\ket{\psi} \ket{\phi} \bra{\psi} \bra{\phi}
(
 U_{AA'}^\dagger \otimes V_{BB'}^\dagger \otimes 1_C
)\Big)
\Big),
\end{align*}
where $U_{AA'}$ and $V_{BB'}$ are unitaries on the indicated Hilbert spaces.

Recalling that $P_1=\ket{\gamma}\bra{\gamma}$, we can rewrite the above as
\begin{align*}
\suvp \Big\|
\lb
\bra{\gamma}\otimes 1_{A'B'}
\rb
\lb
U_{AA'} \otimes V_{BB'} \otimes 1_C
\rb
\ket{\psi}\ket{\phi}\Big\|^2_{\mathcal H_{A'}\otimes \mathcal H_{B'}}.
\end{align*}
Taking the supremum over $\ket{\phi} \in \mathcal H_{A'}\otimes \mathcal H_{B'}$, we can write this as an operator norm:
\begin{align}\label{min connection}
\omega^*(G)=& \suv \Big\|
\lb
\bra{\gamma}\otimes 1_{A'B'}
\rb
\lb
U_{AA'} \otimes V_{BB'} \otimes 1_C
\rb
\lb
\ket{\psi} \otimes 1_{A'B'}
\rb\Big\|^2_{B(\mathcal H_{A'}\otimes \mathcal H_{B'})}\\
\nonumber= & \suv \Big\|tr_{AB}
\Big(
\lb
U_{AA'} \otimes V_{BB'}
\rb
\big(M_{AB} \otimes 1_{A'B'}\big)\Big)\Big\|^2_{B(\mathcal H_{A'}\otimes \mathcal H_{B'})}.
\end{align}
On the other hand, as we explained in the previous section
\begin{align*}
\|M_{AB}\|_{S_1(\mathcal H_A)\otimes_{min} S_1(\mathcal H_B)}=\sup\big\|(T\otimes S)(M_{AB})\|_{B(\mathcal H_{A'})\otimes_{min} B(\mathcal H_{B'})},
\end{align*}
where the supremum is taken over all finite dimensional Hilbert spaces $\mathcal H_{A'}$ and $\mathcal H_{B'}$; and all completely contractions $T:S_1(\mathcal H_A)\rightarrow B(\mathcal H_{A'})$ and $S:S_1(\mathcal H_B)\rightarrow B(\mathcal H_{B'})$. Now, given such a $T$, the associated tensor $\hat{T}$ can be seen as an element in the unit ball of $B(\mathcal H_A)\otimes_{min} B(\mathcal H_{A'})=B(\mathcal H_A\otimes\mathcal H_{A'})$. Since we are interested in the extremal points we can restrict to unitaries $U_{AA'}$ in $B(\mathcal H_A\otimes\mathcal H_{A'})$. Using the same reasoning for $S$ we can restrict to unitaries $V_{BB'}$ in $B(\mathcal H_B\otimes\mathcal H_{B'})$. Therefore, the expression above can be written as
\begin{align*}
\|M_{AB}\|_{S_1(\mathcal H_A)\otimes_{min} S_1(\mathcal H_B)}=\suv \Big\|tr_{AB}
\Big(
\lb
U_{AA'} \otimes V_{BB'}
\rb
\big(M_{AB} \otimes 1_{A'B'}\big)\Big)\Big\|^2_{B(\mathcal H_{A'}\otimes \mathcal H_{B'})},
\end{align*}where the supremum is taken over all unitaries $U_{AA'}$ in $B(\mathcal H_A\otimes\mathcal H_{A'})$ and $V_{BB'}$ in $B(\mathcal H_B\otimes\mathcal H_{B'})$. This is exactly the same as the expression in (\ref{min connection}).

\noindent  2. Reasoning similarly to above, we obtain that the value of the game is given by
\begin{align*}
\suvc \Big\|
\lb
\bra{\gamma}\otimes 1_{A'}
\rb
\lb
V_{BA'}U_{AA'} \otimes 1_C
\rb
\ket{\psi}\ket{\phi}\Big\|^2_{\mathcal H_{A'}},
\end{align*}
where $U_{AA'}$ and $V_{BA'}$ are unitaries on the indicated Hilbert spaces. Rearranging as before we can write
\begin{align}
\omega_{qow}(G)=\sup_{U_{AA'},V_{BA'},}
\Big\|tr_{AB}\big((V_{BA'}U_{AA'})(M_{AB}\otimes 1_{A'})\big)\Big\|_{B(\mathcal H_{A'})}^2.
\end{align}
On the other hand, we already explained in Section \ref{sec operator spaces} that
\begin{align*}
\|M_{AB}\|_{S_1(\mathcal H_A)\otimes S_1(\mathcal H_B)}=\sup\|(T\odot S)(M)\|_{B(\mathcal H_{A'})}
\end{align*}where the $\sup$ is taken over all completely bounded norm one operators $T:S_1(\mathcal H_A)\rightarrow B(\mathcal H_{A'})$ and $S:S_1(\mathcal H_B)\rightarrow B(\mathcal H_{A'})$ and $(T\odot S)(x\otimes y)=T(x)S(y)$. In the same way as before, we can assume that $T$ and $S$ are operators associated to unitaries $U_{AA'}$ in $B(\mathcal H_A\otimes\mathcal H_{A'})$ and $V_{BA'}$ in $B(\mathcal H_B\otimes\mathcal H_{A'})$ respectively. Therefore, we obtain \begin{align*}
\|M_{AB}\|_{S_1(\mathcal H_A)\otimes S_1(\mathcal H_B)}=\sup_{U_{AA'},V_{BA'}}
\Big\|tr_{AB}\big((V_{BA'}U_{AA'})(M_{AB}\otimes 1_{A'})\big)\Big\|_{B(\mathcal H_{A'})},
\end{align*}as we wanted.

\noindent 3. Reasoning similarly to above, we replace the unitaries $U \in B(\mathcal H_A \otimes \mathcal H_{A'})$ and $V \in B(\mathcal H_B \otimes \mathcal H_{B'})$ by a single unitary $W \in B(\mathcal H_A \otimes \mathcal H_B \otimes \mathcal H_E)$, where $\mathcal H_E$ is the Hilbert space corresponding to the entangled state available to the player. Rearranging as before, we have that
\begin{align*}
V(G)=\sup_{W,\ket{\xi},\ket{\eta}} \Big|tr_{AB}\big(M_{AB}\bra{\xi}W_{ABE} \ket{\eta}\big)\Big|^2,
\end{align*}
where $\ket{\xi}, \ket{\eta}$ are unitary vectors in $\mathcal H_E$. Note, however, that this is exactly the same as taking
\begin{align*}
V(G)=\sup_{W} \Big|tr_{AB}\big(M_{AB}W_{AB}\big)\Big|^2,
\end{align*}where the supremum runs just on unitary operators in $B(\mathcal H_A \otimes \mathcal H_B)$.
Furthermore, the fact that $V(G)^{\frac{1}{2}}=\|M_{AB}\|_{S_1(\mathcal H_A\otimes \mathcal H_B)}$ follows trivially by the duality $(S_1^n)^*=M_n$ explained in the previous section.
\end{proof}
Note that we have $0\leq\omega^* (G)\leq \omega_{qow} (G)\leq V(G)\leq 1$ for every game $G$. However, Alice and Bob cannot define, in general, a perfect strategy. Indeed, the maximal value of the game $V(G)$ is in general strictly smaller than $1$. We finish this section by providing a characterization of those rank-one quantum games with maximal value equal to $1$.
\begin{lemma}\label{proj norm equal one}
Consider the states $\ket{\psi}=\sum_i \lambda_i \ket{i}_C\ket{\alpha_i}_{AB}$ and $\ket{\gamma}=\sum_i \mu_i \ket{i}_C \ket{\beta_i}_{AB}$.
The matrix $M_{AB}=tr_C \ket{\psi}\bra{\gamma}$ satisfies $\N{M_{AB}}_{S_1(\mathcal H_A) \otimes\mathcal H_B)}=1$ if and only if there is a unitary $U$ on $\mathcal H_A \otimes \mathcal H_B$ such that $\ket{\alpha_i}=U\ket{\beta_i}$ and $\lambda_i=\mu_i$ for all $i$.
\end{lemma}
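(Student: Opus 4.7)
The plan is to reduce the statement to the singular value decomposition of $M_{AB}$ together with a one-line Cauchy--Schwarz argument. First I would expand $M_{AB}$ using the orthonormality of $\{\ket{i}_C\}$ (implicit in the Schmidt-type form of the unit vectors $\ket{\psi}$ and $\ket{\gamma}$; this orthonormality, together with the orthonormality of $(\ket{\alpha_i})$ and $(\ket{\beta_i})$, also yields the normalizations $\sum_i \lambda_i^2 = \sum_i \mu_i^2 = 1$). A direct calculation collapses the double sum to
$$M_{AB} = \sum_{i,j} \lambda_i \mu_j \, \langle j | i \rangle_C \, \ket{\alpha_i}\bra{\beta_j} = \sum_i \lambda_i \mu_i \, \ket{\alpha_i}\bra{\beta_i}.$$
Because the coefficients $\lambda_i \mu_i$ are nonnegative and $(\ket{\alpha_i})$, $(\ket{\beta_i})$ are orthonormal, this expression is already a singular value decomposition of $M_{AB}$ (up to reordering), so $\|M_{AB}\|_{S_1(\mathcal H_A \otimes \mathcal H_B)} = \sum_i \lambda_i \mu_i$.

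Next I would invoke Cauchy--Schwarz: $\sum_i \lambda_i \mu_i \le (\sum_i \lambda_i^2)^{1/2}(\sum_i \mu_i^2)^{1/2} = 1$, with equality if and only if the sequences $(\lambda_i)$ and $(\mu_i)$ are proportional. Given the normalization, the proportionality constant must equal $1$, so $\|M_{AB}\|_{S_1(\mathcal H_A \otimes \mathcal H_B)} = 1$ is equivalent to the purely scalar condition $\lambda_i = \mu_i$ for every $i$.

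The last step is to translate this scalar equality into the unitary statement. For the easy direction, if $\lambda_i = \mu_i$ and $\ket{\alpha_i} = U\ket{\beta_i}$ for a unitary $U$, then $M_{AB} = U\rho$ with $\rho = \sum_i \lambda_i^2 \ket{\beta_i}\bra{\beta_i}$ a density operator, whence $\|M_{AB}\|_{S_1} = \mathrm{tr}(\rho) = 1$. For the converse, once $\lambda_i = \mu_i$ is known, the assignment $U_0\ket{\beta_i} := \ket{\alpha_i}$ on the indices with $\lambda_i > 0$ is an isometry between two equidimensional subspaces of $\mathcal H_A \otimes \mathcal H_B$, and any unitary identification of the orthogonal complements extends $U_0$ to a unitary $U$ on the full space. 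There is essentially no obstacle; the only mild point is arranging that $U\ket{\beta_i} = \ket{\alpha_i}$ also holds at indices with $\lambda_i = 0$, which is automatic since $\{\ket{\alpha_i}\}$ and $\{\ket{\beta_i}\}$ are orthonormal sets and can be freely completed in the extension.
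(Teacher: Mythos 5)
There is a genuine gap: you have silently assumed that the families $(\ket{\alpha_i})$ and $(\ket{\beta_i})$ are orthonormal, but the lemma does not grant this. The paper's notational setup only requires $\N{\ket{\alpha_i}}_2=\N{\ket{\beta_i}}_2=1$; the vectors $\ket{\alpha_i}$ are simply the (generally non-orthogonal) components of $\ket{\psi}$ along the basis $\ket{i}_C$, not a Schmidt decomposition. This assumption does real work in two places. First, without orthonormality the expression $\sum_i \lambda_i\mu_i\ket{\alpha_i}\bra{\beta_i}$ is \emph{not} a singular value decomposition, and one only gets the triangle-inequality bound $\N{M_{AB}}_1\le\sum_i\lambda_i\mu_i$ rather than equality; your reduction of the whole lemma to the scalar condition $\lambda_i=\mu_i$ therefore does not go through. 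Second, and more seriously, your final step defining an isometry by $U_0\ket{\beta_i}:=\ket{\alpha_i}$ is not available: if the $\ket{\beta_i}$ are not orthogonal this assignment need not be well defined, and even when it is, it is an isometry only if $\langle\beta_i|\beta_j\rangle=\langle\alpha_i|\alpha_j\rangle$ for all $i,j$ --- which is essentially the content of what must be proved, not something that can be assumed.

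The paper closes exactly this gap with a different idea. From $\N{M_{AB}}_1=1$ and the chain $1\le\sum_i\lambda_i\mu_i\le(\sum_i\lambda_i^2)^{1/2}(\sum_i\mu_i^2)^{1/2}\le 1$ one gets $\lambda_i=\mu_i$ (this part of your argument survives, since the first inequality is just the triangle inequality). Then $M_{AB}=\sum_i\lambda_i^2\ket{\alpha_i}\bra{\beta_i}$ is a convex combination of points of the unit sphere of $S_1^n$ whose norm is $1$, so all the $\ket{\alpha_i}\bra{\beta_i}$ must lie in a common proper face of the unit ball. Chan's characterization of these faces (as sets of the form $\{UP: P\ge 0,\ \N{P}_1=1,\ \ker P\supseteq\ker U\}$ for a fixed partial isometry $U$) then produces a \emph{single} partial isometry $U$ with $\ket{\alpha_i}\bra{\beta_i}=UP_i$, and positivity of $P_i=U^*\ket{\alpha_i}\bra{\beta_i}$ forces $\ket{\alpha_i}=U\ket{\beta_i}$ for all $i$; finally $U$ is extended to a unitary using finite-dimensionality. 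Your proof would be correct under the additional hypothesis of orthonormality, but as written it proves a strictly weaker statement than the lemma.
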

We begin by fixing some notation.
\begin{align*}
\ket{\psi} =\sum_{i} \lambda_i \ket{i}\ket{\alpha_i}, & \text{   }\ket{\alpha_i}\in \mathcal H_A \otimes \mathcal H_B, & \N{\ket{\alpha_i}}_2=1, \quad &\lambda_i \geq 0, & \sum_i \lambda_i^2=1, \\
\ket{\gamma} =\sum_{i} \mu_i \ket{i}\ket{\beta_i}, & \text{   } \ket{\beta_i}\in \mathcal H_A \otimes \mathcal H_B, & \N{\ket{\beta_i}}_2=1, \quad  &\mu_i \geq 0 ,& \sum_i \mu_i^2=1,
\end{align*}
and write $M$ for the associated matrix $tr_C ( \ket{\psi}\bra{\gamma}) \in B(\mathcal H_A \otimes \mathcal H_B)$. We know that
$\N{M}_{S_1(\mathcal H_A \otimes \mathcal H_B)} \leq 1$ (see Proposition \ref{connection-projective}). We will use the following result from \cite{chan}:
\begin{theorem}
A subset $F$ is a proper closed face of the the unit ball of $S_1^n$ if and only if there exists a nonzero partial isometry $U$ such that
\[
F=\left\{ UP \st P \geq 0, \N{P}_1=1, \text{ and } \ker P \supseteq \ker U \right\}.
\]
\end{theorem}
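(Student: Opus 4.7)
The strategy is to apply the Chan theorem quoted above to reduce the nontrivial (``only if'') direction to the case where $M_{AB}$ is positive semidefinite, and then to exploit equality in a chain of Cauchy--Schwarz inequalities. The ``if'' direction is immediate: if $\ket{\alpha_i}=U\ket{\beta_i}$ and $\lambda_i=\mu_i$ for some unitary $U$, then
\begin{align*}
M_{AB} = \sum_i \lambda_i\mu_i \ket{\alpha_i}\bra{\beta_i} = U\rho, \qquad \rho := \sum_i \lambda_i^2 \ket{\beta_i}\bra{\beta_i},
\end{align*}
so $\rho \geq 0$ with $\tr(\rho) = \sum_i \lambda_i^2 = 1$, and hence $\|M_{AB}\|_{S_1} = \|U\rho\|_{S_1} = \tr(\rho) = 1$.

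For the converse, assume $\|M_{AB}\|_{S_1}=1$, so $M_{AB}$ lies on some proper closed face of the unit ball of $S_1(\mathcal H_A\otimes\mathcal H_B)$. By the Chan theorem, there exist a nonzero partial isometry $V$ and a density operator $P$ on $\mathcal H_A\otimes\mathcal H_B$ with $M_{AB}=VP$ and $\ker P\supseteq\ker V$. In finite dimension $V$ extends to a unitary $U$ on $\mathcal H_A\otimes\mathcal H_B$; since $U$ and $V$ agree on $(\ker V)^\perp\supseteq \supp P$, we obtain $U^*M_{AB}=P\geq 0$. Introducing the modified state $\ket{\psi'}:=(U^*\otimes I_C)\ket{\psi}=\sum_i \lambda_i\ket{i}\ket{\alpha'_i}$ with $\ket{\alpha'_i}:=U^*\ket{\alpha_i}$, one has $\tr_C(\ket{\psi'}\bra{\gamma})=U^*M_{AB}=P$, and the problem reduces to showing that if
\begin{align*}
P = \sum_i \lambda_i\mu_i\ket{\alpha'_i}\bra{\beta_i} \geq 0, \qquad \tr(P)=1,
\end{align*}
then $\lambda_i=\mu_i$ and $\ket{\alpha'_i}=\ket{\beta_i}$ for all $i$ with $\lambda_i>0$.

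Taking the trace of $P$ and chaining three Cauchy--Schwarz estimates yields
\begin{align*}
1 = \sum_i \lambda_i\mu_i\langle\beta_i|\alpha'_i\rangle \leq \sum_i \lambda_i\mu_i|\langle\beta_i|\alpha'_i\rangle| \leq \sum_i \lambda_i\mu_i \leq \Big(\sum_i\lambda_i^2\Big)^{1/2}\Big(\sum_i\mu_i^2\Big)^{1/2} = 1.
\end{align*}
Equality in the rightmost step, applied to the unit-norm nonnegative sequences $(\lambda_i)$ and $(\mu_i)$, forces $\lambda_i=\mu_i$ for all $i$; the middle inequality then forces $|\langle\beta_i|\alpha'_i\rangle|=1$ (hence $\ket{\alpha'_i}=e^{i\theta_i}\ket{\beta_i}$) whenever $\lambda_i>0$; and the first inequality forces each $\lambda_i^2 e^{i\theta_i}$ to be nonnegative real, so $\theta_i=0$ and $\ket{\alpha_i}=U\ket{\beta_i}$ on the support of $(\lambda_i)$. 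For indices with $\lambda_i=0$ the vector $\ket{\alpha_i}$ is not pinned down by $\ket{\psi}$ (and similarly for $\ket{\beta_i}$ when $\mu_i=0$) and may be redefined so that $\ket{\alpha_i}=U\ket{\beta_i}$ without altering $\ket{\psi}$ or $\ket{\gamma}$. The main bookkeeping obstacle of the proof is precisely this kernel-index step together with the verification that the partial isometry $V$ supplied by Chan's theorem does extend to a full unitary on $\mathcal H_A\otimes\mathcal H_B$; both are routine in finite dimension, but must be handled to obtain a single global unitary $U$ valid for every $i$.
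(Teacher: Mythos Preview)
Your write-up does not prove the stated theorem. The statement you were asked to establish is Chan's characterization of the proper closed faces of the unit ball of $S_1^n$; in the paper this result is not proved at all but is quoted from \cite{chan}. What you have written is instead a proof of Lemma~\ref{proj norm equal one} (the characterization of when $\|M_{AB}\|_{S_1}=1$), and in the very first line you \emph{invoke} Chan's theorem as a black box. So as a proof of the displayed statement, this is circular: you assume the theorem in order to prove a downstream consequence of it.

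If your intent was in fact to prove Lemma~\ref{proj norm equal one}, then your argument is correct and close in spirit to the paper's, with one structural difference worth noting. The paper first uses Cauchy--Schwarz on $1=\|\sum_i\lambda_i\mu_i\ket{\alpha_i}\bra{\beta_i}\|_{S_1}\le\sum_i\lambda_i\mu_i$ to force $\lambda_i=\mu_i$, then observes that $M_{AB}$ is a genuine convex combination of the rank-one unit-norm elements $\ket{\alpha_i}\bra{\beta_i}$, all of which must therefore lie on a common face; Chan's theorem is applied to that face, yielding a single partial isometry $U$ with $\ket{\alpha_i}\bra{\beta_i}=UP_i$ for each $i$, and positivity of $P_i$ gives $\ket{\alpha_i}=U\ket{\beta_i}$. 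You instead apply Chan's theorem once to $M_{AB}$ itself, extend the resulting partial isometry to a unitary $U$, and then run a three-step Cauchy--Schwarz chain on $\tr(U^*M_{AB})=1$ to extract both $\lambda_i=\mu_i$ and $\ket{\alpha_i}=U\ket{\beta_i}$ simultaneously. Your route is marginally more economical (one application of Chan rather than an appeal to the face containing all the rank-one pieces), while the paper's route makes the role of the face structure more transparent. Both handle the extension from partial isometry to unitary and the null indices $\lambda_i=0$ in the same routine way.
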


\begin{proof}[Proof of Lemma \ref{proj norm equal one}]
Suppose that such a unitary $U$ exists and that $\lambda_i=\mu_i$ for all $i$. In this situation, we have that
\[
M=\sum_i \lambda_i^2 U\ket{\beta_i}\bra{\beta_i},
\]
and then $tr(U^*M)=\sum_i \lambda_i^2=1$. By the duality between $S_1^n$ and $M_n$ we conclude that $\|M\|_{S_1(\mathcal H_A\otimes \mathcal H_B)}\geq 1$. On the other hand, Proposition \ref{connection-projective} tells us that  $\|M\|_{S_1(\mathcal H_A\otimes \mathcal H_B)}\leq 1$, so we have equality in this expression.

For the converse, first note that we can assume that $\lambda_i$ and $\mu_i$ are non negative real numbers for every $i$. Let us assume that $\|M\|_{S_1(\mathcal H_A\otimes \mathcal H_B)}=1$. Then
\begin{align*}
1=\Big\|\sum_i \lambda_i \mu_i \ket{\alpha_i}\bra{\beta_i}\Big\|_{S_1(\mathcal H_A\otimes \mathcal H_B)} \leq \sum_i \lambda_i \mu_i \leq \left( \sum_i \lambda_i^2 \right)^{\frac{1}{2}} \left(\sum_i \mu_i^2 \right)^{\frac{1}{2}} \leq 1.
\end{align*}
Equality in the Cauchy-Schwartz inequality implies $\lambda_i = \mu_i$ for all $i$. We are thus considering the matrix $\sum_i \lambda_i^2 \ket{\alpha_i} \bra{\beta_i}$, a convex combination of elements in the unit sphere of $S_1^n$. If $\|M\|_{S_1(\mathcal H_A\otimes \mathcal H_B)}=1$, we must have that $\ket{\alpha_i} \bra{\beta_i}$ all lie in the same face of the unit ball of $S_1^n$. But in this case, there exist $P_i \geq 0$, $tr(P_i)=1$, and a partial isometry $U$ such that
\begin{align*}
\ket{\alpha_i} \bra{\beta_i}=UP_i, \quad \text{for all } i.
\end{align*}
We then have that $P_i= U^*\ket{\alpha_i} \bra{\beta_i}$. By positivity of $P_i$, we must have $\ket{\alpha_i}=U\ket{\beta_i}$. As $\mathcal H_A \otimes \mathcal H_B$ is finite dimensional, there exists a unitary $\tilde U$ such that $U|_{\supp U}=\tilde U|_{\supp U}$.
\end{proof}

\section{Three different models}\label{Section: three models}
A natural question is whether the three values $\omega^*$, $\omega_{qow} $ and $V$ are indeed different. It is very easy to see that the \emph{entangled value of the game with two way communication}, so the case when both players can send an unlimited amount of quantum information, already matches the value $V(G)$. A much more surprising result was given by Buhrman et al. (\cite[Theorem 4.1]{BCFGGOS}), who proved that in order to obtain the value $V(G)$ it is enough to consider the \emph{entangled value of the game with simultaneous mutual communication}. That is, when Alice and Bob can both send an unlimited amount of quantum information but their message cannot depend on the ones received.

Since some of the main results in the work deal with the problem of approximating some of these values up to a multiplicative constant factor (see Section \ref{Sec approx}), it is important to study whether the values $\omega^*$, $\omega_{qow} $ and $V$ are indeed different in this sense. That is, whether we can find certain games showing large gaps between the previous values. In this section we prove that this is indeed possible. Furthermore, we will provide optimal gaps (in the dimension of the games) between the entangled value and the entangled value with one-way communication of rank-one quantum games and also between the entangled value with one-way communication and the maximum value of these games. Let us start by defining the following two rank-one quantum games:
\begin{align*}
G_C=tr_C|\psi\rangle\langle\gamma|=\frac{1}{n}\sum_{i=1}^n|i\rangle\langle1|\otimes |1\rangle\langle i|,
\end{align*}where $|\psi\rangle=\frac{1}{\sqrt{n}}\sum_{i=1}^n|i1\rangle|i\rangle$ and $|\gamma\rangle=\frac{1}{\sqrt{n}}\sum_{i=1}^n|1i\rangle|i\rangle$; and
\begin{align*}
G_R=tr_C|\psi\rangle\langle\gamma|=\frac{1}{n}\sum_{i=1}^n|1\rangle\langle i|\otimes |i\rangle\langle 1|,
\end{align*}where $|\psi\rangle=\frac{1}{\sqrt{n}}\sum_{i=1}^n|1i\rangle|i\rangle$ and $|\gamma\rangle=\frac{1}{\sqrt{n}}\sum_{i=1}^n|i1\rangle|i\rangle$.

Let us briefly discuss the game $G_R$; Alice and Bob are asked to convert the state $\ket{\psi}$ to $\ket{\gamma}$. A player with access to both registers can map $\ket{1i}$ to $\ket{i1}$, so $V(G_R)=1$. If there are two space-like separated players, Alice initially shares no entanglement with the referee but is required to create a maximally entangled state shared with him; the ability to send quantum information to Bob does not help her perform this task. We would thus expect that $V(G_R) \gg \omega_{qow}(G_R)$.
Similar comments apply for the game $G_C$ where Bob and Alice interchange roles. However, if Alice is permitted to send him quantum information, he gains a clear advantage; in fact, $\omega_{qow}(G_C) =1$. It is then unsurprising that $\omega_{qow}(G_C) \gg \omega^*(G_C) $.
We can exactly calculate the values of these games and rigorously prove the above statements using simple arguments from operator space theory.

The main result of this section is as follows:
\begin{theorem}\label{gaps}
Let $G_C$ and $G_R$ be defined as above. Then,
\begin{enumerate}
\item[1.] $\frac{\omega_{qow}(G_C)}{\omega^*(G_C)}= n^2$,
\item[2.] $\frac{V(G_R)}{\omega_{qow}(G_R)}= n^2$.
\end{enumerate}
Moreover, such separations are optimal in the dimension of the games.
\end{theorem}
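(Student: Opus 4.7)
The plan is to invoke Theorem~\ref{main- connection} to rewrite each of $\omega^*(G_C)$, $\omega_{qow}(G_C)$, $\omega_{qow}(G_R)$ and $V(G_R)$ as an operator space tensor norm of $M_{G_C}$ or $M_{G_R}$, and then to compute these norms by observing that each matrix lives inside a very simple subspace of $S_1^n\otimes S_1^n$. Using the completely isometric embeddings in (\ref{CR in S_1}), one checks at once that $M_{G_C}\in R_n\otimes C_n$ and $M_{G_R}\in C_n\otimes R_n$. The injectivity (\ref{injectivity}) of both $\otimes_{min}$ and $\otimes_h$ then shows that the $S_1^n\otimes_\alpha S_1^n$ norms of these matrices coincide with their norms in the smaller tensor products $R_n\otimes_\alpha C_n$ and $C_n\otimes_\alpha R_n$ respectively.

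On these subspaces one applies the isometric identifications (\ref{min-operator norm}) and (\ref{Row-Haagerup norm}), combined with $R_n^*=C_n$, $C_n^*=R_n$ and the pairing (\ref{Scalar Pairing}), to check that $M_{G_C}$ corresponds to $\tfrac{1}{n}\,\mathrm{id}:C_n\to C_n$ while $M_{G_R}$ corresponds to $\tfrac{1}{n}\,\mathrm{id}:R_n\to R_n$. The contrast between (\ref{R fact CC}) and (\ref{R fact R}) then supplies both gaps. For $G_C$, $\|\mathrm{id}_{C_n}\|_{cb}=1$ but $\Gamma_R(\mathrm{id}_{C_n})=n$, so $\|M_{G_C}\|_{min}=1/n$ and $\|M_{G_C}\|_h=1$, giving $\omega^*(G_C)=1/n^2$ and $\omega_{qow}(G_C)=1$. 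For $G_R$, (\ref{R fact R}) forces $\Gamma_R$ to coincide with the cb norm on any map involving $R_n$, so both $\|M_{G_R}\|_{min}$ and $\|M_{G_R}\|_h$ equal $1/n$, and hence $\omega^*(G_R)=\omega_{qow}(G_R)=1/n^2$; meanwhile Lemma~\ref{proj norm equal one} applied with the SWAP unitary relating $\ket\psi$ and $\ket\gamma$ (or a direct singular-value computation) yields $\|M_{G_R}\|_{S_1^{n^2}}=1$ and $V(G_R)=1$. Taking ratios of these squared norms gives the two equalities of the theorem.

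For the optimality in the local dimension $n$, one needs to show that every rank-one quantum game $G$ of local dimension $n$ with associated matrix $M\in S_1^n\otimes S_1^n$ satisfies $\|M\|_{h}\le n\|M\|_{min}$ and $\|M\|_{S_1^{n^2}}\le n\|M\|_h$. The second inequality is the composition of the last two equalities in (\ref{optimality gaps model}), yielding $\|M\|_{S_1^{n^2}}\le\sqrt{n}\,\|M\|_{R_{n^2}\otimes_h S_1^n}\le n\,\|M\|_{S_1^n\otimes_h S_1^n}$. For the first, $\|M\|_{h}=\Gamma_R(\hat M:M_n\to S_1^n)$ by (\ref{Row-Haagerup norm}); writing $\hat M=(\hat M\circ\mathrm{id}^{-1})\circ\mathrm{id}$ with $\mathrm{id}:M_n\to R_{n^2}$ the linear identity and using (\ref{distance R-C-M_N}) twice yields $\Gamma_R(\hat M)\le\sqrt{n}\cdot\sqrt{n}\,\|\hat M\|_{cb}=n\,\|M\|_{min}$. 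Squaring, each ratio is at most $n^2$, so the examples $G_C$ and $G_R$ are optimal in $n$.

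The main obstacle is the careful bookkeeping required to identify $M_{G_C}$ and $M_{G_R}$ as scalar multiples of identity maps: one must keep straight which ambient space each basis vector lives in ($S_1^n$ versus $M_n$) when applying the duality pairing (\ref{Scalar Pairing}), and distinguish the row/column subspaces realised inside $S_1^n$ from those realised inside $M_n$. Once these identifications are in place, the gaps are a transparent consequence of the opposite behaviour of $\Gamma_R$ on $\mathrm{id}_{C_n}$ and $\mathrm{id}_{R_n}$ recorded in (\ref{R fact CC}) and (\ref{R fact R}), and the optimality follows from the routine estimates (\ref{distance R-C-M_N}) and (\ref{optimality gaps model}).
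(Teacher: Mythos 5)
Your proposal is correct and follows essentially the same route as the paper: identify $M_{G_C}$ and $M_{G_R}$ with the diagonal elements of $R_n\otimes C_n$ and $C_n\otimes R_n$, use injectivity of $\otimes_{min}$ and $\otimes_h$ together with the identifications (\ref{min-operator norm}), (\ref{Row-Haagerup norm}) and the contrast between (\ref{R fact R}) and (\ref{R fact CC}), and prove optimality by factoring through $R_{n^2}$ (respectively $C_{n^2}$) using (\ref{distance R-C-M_N}) and (\ref{optimality gaps model}). The only cosmetic difference is that you phrase the bound $\|M\|_h\le n\|M\|_{min}$ as a direct $\Gamma_R$-factorization of $\hat M$ rather than as a decomposition of the identity on the tensor products, and the paper supplements the abstract computation with an explicit one-way protocol achieving $\omega_{qow}(G_C)=1$; the mathematical content is the same.
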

Let us start by explaining why these games are suitable for showing the aforementioned gaps exist as well as why these separations are optimal. In order to do this we will use the estimates given in Section \ref{sec operator spaces}. However, we will provide the precise computations in the corresponding proof of Theorem \ref{gaps} below.

According to Theorem \ref{main- connection}, in order to prove the optimality of the first part of Theorem \ref{gaps}, we must show that
\begin{align}\label{min-haagerup distance}
\N{id\otimes id: S_1^n \otimes_{min} S_1^n \to S_1^n \otimes_h S_1^n} = n.
 \end{align}
We note that, according to (\ref{CR in S_1}), $G_C$ can be regarded as the element $$\frac{1}{n}\sum_{i=1}^n|ii\rangle\in R_n\otimes C_n\subset S_1^n\otimes S_1^n.$$Since the minimal and the Haagerup tensor norms are injective, it suffices to study the corresponding norms on $R_n\otimes C_n$. By (\ref{min-operator norm}),
\begin{align*}
\big\|\frac{1}{n}\sum_{i=1}^n|ii\rangle\big\|_{R_n\otimes_{min}C_n}=\frac{1}{n}\big\|id:C_n\rightarrow C_n\big\|_{cb}=\frac{1}{n}.
\end{align*}On the other hand, according to (\ref{Row-Haagerup norm}) and (\ref{R fact CC}) we have
\begin{align*}
\big\|\frac{1}{n}\sum_{i=1}^n|ii\rangle\big\|_{R_n\otimes_{h}C_n}=\frac{1}{n}\Gamma_R(id:C_n\rightarrow C_n)=1.
\end{align*}This shows the lower bound in (\ref{min-haagerup distance}).

In order to show the converse inequality in (\ref{min-haagerup distance}) we decompose the identity as
\begin{align*}
id_{S_1^n \otimes_{min} S_1^n \to S_1^n \otimes_h S_1^n}=id_{C_{n^2} \otimes_{h} S_1^n \to S_1^n \otimes_h S_1^n}\circ id_{C_{n^2} \otimes_{min} S_1^n \to C_{n^2} \otimes_h S_1^n}\circ id_{S_1^n \otimes_{min} S_1^n \to C_{n^2} \otimes_{min} S_1^n}.
\end{align*}Therefore, it suffices to upper bound each of the maps. Now, according to (\ref{optimality gaps model}) we know that
\begin{align*}
\|id_{S_1^n \otimes_{min} S_1^n \to C_{n^2} \otimes_{min} S_1^n}\|=\|id_{C_{n^2} \otimes_h S_1^n \to S_1^n \otimes_h S_1^n}\|=\sqrt{n}.
\end{align*}On the other hand, according to (\ref{min-operator norm}), (\ref{Row-Haagerup norm}) and (\ref{R fact R}) we have
\begin{align*}
\|id_{C_{n^2} \otimes_{min} S_1^n \to C_{n^2} \otimes_h S_1^n}\|=1.
\end{align*} We conclude that
\begin{align*}
\|id_{S_1^n \otimes_{min} S_1^n \to S_1^n \otimes_h S_1^n}\|\leq \sqrt{n}\sqrt{n}=n.
\end{align*}
%%%%%%%%%%%%%%%%%%%%%%%%%%%%%%%%%%%%%%%%%%%%%%%%%%%%
%%%%%%%%%%%%%%%%%%%%%%%%%%%%%%%%%%%%%%%%%%%%%%%%%%%%
According to Theorem \ref{main- connection}, in order to prove the second part of Theorem \ref{gaps} we must show
\begin{align}\label{haagerup-proj distance}
\N{id\otimes id: S_1^n \otimes_{h} S_1^n \to S_1^{n^2}} = n.
\end{align}
We note that, according to (\ref{CR in S_1}), $G_R$ can be regarded as the element $$\frac{1}{n}\sum_{i=1}^n|ii\rangle\in C_n\otimes R_n\subset S_1^n\otimes S_1^n.$$Invoking again the injectivity of the Haagerup tensor product we can study the norm of the element in $C_n\otimes R_n$. Thus, according to (\ref{Row-Haagerup norm}) and (\ref{R fact CC}) we can conclude that
\begin{align*}
\big\|\frac{1}{n}\sum_{i=1}^n|ii\rangle\big\|_{C_n\otimes_{h}R_n}=\frac{1}{n}\Gamma_R(id:R_n\rightarrow R_n)=\frac{1}{n}.
\end{align*}On the other hand, one can easily deduce from Remark \ref{tensor product structures II} that
\begin{align*}
\|G_R\|_{S_1^{n^2}}=\big\|\frac{1}{n}\sum_{i=1}^n|ii\rangle\big\|_{S_1^n}=1.
\end{align*}Therefore, we obtain the lower bound in (\ref{haagerup-proj distance}).

In order to show the converse inequality in (\ref{haagerup-proj distance}) we decompose the identity as
\begin{align*}
id_{S_1^n \otimes_{h} S_1^n \to S_1^{n^2}}=id_{S_1^n \otimes_{h} S_1^n \to R_{n^2} \otimes_{h} S_1^n}\circ id_{R_{n^2} \otimes_{h} S_1^n \to S_1^{n^2}}.
\end{align*}According to (\ref{optimality gaps model}) we know that
\begin{align*}
\|id_{S_1^n \otimes_{h} S_1^n \to R_{n^2} \otimes_{h} S_1^n}\|=\|id_{R_{n^2} \otimes_{h} S_1^n \to S_1^{n^2}}\|=\sqrt{n}.
\end{align*}Therefore,
\begin{align*}
\|id_{S_1^n \otimes_{h} S_1^n \to S_1^{n^2}}\|\leq n.
\end{align*}
Let us now present a proof of $\frac{\omega_{qow}(G_C)}{\omega^*(G_C)}= n^2$ by detailing the previous estimates and following a more information theoretical point of view. This will serve as an illustration of how to translate some of the previous arguments to an information theoretical language.
\begin{proof}[Proof of (Theorem \ref{gaps}, 1.) and its optimality]
We first show that $\omega_{qow}(G_C)=1$ by giving a simple protocol between Alice and Bob with success probability equal to one. In our protocol $\mathcal H_{A'}=\C^n$ and Alice's initial state is initiated to be $|\varphi\rangle=|1\rangle$.
\begin{enumerate}
\item[1.] The initial state after Alice and Bob receive their particles is $|\psi\rangle|\varphi\rangle=\frac{1}{\sqrt{n}}\sum_{i=1}^n|i1\rangle|i\rangle|1\rangle$.
\item[2.] Then, Alice applies a flip on $A-A'$ to produce the state $\frac{1}{\sqrt{n}}\sum_{i=1}^n|11\rangle|i\rangle|i\rangle$ and sends her particle $A'$ to Bob.
\item[3.] Bob applies a flip on $A'$-$B$ to produce the state $\frac{1}{\sqrt{n}}\sum_{i=1}^n|1i\rangle|i\rangle|1\rangle$.
\item[4.] They return their particles to the referee.
\end{enumerate}
Since the final state is equal to $|\gamma\rangle$, once we trace out the auxiliary system, they will win with probability one.

Next we prove the upper bound $\omega^*(G_C)\leq \frac{1}{n^2}$. As we showed in the proof of Theorem \ref{main- connection} we have
\begin{align*}
\omega^*(G_C)^{\frac{1}{2}}=\sup\Big\{\Big\|tr_{AB}(U_{AA'}\otimes V_{BB'})G_C\Big\|_{B(\mathcal H_{A'}\otimes \mathcal H_{B'})}\Big\},
\end{align*}where the supremum runs over all unitaries $U_{AA'}\in B(\mathcal H_{A}\otimes \mathcal H_{A'})$ and $V_{BB'}\in B(\mathcal H_{B}\otimes \mathcal H_{B'})$. In our particular case, the previous expression is of the form \begin{align*}
\frac{1}{n}\sup\Big\{\Big\|\sum_{i=1}^n\langle1|U_{AA'}| i\rangle\otimes \langle i|V_{BB'}| 1\rangle \Big\|_{B(\mathcal H_{A'}\otimes \mathcal H_{B'})}\Big\}.
 \end{align*}Then, Cauchy-Schwartz inequality allows us to upper bound this expression by
\begin{align*}
\frac{1}{n}\sup\Big\{\Big\|\sum_{i=1}^n\langle1|U_{AA'}| i\rangle (\langle1|U_{AA'}| i\rangle)^* \Big\|^{\frac{1}{2}}_{B(\mathcal H_{A'})}\Big\}\sup\Big\{\Big\|\sum_{i=1}^n(\langle i|V_{BB'}| 1\rangle)^*\langle i|V_{BB'}| 1\rangle \Big\|^{\frac{1}{2}}_{B(\mathcal H_{B'})}\Big\}.
\end{align*}
Now, it is a very simple exercise to show that for every element (not necessarily unitary) $A\in B(\mathcal H\otimes \mathcal K)$ with $\|A\|\leq 1$ we have both inequalities
\begin{align*}
\Big\|\sum_{i=1}^n\langle 1|A| i\rangle (\langle 1|A| i\rangle)^* \Big\|^{\frac{1}{2}}_{B(\mathcal K)}\leq 1 \text{     }\text{   and   }\text{     }
\Big\|\sum_{i=1}^n(\langle i|A| 1\rangle)^*\langle i|A| 1\rangle \Big\|^{\frac{1}{2}}_{B(\mathcal K)}\leq 1.
\end{align*}This completes the proof of the estimate $\frac{\omega_{qow}(G_C)}{\omega^*(G_C)}\geq n^2$.

Finally, we will show that $\frac{\omega_{qow}(G)}{\omega^*(G)}\leq n^2$ for every rank-one quantum game $G$ of dimension $n$. Note that, being $G$ an element of $S_1^n\otimes S_1^n$, we can write it as $G=\sum_{i,j=1}^n|i\rangle\langle j|\otimes G_{i,j}$, where $G_{i,j}\in S_1^n$ for every $i,j$.
Now, according to Theorem \ref{main- connection} we have
\begin{align*}
\omega_{qow}(G)^\frac{1}{2}=\sup\Big\{\Big\|tr_{AB}(V_{BA'}\otimes U_{AA'})G\Big\|_{B(\mathcal H_{A'})},
\end{align*}where the supremum runs over all unitaries $U_{AA'}\in B(\C^n\otimes \mathcal H_{A'})$ and $V_{BA'}\in B(\C^n\otimes \mathcal H_{A'})$. Note that for every unitary $U_{AA'}\in B(\C^n\otimes \mathcal H_{A'})$, we can write $U=\sum_{i,j=1}^n|i\rangle\langle j|\otimes U_{i,j}$ with $U_{i,j}\in B(\mathcal H_{A'})$ and such that
\begin{align*}
\Big\|\sum_{i,j=1}^n(U_{i,j})^*U_{i,j}\Big\|_{B(\mathcal H_{A'})}^\frac{1}{2}\leq \Big(\sum_{i=1}^n\Big\|\sum_{j=1}^n(U_{i,j})^*U_{i,j}\Big\|_{B(\mathcal H_{A'})}\Big)^\frac{1}{2}\leq \sqrt{n}.
\end{align*}Therefore, we can use Cauchy-Schwartz inequality to upper bound $\omega_{qow}(G)^\frac{1}{2}$ by
\begin{align*}
\sup\Big\{\Big\|\sum_{i,j=1}^n U_{i,j}\big(tr_{B}(V_{BA'}G_{i,j})\big)\Big\|_{B(\mathcal H_{A'})}\leq \sqrt{n}\Big\|\sum_{i,j=1}^n\big(tr_{B}(V_{BA'}G_{i,j})\big)\big(tr_{B}(V_{BA'}G_{i,j})\big)^*\Big\|^\frac{1}{2}.
\end{align*}Furthermore, by Cauchy-Schwartz inequality, we can upper bound the previous quantity by
\begin{align*}
\sup\Big\{\Big\|tr_{AB}(U_{AA''}\otimes V_{BA'})G\Big\|_{B(\mathcal H_{A''}\otimes \mathcal H_{A'})}\Big\},
\end{align*}where the supremum runs over all unitaries $V_{BA'}\in B(\C^n\otimes \mathcal H_{A'})$ and all matrices $U_{AA''}=\sum_{i,j=1}^n|i\rangle\langle j|\otimes \tilde{U}_{i,j}\in B(\C^n\otimes \mathcal H_{A''})$ such that
\begin{align*}
\Big\|\sum_{i,j=1}^n(\tilde{U}_{i,j})^*\tilde{U}_{i,j}\Big\|_{B(H_{A''})}^\frac{1}{2}\leq 1.
\end{align*}Hence, using that $\big\|\sum_{i,j=1}^n|i\rangle\langle j|\otimes A_{i,j}\big\|_{M_n(B(\mathcal H))}\leq \sqrt{n}\|\sum_{i,j=1}^nA_{i,j}^*A_{i,j}\|_{B(\mathcal H)}^\frac{1}{n}$, we conclude, by Theorem \ref{main- connection}, that
\begin{align*}
\omega_{qow}(G)^\frac{1}{2}\leq n\omega^*(G)^\frac{1}{2}.
\end{align*}
\end{proof}
We encourage the reader to do the corresponding computations for the second part of Theorem \ref{gaps}.
\section{Computing and approximation the different values of rank-one games}\label{Sec approx}

The main result of this section is
\begin{theorem}\label{main-theorem}
There exists an efficient (i.e. polynomial in the dimension of the Hilbert space associated to the questions and answers) algorithm to approximate the entangled value of a rank-one game $G$ within a factor of 4.
\end{theorem}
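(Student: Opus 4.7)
The plan is to combine Theorem \ref{main- connection} with the operator space Grothendieck inequality (Theorem \ref{Grothendieck II}) and then solve a semidefinite program. By part 1 of Theorem \ref{main- connection}, given the data $|\psi\rangle, |\gamma\rangle \in \mathcal H_A \otimes \mathcal H_B \otimes \mathcal H_C$ defining $G$, one first computes in polynomial time the matrix $M_{AB} = \operatorname{tr}_C |\psi\rangle\langle\gamma| \in S_1^n \otimes S_1^n$ (taking $n = \dim \mathcal H_A = \dim \mathcal H_B$ for notational simplicity), and then reduces to the problem of approximating the tensor norm $\|M_{AB}\|_{S_1^n \otimes_{min} S_1^n}$ to within a factor of $2$, since $\omega^*(G)$ is its square.

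The key point is that the operator space Grothendieck inequality already provides such an approximation by a quantity that is computable. Indeed, Theorem \ref{Grothendieck II} asserts
\[
\tfrac{1}{2} \|M_{AB}\|_\mu \, \leq \, \|M_{AB}\|_{S_1^n \otimes_{min} S_1^n} \, \leq \, \|M_{AB}\|_\mu,
\]
so after squaring, $\|M_{AB}\|_\mu^2$ approximates $\omega^*(G)$ to within the advertised factor $4$. So it suffices to compute $\|M_{AB}\|_\mu^2$ (or any constant-factor approximation thereof) in polynomial time.

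To do this I would use the primal description
\[
\|M\|_\mu = \inf\big\{ \|u\|_{S_1^n \otimes_h S_1^n} + \|v\|_{S_1^n \otimes_{h^t} S_1^n} \, : \, M = u + v \big\},
\]
and formulate the right-hand side as an SDP of size polynomial in $n$. Using (\ref{eq:Haagerup}), $\|u\|_{S_1^n \otimes_h S_1^n} \leq 1$ is equivalent to writing $u = \sum_i a_i \otimes b_i$ with $\|\sum_i |i\rangle \otimes a_i\|_{R_k \otimes_{min} S_1^n}\, \|\sum_i |i\rangle \otimes b_i\|_{C_k \otimes_{min} S_1^n} \leq 1$, and dually (using $(S_1^n)^* = M_n$ and the identification (\ref{min-operator norm})) these row/column norms are the completely bounded norms of explicit linear maps into $M_n$, which by the Smith-type characterization reduce to ordinary operator norms on $M_n \otimes M_n$. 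These in turn are standard linear matrix inequalities. An analogous SDP works for $\|v\|_{S_1^n \otimes_{h^t} S_1^n}$ by swapping the roles of $R_k$ and $C_k$. A compactness/rank argument caps the inner dimension $k$ at $O(n^2)$, so the overall SDP has polynomial size and can be solved to exponentially small additive accuracy by the ellipsoid or interior point method.

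The hardest step is the SDP formulation of Step~3: one has to translate the rather abstract Haagerup and transpose-Haagerup norms on $S_1^n \otimes S_1^n$ into a clean set of polynomial-size linear matrix inequalities, and bound the factorization dimension $k$, which is what actually delivers \emph{efficient} approximability (as opposed to mere computability in principle). Once this SDP is written down, everything else is routine: the Grothendieck inequality absorbs the factor-$2$ gap between the minimal and symmetrized Haagerup norms, and solving the SDP provides the value $\|M_{AB}\|_\mu^2$, which by the previous displayed inequalities is a factor-$4$ approximation to $\omega^*(G)$.
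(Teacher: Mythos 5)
Your first step coincides exactly with the paper's: apply Theorem \ref{Grothendieck II} to sandwich $\omega^*(G)=\N{M_{AB}}^2_{S_1\otimes_{min}S_1}$ between $\|M_{AB}\|_\mu^2$ and $4\|M_{AB}\|_\mu^2$, reducing the problem to an efficient computation of the symmetrized Haagerup norm. From there, however, you diverge, and the divergence is where the gap lies. You propose to compute $\|M\|_\mu$ from its primal description $\inf\{\|u\|_h+\|v\|_{h^t}:M=u+v\}$ by writing the Haagerup constraints as "standard linear matrix inequalities." But the constraint $\|u\|_{S_1^n\otimes_h S_1^n}\le t$ existentially quantifies over factorizations $u=\sum_i a_i\otimes b_i$ in which \emph{both} families $(a_i)$ and $(b_i)$ are variables; the relation $u=\sum_i a_i\otimes b_i$ is bilinear in these variables, and the row/column norm bounds on $S_1^n$ (unlike on $B(\He)$) are themselves cb-norm conditions rather than single operator inequalities. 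Turning all of this into a genuine polynomial-size SDP is precisely the nontrivial content --- it is what Regev and Vidick later supplied, and the paper explicitly credits them with thereby \emph{improving} this theorem. Asserting it as routine, as you do, leaves the central step unproven; your own remark that "the hardest step is the SDP formulation" is accurate, but the proposal never carries it out.

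The paper's proof avoids needing any direct SDP representation of $\|\cdot\|_\mu$. It uses the \emph{dual} description of $\|M\|_\mu$ as a supremum of $\langle M,A\rangle$ over the convex body $\{A:\max(\|A\|_{B(\He)\otimes_h B(\He)},\|A\|_{B(\He)\otimes_{h^t}B(\He)})\le 1\}$, and then invokes the Yudin--Nemirovskii / Gr\"otschel--Lov\'asz--Schrijver equivalence between optimization and membership for convex bodies: it suffices to \emph{evaluate} $\|A\|_{B(\He)\otimes_h B(\He)}$ efficiently. By Theorem \ref{HaagerupMnCB} this norm equals the completely bounded norm of an explicit map $B(\He)\to B(\He)$, which Watrous's semidefinite program (Theorem \ref{SDP Watrous}) computes in polynomial time. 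If you want to salvage your primal route, you must either carry out the Regev--Vidick style change of variables that makes the bilinear factorization constraints semidefinite-representable, or else fall back on the membership-oracle machinery as the paper does.
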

To prove that, the operator space Grothendieck inequality (Theorem \ref{Grothendieck II}) tells that
$$\|M_{AB}\|^2_\mu\le \omega^*(G)\le 4\|M_{AB}\|^2_\mu\; ,$$ where
$$\|M\|_\mu=\sup \Big\{\langle M,A \rangle: \max\big\{\|A\|_{B(\He_A)\otimes_hB(\He_B)},\|A^t\|_{B(\He_A)\otimes_hBB}\big\}\leq 1\Big\}.$$
We will show that $\|\cdot \|_\mu$ can be computed efficiently. By a classical result of Yudin and Nemirovskii relating membership and optimization problems in convex bodies (see \cite[Section 4]{GLS}), it is enough to show that one can efficiently compute $\|\cdot\|_{B(\He_A)\otimes_hB(\He_B)}$. Also, without loss of generality, we can assume $\He_A=\He_B$ and call it simply $\He$. The next step is to use that
\begin{theorem}\label{HaagerupMnCB}
 The map
\begin{align*}
\Delta: B(\He)\otimes_hB(\He)\rightarrow CB(B(\He),B(\He)),
\end{align*}defined by $\Delta(u)(A)=\sum_ix_iA y_i$ is an isometric isomorphism, where $u=\sum_ix_i\otimes y_i$ is any decomposition with $x_i,y_i\in\mathcal{B}(\He)$.
\end{theorem}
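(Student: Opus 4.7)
The plan is to verify that $\Delta$ is a well-defined linear bijection from $B(\He) \otimes B(\He)$ onto $CB(B(\He),B(\He))$ and then establish the isometric identity $\|\Delta(u)\|_{cb} = \|u\|_h$ via separate upper and lower estimates. Well-definedness of $\Delta(u)(A)=\sum_i x_i A y_i$ is clear because this is the canonical bilinear pairing, depending only on $u$ as an element of the algebraic tensor product.

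For the upper bound $\|\Delta(u)\|_{cb} \le \|u\|_h$, I would fix any representation $u=\sum_{i=1}^k x_i \otimes y_i$ and organize it as a block-matrix factorization. Writing $X=[x_1,\dots,x_k]:\He^k\to\He$ and $Y=[y_1;\dots;y_k]:\He\to\He^k$, one has $\Delta(u)(A)=X(A\otimes I_k)Y$, and submultiplicativity gives $\|\Delta(u)(A)\|\le \|X\|\|A\|\|Y\|=\|\sum x_i x_i^*\|^{1/2}\|A\|\|\sum y_i^*y_i\|^{1/2}$. Amplifying to $M_n(B(\He))$ tensors each $x_i$ and $y_i$ with $I_n$, which leaves $\|\sum x_i x_i^*\|$ and $\|\sum y_i^* y_i\|$ unchanged, so the same bound controls $\|\Delta(u)_n\|$ uniformly in $n$. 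Taking the infimum over decompositions and invoking the explicit description of $\|\cdot\|_h$ recalled around equations (2.2) and (2.9) yields $\|\Delta(u)\|_{cb}\le \|u\|_h$.

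For the reverse inequality, I would appeal to the representation theorem for completely bounded maps (Paulsen/Wittstock, based on Stinespring's dilation as in \cite{Paulsen}): any $T\in CB(B(\He),B(\He))$ factors as $T(A)=V\pi(A)W$ with $\pi$ a $*$-representation of $B(\He)$ on some auxiliary Hilbert space $\mathcal{K}$, $V\in B(\mathcal{K},\He)$, $W\in B(\He,\mathcal{K})$, and $\|V\|\|W\|=\|T\|_{cb}$. Since $B(\He)=M_N$ is simple with a unique irreducible representation, $\pi$ must be unitarily equivalent to an amplification $A\mapsto A\otimes I_m$ on $\mathcal{K}\cong \He\otimes \C^m$. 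Decomposing $V$ and $W$ as block row and column matrices $V=[V_1,\dots,V_m]$, $W=[W_1;\dots;W_m]$ with $V_j,W_j\in B(\He)$, the factorization becomes $T(A)=\sum_j V_j A W_j$, and $\|V\|^2=\|\sum V_j V_j^*\|$, $\|W\|^2=\|\sum W_j^*W_j\|$. Thus $u'=\sum_j V_j\otimes W_j\in B(\He)\otimes B(\He)$ satisfies $\Delta(u')=T$ and $\|u'\|_h\le \|V\|\|W\|=\|T\|_{cb}$. This gives surjectivity; injectivity of $\Delta$ is automatic because in finite dimensions the algebraic tensor product coincides with the space of bilinear forms. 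Hence $u'$ is the unique preimage of $\Delta(u)$ when $T=\Delta(u)$, and we conclude $\|u\|_h\le \|\Delta(u)\|_{cb}$.

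The main obstacle is invoking the Stinespring-type representation theorem for CB maps. In this finite-dimensional setting the theorem admits a reasonably elementary proof via Stinespring's dilation, after which uniqueness of the irreducible representation of $M_N$ reduces everything to the concrete block-matrix computation above; the entire machinery is in the standard references \cite{Paulsen}, \cite{Pisier}, \cite{EffrosRuan}.
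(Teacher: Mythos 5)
Your argument is correct and is essentially the standard proof of Haagerup's theorem: the upper bound $\|\Delta(u)\|_{cb}\le\|u\|_h$ via the row--column block factorization $X(A\otimes I_k)Y$, and the lower bound via the Haagerup--Paulsen--Wittstock representation theorem together with the fact that every representation of $M_N$ is a multiple of the identity representation. The paper does not reprove this result but defers to \cite[Theorem 5.12]{Pisier} and \cite[Theorem 11]{JKP}, and the proof given there follows exactly the route you describe.
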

A proof of this Theorem, originally due to Haagerup, can be found in  \cite[Theorem 5.12]{Pisier}\label{lem1} or \cite[Theorem 11]{JKP}. The proof of Theorem \ref{main-theorem} can now be finished by using the following result due to Watrous:
\begin{theorem}[\cite{Wat}, \cite{JKP}]\label{SDP Watrous}
The completely bounded norm of any linear map $T:B(\He)\rightarrow B(\He)$ can be computed efficiently in $\dim{\He}$. Moreover, such a norm can be expressed by a semidefinite program.
\end{theorem}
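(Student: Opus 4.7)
The plan is to derive an explicit semidefinite programming formulation for $\|T\|_{cb}$ and then invoke the standard fact that SDPs of size polynomial in their input can be solved to arbitrary precision in polynomial time via interior point methods. The starting point is the Choi matrix $J(T) = (id_\He \otimes T)(\ket{\Omega}\bra{\Omega}) \in B(\He \otimes \He)$, where $\ket{\Omega}=\sum_{i}\ket{ii}$ is the unnormalized maximally entangled vector; this encodes $T$ as a single $N^2 \times N^2$ matrix with $N = \dim \He$, computable in time polynomial in $N$ from a matrix representation of $T$. Combined with the identification $\|T\|_{cb} = \|T^*\|_{\diamondsuit}$ recalled in Section \ref{sec operator spaces} and Smith's theorem (which bounds the ancilla dimension needed to witness the cb-norm by $N$), the task reduces to computing a diamond norm of a completely bounded map with Choi matrix of polynomial size in $N$.

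The second step is to expand the diamond norm via the trace-duality $\|X\|_1 = \sup\{|\tr(YX)| \st \|Y\| \le 1\}$ together with the definition $\|\Phi\|_\diamondsuit = \sup\{\|(\Phi\otimes id)(\rho)\|_1 \st \rho\ge 0,\,\tr\rho=1\}$. This yields a bilinear maximization representation
\begin{equation*}
\|T\|_{cb} = \sup \big\{ |\tr((T \otimes id)(\rho)\, Y)| \st \rho \ge 0,\; \tr \rho = 1,\; \|Y\| \le 1 \big\},
\end{equation*}
in which every individual constraint is already an LMI (the spectral bound $\|Y\| \le 1$ is encoded via the Schur complement $\begin{pmatrix} I & Y \\ Y^* & I \end{pmatrix} \ge 0$), and the objective is linear after one rewrites $\tr((T\otimes id)(\rho)\,Y)$ as a contraction of $J(T)$ against the product variable $\rho\otimes Y^t$.

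The main obstacle is precisely this bilinearity in the objective, which blocks a direct SDP encoding: both factors $\rho$ and $Y$ are unknown. The standard resolution, due to Watrous, is to pass to the Lagrangian dual, where the bilinearity collapses into a single block-positivity constraint: $\|T\|_{cb}$ equals the optimal value of the SDP that minimizes $\frac{1}{2}\big(\|\tr_\He Z_0\| + \|\tr_\He Z_1\|\big)$ subject to $\begin{pmatrix} Z_0 & J(T) \\ J(T)^* & Z_1 \end{pmatrix} \ge 0$ and $Z_0, Z_1 \ge 0$. The two remaining operator norms in the objective are themselves encodable by auxiliary scalar variables via $\|A\| \le t$ iff $\begin{pmatrix} tI & A \\ A^* & tI \end{pmatrix}\ge 0$, producing a bona fide SDP of size polynomial in $N$. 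Strong duality, which I would verify through Slater's condition (a strictly feasible primal is obtained by perturbing with a multiple of the identity), guarantees that the two optimal values agree, and a standard interior point solver then returns an $\epsilon$-approximation of $\|T\|_{cb}$ in time polynomial in $N$ and $\log(1/\epsilon)$, yielding the claimed polynomial-time computability.
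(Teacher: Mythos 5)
First, a point of comparison: the paper does not prove this statement at all --- it is imported as a black box from \cite{Wat} and \cite{JKP} --- so what you have written is a reconstruction of the cited source (Watrous's SDP for the diamond norm) rather than an alternative to anything in the paper. Your endpoint is correct: the block SDP minimizing $\frac{1}{2}\big(\N{\tr_{\He}Z_0}+\N{\tr_{\He}Z_1}\big)$ subject to $\bigl(\begin{smallmatrix} Z_0 & J \\ J^* & Z_1\end{smallmatrix}\bigr)\geq 0$ with $J$ the Choi matrix is the right formulation, the reduction of the spectral-norm constraints to LMIs via Schur complements is standard, and polynomial-time solvability then follows from interior point methods.

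There is, however, a genuine slip in the middle of your derivation. The identity $\|\Phi\|_\diamondsuit=\sup\{\|(\Phi\otimes id)(\rho)\|_1 \st \rho\geq 0,\ \tr\rho=1\}$ holds only for Hermiticity-preserving maps, whereas the theorem concerns arbitrary linear $T:B(\He)\to B(\He)$. For a general map the supremum over density operators can be strictly smaller than the diamond norm: for $\Phi(X)=\langle 1|X|2\rangle\,|1\rangle\langle 1|$ on $M_2$ one has $\|\Phi\|_\diamondsuit=1$, attained at the rank-one non-Hermitian input $|u\rangle\langle v|$ with $u=|1\rangle|e\rangle$, $v=|2\rangle|e\rangle$, while the supremum over states is $\tfrac12$. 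Consequently the primal you wrote down does not have value $\|T\|_{cb}$ in general, and its Lagrangian dual is \emph{not} the two-block SDP you state; the two blocks $Z_0,Z_1$ exist precisely because the correct primal maximizes over rank-one operators $|u\rangle\langle v|$ with \emph{independent} unit vectors $u,v$ (equivalently over all $X$ with $\|X\|_1\leq 1$), whose two marginals $\tr_{\He}(|u\rangle\langle u|)$ and $\tr_{\He}(|v\rangle\langle v|)$ are the dual variables to $Z_0$ and $Z_1$. The fix is routine --- replace your primal by this two-vector formulation before dualizing --- but as written the strong-duality step connects a correct SDP to an incorrect value.
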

Since the entangled value of the game with one-way communication is given by the Haagerup tensor norm on $S_1^A\otimes S_1^B$, assuming again without loss of generality $\He_A=\He_B$ and using the self duality of the Haagerup tensor norm (\ref{duality haagerup}) one obtains from Theorem \ref{SDP Watrous} that

\begin{prop}
For any rank one game $G$, the value $\omega_{qow}(G)$ can be computed efficiently. The same happens (trivially) for the maximal value $V(G)$.
\end{prop}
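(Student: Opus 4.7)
The value $V(G)=\|M_{AB}\|_{S_1^{AB}}^2$ given by Theorem \ref{main- connection}(3) is simply the squared trace norm of an explicitly given matrix and is therefore computed in polynomial time from a singular value decomposition; this settles the $V$-part. The substantive content is the efficient computability of $\omega_{qow}(G)=\|M_{AB}\|_{S_1^A\otimes_h S_1^B}^2$, and my plan is to dualize the pipeline already assembled in the proof of Theorem \ref{main-theorem}: combine the self-duality of the Haagerup tensor norm with the $\Delta$-identification of Theorem \ref{HaagerupMnCB} and the semidefinite program of Theorem \ref{SDP Watrous}.

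Concretely, I would first invoke the self-duality (\ref{duality haagerup}) in the finite-dimensional setting, so that the trace pairing (\ref{Scalar Pairing}) yields the isometric identification $(B(\mathcal H_A)\otimes_h B(\mathcal H_B))^{*}=S_1^A\otimes_h S_1^B$. After padding so that $\mathcal H_A=\mathcal H_B=\mathcal H$ if necessary, Theorem \ref{HaagerupMnCB} identifies the predual space $B(\mathcal H)\otimes_h B(\mathcal H)$ isometrically with $CB(B(\mathcal H),B(\mathcal H))$, which means
\[
\|M_{AB}\|_{S_1^A\otimes_h S_1^B}=\sup\bigl\{|\langle M_{AB},\Delta^{-1}(T)\rangle|:T\in CB(B(\mathcal H),B(\mathcal H)),\ \|T\|_{cb}\le 1\bigr\}.
\]
Theorem \ref{SDP Watrous} provides a polynomial-size SDP computing the completely bounded norm; its dual SDP---equivalently, the Yudin-Nemirovskii equivalence of weak membership and weak optimization, applied as in the proof of Theorem \ref{main-theorem}---then solves the linear-functional maximization above in polynomial time, and squaring gives $\omega_{qow}(G)$.

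The only point I expect to require some care is the hypothesis of the Yudin-Nemirovskii reduction (or equivalently the strong duality of the dual SDP): one must check that the Haagerup unit ball in $B(\mathcal H)\otimes_h B(\mathcal H)$ is sandwiched between Euclidean balls of polynomially related radii, which is a routine consequence of crude comparisons between the Haagerup and operator norms on $B(\mathcal H\otimes\mathcal H)$ in the spirit of (\ref{distance R-C-M_N}). Apart from this verification the argument is a direct dualization of Theorem \ref{main-theorem}; notably, no constant factor appears here because the single Haagerup tensor norm is self-dual isometrically, unlike the symmetrized Haagerup norm appearing in Theorem \ref{Grothendieck II} which is only comparable to the minimal tensor norm up to a factor of $2$.
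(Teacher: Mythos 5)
Your proposal is correct and follows essentially the same route as the paper: the authors likewise reduce $\omega_{qow}(G)$ to the Haagerup norm of $M_{AB}$, use the self-duality (\ref{duality haagerup}) to pass to the unit ball of $B(\He)\otimes_h B(\He)$, identify that ball via Theorem \ref{HaagerupMnCB} with the unit ball of $CB(B(\He),B(\He))$ computable by Watrous's SDP (Theorem \ref{SDP Watrous}), and invoke the Yudin--Nemirovskii membership-to-optimization reduction exactly as in the proof of Theorem \ref{main-theorem}. The only addition in the paper is an optional, more explicit SDP for $\|M_{AB}\|_{S_1^A\otimes_h S_1^B}$ written directly from the definition (\ref{eq:Haagerup}), which your argument does not need.
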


Indeed, one can easily find an SDP for $\omega_{qow}(G)$.  One option for that is to rely on the reduction (see Section \ref{sec parallel repetition} below) to a four-round single prover quantum interactive proof, for which an SDP is known \cite{Gutoski}. In the light of the recent paper \cite{RegevVidick}, one can adopt a more direct approach:  using the self duality of the Haagerup tensor norm, together with its definition (\ref{eq:Haagerup}) and the definitions of the row and column operator spaces (\ref{CR structures}) one obtains directly that $\|M_{AB}\|_{S_1^A\otimes_hS_1^B}$ is given by
\begin{align*}
&\max{\langle M_{AB}, u\rangle}\quad  {\rm s.t.}\\
 &  \begin{array}{l} u= \sum_i A_i\otimes B_i\\
A_i\in B(\He_A), B_i \in B(\He_B) \; \; \forall i\\
 \sum_i A_i A_i^*\le id_{\He_A}\\
 \sum_i B_i^*B_i\le id_{\He_B}
 \end{array}
\end{align*}
It is not difficult to see, e.g. reasoning as in \cite{RegevVidick}, that this optimization program is an SDP. Indeed, this was the idea (albeit exchanging (\ref{eq:Haagerup}) with the characterization of $\|\cdot \|_\mu$ given in \cite[Theorem 5.18, (ii)]{Pisier}), that Regev and Vidick used  in \cite{RegevVidick} to improve our Theorem \ref{main-theorem} by showing an SDP for $\|\cdot \|_\mu$.
\section{Parallel repetition of rank-one quantum games}\label{sec parallel repetition}
In this section we will study the behavior of the different values of a rank-one quantum game with respect to the perfect parallel repetition theorem. In particular, the main theorem of this section shows that a perfect parallel repetition for the entangled value of rank-one quantum games dramatically fails (See Theorem \ref{Parallel entangled value}). First, note that if we have a rank-one quantum game $G$ with initial state $\ket{\psi}$ and final measurement $P=\ket{\gamma}\bra{\gamma}$ and we consider $\ket{\psi},\ket{\gamma} \in \mathcal H_A \otimes \mathcal H_B \otimes \mathcal H_C$ and $M_{AB}=tr_C \ket{\psi}\bra{\gamma}\in S_1^n\otimes S_1^n$, then, the parallel repetition of this game $k$ times is given by the rank-one quantum game $G^k$ defined by an initial state $\ket{\psi}^{\otimes_k}$ and final measurement $P^{\otimes_k}$. It is easy to see that this corresponds to the element $M_{AB}^k=tr_C \ket{\psi}^{\otimes_k}\bra{\gamma}^{\otimes_k}\in S_1^{n^k}\otimes S_1^{n^k}$ ob
 tained by the $k^{th}$ tensor product of $M_{AB}$.
Let us first start by showing that a perfect parallel repetition theorem holds for the maximal value and the entangled value with one-way communication of rank-one quantum games.
\subsection{A perfect parallel repetition for $V(G)$ and $\omega_{qow}(G)$}
\begin{theorem}\label{perfect parallel Haagerup}
\begin{enumerate}
\

\item[1.] The maximal value $V$ verifies a perfect parallel repetition theorem on rank-one quantum games.
\item[2.] The entangled value with one-way communication $\omega_{qow}$ verifies a perfect parallel repetition theorem on rank-one quantum games.
\end{enumerate}
\end{theorem}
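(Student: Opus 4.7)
By Theorem \ref{main- connection}, $V(G)=\|M_{AB}\|^2_{S_1^{AB}}$ and $\omega_{qow}(G)=\|M_{AB}\|^2_{S_1^A\otimes_h S_1^B}$, and the $k$-fold parallel repetition $G^k$ has associated tensor $M_{AB}^{\otimes k}$, now regarded as an element of $S_1^{A^k}\otimes S_1^{B^k}$ after reshuffling Alice's and Bob's registers. In both statements the inequality $\omega(G^k)\geq\omega(G)^k$ is automatic, since product strategies and product auxiliary states achieve it; thus it suffices to establish submultiplicativity of the relevant norm.

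For (1), multiplicativity of the trace norm is classical: the singular values of $X\otimes Y$ are the pairwise products of the singular values of $X$ and of $Y$, so $\|X\otimes Y\|_{S_1}=\|X\|_{S_1}\|Y\|_{S_1}$. Iterating gives $\|M_{AB}^{\otimes k}\|_{S_1^{(AB)^k}}=\|M_{AB}\|_{S_1^{AB}}^k$, hence $V(G^k)=V(G)^k$.

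For (2), the plan is to use the identification in (\ref{Row-Haagerup norm}), which rewrites the Haagerup norm as a row-factorization norm: if $\tilde M:M_A\to S_1^B$ is the linear map associated with $M_{AB}$, then $\|M_{AB}\|_{S_1^A\otimes_h S_1^B}=\Gamma_R(\tilde M)$. Choose an $\varepsilon$-optimal factorization $\tilde M=b\circ a$ with $a:M_A\to R_N$ and $b:R_N\to S_1^B$ completely bounded. The map associated with the parallel-repeated tensor $M_{AB}^{\otimes k}$ is the tensor product $\tilde M^{\otimes k}=b^{\otimes k}\circ a^{\otimes k}$, which factors through the middle space $R_N^{\otimes k}$. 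The key identifications are the following completely isometric equalities, all of which appear in Section \ref{sec operator spaces}: $M_A^{\otimes k}=M_{A^k}$ with its usual operator space structure, $R_N^{\otimes k}=R_{N^k}$ (iterate Remark \ref{tensor product structures I}), and $(S_1^B)^{\otimes k}=S_1^{B^k}$ with its canonical operator space structure (operator-space projective tensor product of $S_1$ spaces). The completely bounded norm is submultiplicative under minimal tensor products against row/column targets, because completely bounded maps into $R_N$ tensor to completely bounded maps into $R_{N^k}$ with control $\|a^{\otimes k}\|_{cb}\leq\|a\|_{cb}^k$, and symmetrically the row-source map $b$ satisfies $\|b^{\otimes k}\|_{cb}\leq\|b\|_{cb}^k$ by the metric mapping property (\ref{metric mapping property}) together with the fact that on row/column spaces the cb tensor norm coincides with the Haagerup tensor norm. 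Combining, $\Gamma_R(\tilde M^{\otimes k})\leq\Gamma_R(\tilde M)^k$, which is the desired submultiplicativity.

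The main obstacle will be checking rigorously the identifications between $(S_1^B)^{\otimes k}$ and $S_1^{B^k}$ at the operator space level, and verifying that the map associated with $M_{AB}^{\otimes k}$ is indeed $\tilde M^{\otimes k}$ under these identifications. This is where the operator space structure on the Haagerup tensor product (flagged at the end of Section \ref{sec operator spaces}) plays its role: the ``cb metric mapping property'' $\|T\otimes S:V\otimes_h W\to V'\otimes_h W'\|_{cb}\leq\|T\|_{cb}\|S\|_{cb}$ furnishes the clean abstract framework that avoids handling these identifications by hand, and is the natural substitute for the failure of an analogous multiplicativity at the level of the minimal tensor norm (which, as must be the case, will prevent the argument from extending to $\omega^*$ and hence is consistent with the failure of perfect parallel repetition proved later for $\omega^*$).
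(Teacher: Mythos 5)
Your proposal is correct and follows essentially the same route as the paper: part (1) via multiplicativity of singular values under tensor products, and part (2) by rewriting the Haagerup norm as the row-factorization norm $\Gamma_R$, taking an $\varepsilon$-optimal factorization $\tilde M=b\circ a$ through a row space, and invoking multiplicativity of cb norms of tensor powers (the paper's Proposition \ref{multiplicaivity}). The only point where the paper's write-up is slightly cleaner is the bound $\|b^{\otimes k}\|_{cb}\leq\|b\|_{cb}^k$ for $b:R_N\to S_1^B$, which the paper obtains by passing to the adjoint $b^*:M_B\to C_N$ (where both source and target are genuinely minimal tensor powers) rather than by identifying $(S_1^B)^{\otimes k}$ with $S_1^{B^k}$ directly.
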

As we will see the first part of the previous theorem is very easy, so the main part of Theorem \ref{perfect parallel Haagerup} is the statement on $\omega_{qow}$. In fact, we were kindly informed by Thomas Vidick and John Watrous that the entangled value with one-way communication of quantum games (not necessarily rank-one) can be seen as a special case of a single-prover $QIP(4)$. Indeed, given one of such a quantum games defined via an initial state $|\psi\rangle_{ABC}$ and final measurement $(V_0,V_1)$, we can convert it into a single-prover $QIP(4)$ as follows:
\begin{enumerate}
\item[1.] The verifier prepares $|\psi\rangle_{ABC}$, and sends the $A$ register to the single prover $P$.
\item[2.] $P$ sends back an answer register $A'$.
\item[3.] The verifier sets $A'$ aside, and sends the $B$ part of $|\psi\rangle_{ABC}$ to the prover.
\item[4.] The prover sends back an answer $B'$. The verifier measures $(A',B',C)$ using $(V_0,V_1)$.
\end{enumerate}
An elementary analysis shows that the value of this game coincides with the entangled value with one-way communication of the initial quantum game. With this reduction at hand, one could deduce a perfect parallel repetition theorem from \cite[Theorem 4.9]{Gutoski}. However, we will present here a very simple proof by using standard results from operator space theory. We believe our proof illustrates the power of the operator spaces techniques and that the connection to the Haagerup tensor norm might have further applications in some other related problems. In order to present our simple proof, let us introduce the minimal tensor product in its ``complete form'' (that is, as an operator space). Suppose that $V$ and $W$ are operator spaces and that $u =\sum_{k=1}^l a_k\otimes v_k \otimes w_k\in M_n(V\otimes W)$, then the \emph{minimal tensor product} norm of $u$ is
\begin{align}\label{minimal operator space}
\|u\|_{M_n(V\otimes_{min}W)}=\sup \Big\{\Big \|\sum_{k=1}^l a_k \otimes T(v_k) \otimes S(w_k)\Big \|_{B(\C^n\otimes \mathcal H_V \otimes \mathcal H_W)}
\Big\},
\end{align}
where the supremum is taken over all finite dimensional Hilbert spaces $\mathcal H_V$ and $\mathcal H_W$ and complete contractions $T:V \rightarrow B(\mathcal H_V)$ and $S:W\rightarrow B(\mathcal H_W)$. Equation (\ref{minimal operator space}) defines an operator space structure on the minimal tensor product and it is not difficult to see that Equation (\ref{metric mapping property}) becomes
\begin{align*}
\big\|T\otimes S:V\otimes_{min} W\rightarrow V'\otimes_{min} W'\big\|_{cb}=\|T\|_{cb}\|S\|_{cb}.
\end{align*}The following proposition, which will be the key point for the proof of Theorem \ref{perfect parallel Haagerup}, is a direct consequence of the previous estimate.
\begin{prop}\label{multiplicaivity}
Let $T:\C^{n^2}\rightarrow \C^l$ be a linear map. Then,
\begin{align*}
\big\|T^{\otimes_k}:M_{n^k}\rightarrow R_{l^k}\big\|_{cb}=\|T:M_n\rightarrow R_l\|^k_{cb},
\end{align*}where we have just used that $M_{n^k}=\otimes^k_{min} M_n$ and $R_{l^k}=\otimes^k_{min} R_l$ as operator spaces\footnote{Note that the first assertion is trivial and the second one is the same as Remark \ref{tensor product structures I}.}; and the same happens if we replace $R_l$ by $C_l$. Furthermore, by duality we have the same multiplicativity property for maps $T:R_l\rightarrow S_1^n$ and $T:C_l\rightarrow S_1^n$.
\end{prop}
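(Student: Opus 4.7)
The plan is to reduce the statement to a one-line induction using the multiplicativity of the completely bounded norm under the minimal tensor product that was stated just above the proposition, namely the identity
\[
\|A\otimes B: V\otimes_{min} W \to V'\otimes_{min} W'\|_{cb} \;=\; \|A\|_{cb}\,\|B\|_{cb}.
\]
The two ingredients we need in addition to this are the two operator space identifications $M_{n^k}=\bigotimes^k_{min} M_n$ (standard for $C^*$-algebras, since $M_n\otimes_{min} M_m=M_{nm}$) and $R_{l^k}=\bigotimes^k_{min} R_l$ (an iteration of Remark \ref{tensor product structures I}), and analogously with $C_l$ replacing $R_l$.

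First I would handle the case $T:M_n\to R_l$. Under the identifications above, the linear map $T^{\otimes k}$ is precisely the $k$-fold minimal tensor product of $T$ with itself, viewed as a map $\bigotimes^k_{min} M_n \to \bigotimes^k_{min} R_l$. Writing $T^{\otimes k}=T\otimes T^{\otimes (k-1)}$ and invoking the displayed equality, one gets
\[
\|T^{\otimes k}\|_{cb} \;=\; \|T\|_{cb}\,\|T^{\otimes (k-1)}\|_{cb},
\]
and a trivial induction yields $\|T^{\otimes k}\|_{cb}=\|T\|_{cb}^{k}$. The case $T:M_n\to C_l$ is identical, with the Column identification $C_{l^k}=\bigotimes^k_{min}C_l$ in place of the Row one.

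For the dual statements, concerning $T:R_l\to S_1^n$ or $T:C_l\to S_1^n$, the idea is to push everything to the adjoint side where the first case applies. Using $\|S\|_{cb}=\|S^*\|_{cb}$ for every linear map between finite-dimensional operator spaces together with the complete isometric dualities $(S_1^n)^*=M_n$, $R_l^*=C_l$ and $C_l^*=R_l$ recalled in Section \ref{sec operator spaces}, the adjoint of $T^{\otimes k}:R_{l^k}\to S_1^{n^k}$ is $(T^*)^{\otimes k}:M_{n^k}\to C_{l^k}$ under the standard trace pairing. Hence
\[
\|T^{\otimes k}:R_{l^k}\to S_1^{n^k}\|_{cb}
\;=\; \|(T^*)^{\otimes k}:M_{n^k}\to C_{l^k}\|_{cb}
\;=\; \|T^*\|_{cb}^{k}
\;=\; \|T\|_{cb}^{k},
\]
where the middle equality is the case we already handled. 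The analogous argument with $C_l$ in place of $R_l$ takes care of the last remaining case.

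The only thing one needs to be careful about is the identification of $(T^{\otimes k})^{*}$ with $(T^*)^{\otimes k}$ at the level of the relevant operator space duals; this is routine because on each factor the pairings used to identify $M_n$ with $(S_1^n)^*$ and $C_l$ with $R_l^*$ are given by the same trace/scalar pairing of (\ref{Scalar Pairing}), which tensorizes. I do not anticipate a genuine obstacle here: once the multiplicativity of $\|\cdot\|_{cb}$ under $\otimes_{min}$ and the listed operator space identifications are in hand, the whole proposition is a short induction plus a duality argument.
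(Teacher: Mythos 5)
Your proposal is correct and follows exactly the route the paper intends: the paper does not write out a proof but states that the proposition ``is a direct consequence'' of the multiplicativity $\|T\otimes S\|_{cb}=\|T\|_{cb}\|S\|_{cb}$ for $\otimes_{min}$ together with the identifications $M_{n^k}=\otimes^k_{min}M_n$, $R_{l^k}=\otimes^k_{min}R_l$ and the duality $\|S^*\|_{cb}=\|S\|_{cb}$ with $(S_1^n)^*=M_n$, $R_l^*=C_l$. You have simply made the induction and the adjoint bookkeeping explicit, which is precisely the ``simple exercise'' the authors defer.
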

We defined the operator space structure of the minimal tensor product so that we could state Proposition \ref{multiplicaivity} as a known result. However, it is a simple exercise to prove Proposition \ref{multiplicaivity} using the definitions given in Section \ref{sec operator spaces}. Since the proof of Theorem \ref{perfect parallel Haagerup} also implicitly uses the equivalence between the two definitions for the Haagerup tensor product explained in Section \ref{sec operator spaces}, which has not been proved in this paper, we will not care about using Proposition \ref{multiplicaivity} as a known result.
\begin{proof}[Proof of Theorem \ref{perfect parallel Haagerup}]
To show the first assertion, let us consider a game $G\in S_1(\mathcal H_A\otimes \mathcal H_B)$. Then, we can write $G=\sum_{i=1}^N\alpha_i|f_i\rangle\langle g_i|$, where $(f_i)_{i=1}^N$ and $(g_i)_{i=1}^N$ are orthonormal bases of $\mathcal H_A \otimes \mathcal H_B$ and $(\alpha_i)_{i=1}^N$ is a sequence of positive real numbers verifying $\sum_{i=1}^N\alpha_i= V(G)$. Therefore, we know that  $G^{\otimes_k}=\sum_{i_1,\cdots, i_k=1}^N\alpha_{i_1}\cdots \alpha_{i_k}|f_{i_1}\rangle\langle g_{i_1}|\otimes\cdots \otimes |f_{i_k}\rangle\langle g_{i_k}|$. However, we can realize this element as $G^{\otimes_k}=\sum_{i_1,\cdots, i_k=1}^N\alpha_{i_1}\cdots \alpha_{i_k}|f_{i_1}\otimes \cdots \otimes f_{i_k}\rangle\langle g_{i_1}\otimes \cdots \otimes g_{i_k}|$, so
\begin{align*}
V(G^{\otimes_k})=\sum_{i_1,\cdots, i_k=1}^N\alpha_{i_1}\cdots \alpha_{i_k}=V(G)^k.
\end{align*}
In order to prove the second part of the theorem, we must show that $\omega_{qow}(G^{\otimes k}) = \omega_{qow}(G)^k$. Since the inequality $\omega(G^{\otimes k}) \geq \omega(G)^k$ is trivial we have to focus on the other one. We recall that Theorem \ref{main- connection} tells us that $\omega_{qow}(G)^{1/2}=\|G\|_{S_1^n \otimes_h S_1^n}$. On the other hand, if the game $G$ corresponds to the linear map $\tilde G:M_n \to S_1^n$, then the game $G^{\otimes k}$ will correspond to the map $\tilde G^{\otimes k}: M_{n^k} \to S_1^{n^k}$.
By Equation (\ref{Row-Haagerup norm}), we know that
\begin{align*}
\|G\|_{S_1^n \otimes_h S_1^n}=\Gamma_R(\tilde G) \text{      }\text{   and  }
\|G^{\otimes k}\|_{S_1^{n^k} \otimes_h S_1^{n^k}}=\Gamma_R(\tilde G^{\otimes k}).
\end{align*}The first inequality tells us that for every $\epsilon> 0$ there exist two linear maps $a:M_n\rightarrow R_l$, $b:R_l\rightarrow S_1^n$ for a certain natural number $l$ such that $\tilde G=b\circ a$ and $\|G\|_{S_1^n \otimes_h S_1^n}\leq \|a\|_{cb}\|b\|_{cb}+\epsilon$. On the other hand, by Proposition \ref{multiplicaivity} we know that the maps $a^{\otimes k}:M_{n^k}\rightarrow R_{l^k}$ and $b^{\otimes k}:R_{l^k}\rightarrow S_1^{n^k}$ verifies $\|a^{\otimes k}\|_{cb}\|b^{\otimes k}\|_{cb}= \|a\|^k_{cb}\|b\|^k_{cb}$. Since $\tilde G^{\otimes k}=b^{\otimes k}\circ a^{\otimes k}$ and this happens for every $\epsilon> 0$, we conclude that $\|G^{\otimes k}\|_{S_1^{n^k} \otimes_h S_1^{n^k}}\leq \|G\|^k_{S_1^n \otimes_h S_1^n}$ as we wanted.
\end{proof}
\begin{remark}
After establishing the isometric identification $CB(M_n,M_n)=M_n\otimes_h M_n$ (Theorem \ref{HaagerupMnCB}), we could invoke the multiplicativity of the diamond norm (\cite{KW}), so the completely bounded norm, to state a similar result for the elements in $M_n\otimes_h M_n$. On the other hand, the entangled value with one-way communication of a rank one quantum game is defined by the corresponding norm on $S_1^n\otimes_h S_1^n=(M_n\otimes_h M_n)^*$. One could then try to prove the multiplicativity of the norm $S_1^n\otimes_h S_1^n$ (so perfect parallel repetition for the entangled value with one-way communication of a rank-one quantum game) by connecting these two ideas. However, we did not find any direct argument for this.
\end{remark}
\subsection{No perfect parallel repetition theorem for the entangled value of rank-one quantum games}
Let us consider the following rank-one quantum game:
\begin{align*}
G_{C+R}=\frac{1}{2}(G_C+G_R)=tr_C(|\psi\rangle\langle\gamma|)=\frac{1}{2n}\sum_{i=1}^n\Big(|i\rangle\langle1|\otimes |1\rangle\langle i|+ |1\rangle\langle i|\otimes |i\rangle\langle 1|\Big),
\end{align*}where $|\psi\rangle=\frac{1}{\sqrt{2n}}\sum_{i=1}^n\Big(|i1\rangle|i1\rangle+|1i\rangle|i2\rangle\Big)$ and
$|\gamma\rangle=\frac{1}{\sqrt{2n}}\sum_{i=1}^n\Big(|1i\rangle|i1\rangle+|i1\rangle|i2\rangle\Big)$.

The main result of this section states as follows.
\begin{theorem}\label{Parallel entangled value}
The following equalities hold:
\begin{align*}
\omega^*(G_{C+R})=\frac{1}{n^2}, \text{     }{     }\omega^*(G_{C+R}^2)=\frac{1}{4n^2}(1+\frac{1}{n})^2.
\end{align*}In particular,
\begin{align*}
\frac{\omega^*(G_{C+R}^2)}{(\omega^*(G_{C+R}))^2}\succeq n^2,
\end{align*}where $\succeq$ inequality up to a universal constant independent on $n$.
\end{theorem}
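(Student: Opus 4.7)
By Theorem~\ref{main- connection}, the two claims reduce to the minimal tensor norm computations $\|G_{C+R}\|_{S_1^n\otimes_{\min}S_1^n}=\tfrac{1}{n}$ and $\|G_{C+R}^{\otimes 2}\|_{S_1^{n^2}\otimes_{\min}S_1^{n^2}}=\tfrac{n+1}{2n^2}$. For the single-copy identity, I would invoke from Section~\ref{Section: three models} that $G_C\in R_n\otimes C_n$ and $G_R\in C_n\otimes R_n$ (seen as subspaces of $S_1^n\otimes S_1^n$) both identify via (\ref{min-operator norm}) with $\tfrac{1}{n}$ times an identity map and hence have min-norm $1/n$. Triangle inequality gives $\|G_{C+R}\|_{\min}\le 1/n$, while the trivial no-entanglement strategy $U_A=V_B=I_{\mathbb C^n}$ realizes $|\mathrm{tr}(G_{C+R})|=\tfrac{1}{2}(\mathrm{tr}(G_C)+\mathrm{tr}(G_R))=\tfrac{1}{n}$, giving the matching lower bound.

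For the two-copy identity, expand
\[
G_{C+R}^{\otimes 2}=\tfrac14\bigl(G_C\otimes G_C+G_C\otimes G_R+G_R\otimes G_C+G_R\otimes G_R\bigr),
\]
and analyze each piece separately. The two \emph{diagonal} terms live in $R_{n^2}\otimes C_{n^2}$ and $C_{n^2}\otimes R_{n^2}$ as subspaces of $S_1^{n^2}\otimes S_1^{n^2}$ (Remark~\ref{tensor product structures II}, with completely contractive projections), and each identifies under (\ref{min-operator norm}) with $\tfrac{1}{n^2}$ times an identity map, hence has min-norm $1/n^2$. The two \emph{cross} terms live inside completely contractively embedded copies of $S_1^n\otimes S_1^n\subset S_1^{n^2}\otimes S_1^{n^2}$ (Remark~\ref{tensor product structures II}, since on Alice's side $R_n\otimes C_n$ carries the $S_1^n$ structure, and analogously on Bob's side). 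Under the natural identifications $|i,1\rangle\langle 1,j|\leftrightarrow|i\rangle\langle j|$ on Alice and $|1,j\rangle\langle i,1|\leftrightarrow|j\rangle\langle i|$ on Bob, $G_C\otimes G_R$ becomes $\tfrac{1}{n^2}F$, where $F=\sum_{i,j}|i\rangle\langle j|\otimes|j\rangle\langle i|$ is the flip on $\mathbb C^n\otimes\mathbb C^n$. The map $\widetilde F:M_n\to S_1^n$ associated to $F$ via (\ref{min-operator norm}) is precisely the transpose, which has cb-norm equal to $n$; hence each cross term has min-norm $1/n$. Triangle inequality then yields $\|G_{C+R}^{\otimes 2}\|_{\min}\le\tfrac{1}{4}\bigl(\tfrac{2}{n^2}+\tfrac{2}{n}\bigr)=\tfrac{n+1}{2n^2}$.

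The matching lower bound is attained by a single no-entanglement strategy: take $U=V=F$ (the flip on $\mathbb C^n\otimes \mathbb C^n$) acting on both players. Using $F|i,1\rangle=|1,i\rangle$ and $F|1,j\rangle=|j,1\rangle$, a direct evaluation of the four traces yields respective contributions $1/n^2$, $1/n$, $1/n$, $1/n^2$, summing to exactly $\tfrac{n+1}{2n^2}$, so $\omega^*(G_{C+R}^{\otimes 2})\ge\bigl(\tfrac{n+1}{2n^2}\bigr)^2$. The ratio assertion follows from $\omega^*(G_{C+R}^2)/\omega^*(G_{C+R})^2=(n+1)^2/4\succeq n^2$.

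The main obstacle is the cross-term analysis, specifically the identification of $G_C\otimes G_R$ with $\tfrac{1}{n^2}F$ inside a completely contractively embedded copy of $S_1^n\otimes_{\min}S_1^n$, and the computation that $\|F\|_{\min}=n$. Conceptually, this abnormally large min-norm of $F$ (rather than $1$, as its $S_1^n\otimes S_1^n$ structure might naively suggest) is the mechanism that breaks perfect parallel repetition: the flip is essentially ``classical'' in that its min-norm is already attained by a product-unitary strategy $V=U^*$, so the parallel-repeated game inherits through its cross terms classical-like structures that a single global swap can exploit, whereas no such structure was available in a single copy.
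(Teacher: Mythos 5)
Your proposal is correct and follows essentially the same route as the paper: the single-copy value via the triangle inequality plus the trivial strategy, the two-copy upper bound via the four-term expansion with the diagonal terms contributing $1/n^2$ and the cross terms $1/n$ (the paper presents exactly this operator-space computation as motivation and then re-derives the same bounds with explicit unitaries and Cauchy--Schwarz), and the two-copy lower bound via the same unentangled global-flip strategy. Your identification of the cross term with $\tfrac{1}{n^2}$ times the flip, whose associated map $M_n\to S_1^n$ has cb-norm $n$, matches the paper's $\|id:M_n\to S_1^n\|_{cb}=n$ up to an immaterial choice of identification of $C_n\otimes R_n$ with $S_1^n$.
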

Let us first explain our motivation to consider this game and we will show the precise calculations later.
Since our game is just a convex combination of the elements considered in Section \ref{gaps}, we have
\begin{align*}
\|G_{C+R}\|_{S_1^n\otimes_{min}S_1^n}\leq \frac{1}{2}\Big(\|G_{C}\|_{S_1^n\otimes_{min}S_1^n}+\|G_{R}\|_{S_1^n\otimes_{min}S_1^n}\Big)=\frac{1}{n}.
 \end{align*}According to Theorem \ref{main- connection} this implies that $\omega^*(G_{C+R})\leq \frac{1}{n^2}$.
In fact, it is very easy to see that we have an equality. On the other hand, the tensor product of this game with itself expands in four different terms:
\begin{align*}
G_{C+R}^2= G_C^2+ G_C\otimes G_R+G_R\otimes G_C + G_R^2.
\end{align*}We need to compute the minimal norm of this tensor as an element in $S_1^{n^2}\otimes_{min} S_1^{n^2}$.

The game $G_C$ can be understood as the element $\frac{1}{n}\sum_{i=1}^n|ii\rangle\in R_n\otimes C_n\subset S_1^n\otimes S_1^n$. Then, in order to realize the tensor product in the suitable space we will obtain
\begin{align*}
G_C^2=\frac{1}{n^2}\sum_{i,j=1}^n|ij\rangle\otimes |ij\rangle\in (R_n\otimes R_n)\otimes (C_n\otimes C_n)\subset S_1^{n^2}\otimes S_1^{n^2}.
\end{align*}
According to Remark \ref{tensor product structures II} we have
\begin{align*}
\|G_C^2\|_{S_1^{n^2}\otimes_{min} S_1^{n^2}}=\frac{1}{n^2}\|\sum_{i,j=1}^n|ij\rangle\otimes |ij\rangle\|_{R_{n^2}\otimes_{min}C_{n^2}}=
\frac{1}{n^2}\|id:C_{n^2}\rightarrow C_{n^2}\|_{cb}=\frac{1}{n^2}.
\end{align*}The same argument applies to $G_R^2$.

On the other hand, if we look at the term $G_C\otimes G_R$ we see that we must compute the minimal norm of the element \begin{align*}
G_C\otimes G_R=\frac{1}{n^2}\sum_{i,j=1}^n|ij\rangle\otimes |ij\rangle\in (R_n\otimes C_n)\otimes (C_n\otimes R_n)\subset S_1^{n^2}\otimes S_1^{n^2}.
\end{align*}According again to Remark \ref{tensor product structures II} we have
\begin{align*}
\|G_C\otimes G_R\|_{S_1^{n^2}\otimes S_1^{n^2}}=\frac{1}{n^2}\|\sum_{i,j=1}^n|ij\rangle\otimes |ij\rangle\|_{S_1^n\otimes_{min}S_1^n}\\=
\frac{1}{n^2}\|id:M_n\rightarrow S_1^n\|_{cb}= \frac{1}{n^2}\|id:M_n\rightarrow S_1^n\|=\frac{1}{n},
\end{align*}and analogously for $G_R\otimes G_C$.

Of course, we must check that there are not cancelations between the four previous terms and this will be done below by showing the exact calculations. On the other hand, the fact that the last lower bound is attained for the norm (rather than the completely bounded norm) tells us that the corresponding strategy will not need any extra shared entanglement between Alice and Bob. Let us now present a detailed proof of Theorem \ref{Parallel entangled value}.
\begin{proof}
We have already seen the upper bound $\omega^*(G_{C+R})\leq \frac{1}{n^2}$. On the other hand, this value for the game can be obtained by following the trivial strategy. That is, Alice and Bob do not do anything on their particles! Indeed, in this case the winning probability is given by
\begin{align*}
\langle \gamma| \psi\rangle^2=\frac{1}{n^2}.
\end{align*}
Let us now show the estimate $\omega^*(G_{C+R}^2)\geq \big(\frac{1}{2n}(1+\frac{1}{n})\big)^2$. In fact, as we mentioned before, we should expect to obtain a strategy with no extra entanglement shared by Alice and Bob. Such a strategy is very simple once we look at the states which define the game $G_{C+R}^2$. Indeed, note that such a game is defined by
\begin{align*}
|\psi\rangle^{\otimes_2}=\frac{1}{2n}\sum_{i,j=1}^n\Big(|ij\rangle_A|11\rangle_B|ij11\rangle_C+
|i1\rangle_A|1j\rangle_B|ij12\rangle_C\\+ |1j\rangle_A|i1\rangle_B|ij21\rangle_C+ |11\rangle_A|ij\rangle_B|ij22\rangle_C\Big),
\end{align*}and
\begin{align*}
|\gamma\rangle^{\otimes_2}=\frac{1}{2n}\sum_{i,j=1}^n\Big(|11\rangle_A|ij\rangle_B|ij11\rangle_C+
|1j\rangle_A|i1\rangle_B|ij12\rangle_C\\+ |i1\rangle_A|1j\rangle_B|ij21\rangle_C+ |ij\rangle_A|11\rangle_B|ij22\rangle_C\Big).
\end{align*}
Therefore, it is trivial to check that if we consider the strategy in which Alice and Bob apply a flip on their corresponding parts of $|\psi\rangle^{\otimes_2}$ (with no extra shared entanglement) we obtain the state
\begin{align*}
|\psi'\rangle=\frac{1}{2n}\sum_{i,j=1}^n\Big(|ji\rangle_A|11\rangle_B|ij11\rangle_C+
|1i\rangle_A|j1\rangle_B|ij12\rangle_C\\+ |j1\rangle_A|1i\rangle_B|ij21\rangle_C+ |11\rangle_A|ji\rangle_B|ij22\rangle_C\Big),
\end{align*}which verifies
\begin{align*}
\langle\gamma^{\otimes_2}|\psi'\rangle^2=\Big(\frac{1}{4n^2}(2+2n)\Big)^2=\frac{1}{4n^2}(1+\frac{1}{n})^2.
\end{align*}
In order to finish the proof of Theorem \ref{Parallel entangled value}, we will show that the previous lower bound is optimal. Since
\begin{align*}
G_{C+R}^2=\frac{1}{4n^2}\sum_{i,j=}^n\Big(|ij\rangle\langle 11|\otimes |11\rangle\langle ij|+ |i1\rangle\langle 1j|\otimes |1j\rangle\langle i1|\\+ |1j\rangle\langle i1|\otimes |i1\rangle\langle 1j|+|11\rangle\langle ij|\otimes |ij\rangle\langle 11|\Big),
\end{align*}according to Theorem \ref{main- connection} we have that $\omega^*(G_{C+R}^2)^\frac{1}{2}$ is equal to
\begin{align*}
\frac{1}{4n^2}\sup\Big\{\Big\|\sum_{i,j=1}^n\Big(\langle 11|U_{AA'}|ij\rangle\otimes \langle ij|V_{BB'}|11\rangle + \langle 1j|U_{AA'}|i1\rangle\otimes \langle i1|V_{BB'}|1j\rangle\\+ \langle i1|U_{AA'}|1j\rangle\otimes \langle 1j|V_{BB'}|i1\rangle+ \langle jj|U_{AA'}|11\rangle\otimes \langle 11|V_{BB'}|ij\rangle\Big)\Big\|_{B(\mathcal H_{A'}\otimes \mathcal H_{B'})}\Big\};
\end{align*}which is upper bounded by
\begin{align}\label{four temrs}
\frac{1}{4n^2}\Big[\sup\Big\{\Big\|\sum_{i,j=1}^n\langle 11|U_{AA'}|ij\rangle\otimes \langle ij|V_{BB'}|11\rangle\Big\|_{B(\mathcal H_{A'}\otimes \mathcal H_{B'})}\Big\} \\\nonumber+\sup\Big\{\Big\|\sum_{i,j=1}^n\langle 1j|U_{AA'}|i1\rangle\otimes \langle i1|V_{BB'}|1j\rangle\Big\|_{B(\mathcal H_{A'}\otimes \mathcal H_{B'})}\Big\}\\ \nonumber+\sup\Big\{\Big\|\sum_{i,j=1}^n\langle i1|U_{AA'}|1j\rangle\otimes \langle 1j|V_{BB'}|i1\rangle\Big\|_{B(\mathcal H_{A'}\otimes \mathcal H_{B'})}\Big\}\\\nonumber+
\sup\Big\{\Big\|\sum_{i,j=1}^n\langle jj|U_{AA'}|11\rangle\otimes \langle 11|V_{BB'}|ij\rangle\Big\|_{B(\mathcal H_{A'}\otimes \mathcal H_{B'})}\Big\}\Big].
\end{align}Here, the supremum is taken over unitaries $U_{AA'}\in B(\C^{n^2}\otimes \mathcal H_{A'})$ and $U_{BB'}\in B(\C^{n^2}\otimes \mathcal H_{B'})$ in all cases.
Now, note that for every such unitaries Cauchy-Schwarts inequality tells that $\Big\|\sum_{i,j=1}^n\langle 11|U_{AA'}|ij\rangle\otimes \langle ij|V_{BB'}|11\rangle\Big\|_{B(\mathcal H_{A'}\otimes \mathcal H_{B'})}$ is upper bounded by
\begin{align*}
\Big\|\sum_{i,j=1}^n\langle 11|U_{AA'}|ij\rangle(\langle 11|U_{AA'}|ij\rangle)^*\Big\|^\frac{1}{2}_{B(\mathcal H_{A'})}\Big\|\sum_{i,j=1}^n(\langle ij|V_{BB'}|11\rangle)^*\langle ij|V_{BB'}|11\rangle\Big\|^\frac{1}{2}_{B(\mathcal H_{B'})}\leq 1,
\end{align*}and the same argument works for the last term in (\ref{four temrs}).
On the other hand, for every such unitaries the quantity $\Big\|\sum_{i,j=1}^n\langle 1j|U_{AA'}|i1\rangle\otimes \langle i1|V_{BB'}|1j\rangle\Big\|_{B(H_{A'}\otimes H_{B'})}$ is upper bounded by
\begin{align*}
\sum_{i=1}^n\Big\|\sum_{j=1}^n\langle 1j|U_{AA'}|i1\rangle\otimes \langle i1|V_{BB'}|1j\rangle\Big\|_{B(\mathcal H_{A'}\otimes \mathcal H_{B'})}\leq n,
\end{align*}and the same argument works for the third term in the sum above. Therefore, we have that
\begin{align*}
\omega^*(G_{C+R}^2)^\frac{1}{2}\leq \frac{1}{4n^2}(2+ 2n)=\frac{1}{2n}(1+ \frac{1}{n}),
\end{align*}which concludes the proof.
\end{proof}
Finally, the following easy lemma completes the information about the game $G_{C+R}$.
\begin{lemma}
\

\begin{enumerate}
\item[1.] $V(G_{C+R})=V(G_{C+R}^2)=1$.
\item[2.] $\omega_{qow}(G_{C+R})=\frac{1}{4}\big(1+ \frac{1}{n}\big)^2$; $\omega_{qow}(G_{C+R}^2)=\frac{1}{16}\big(1+ \frac{1}{n}\big)^4$.
\end{enumerate}
\end{lemma}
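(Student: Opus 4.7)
The plan is to reduce the four identities to the two statements $V(G_{C+R})=1$ and $\omega_{qow}(G_{C+R}) = \frac{1}{4}(1+1/n)^2$, after which the claims for $G_{C+R}^2$ follow immediately from the perfect parallel repetition theorems established in Theorem \ref{perfect parallel Haagerup}, since both $V$ and $\omega_{qow}$ are multiplicative under tensor products of rank-one games.

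For the first equality, I would invoke Lemma \ref{proj norm equal one}. Writing out $|\psi\rangle$ and $|\gamma\rangle$, one sees that the branches $|i1\rangle_{AB}|i1\rangle_C$ and $|1i\rangle_{AB}|i2\rangle_C$ appearing in $|\psi\rangle$ are obtained from the corresponding branches $|1i\rangle_{AB}|i1\rangle_C$ and $|i1\rangle_{AB}|i2\rangle_C$ of $|\gamma\rangle$ by applying the SWAP gate on $\h_A\otimes\h_B$, and the amplitudes are identical in both states. Lemma \ref{proj norm equal one} then gives $\|M_{AB}\|_{S_1(\h_A\otimes\h_B)}=1$, so $V(G_{C+R})=\|M_{AB}\|^2_{S_1^{AB}}=1$ by Theorem \ref{main- connection}(3).

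For the upper bound $\omega_{qow}(G_{C+R}) \leq \frac{1}{4}(1+1/n)^2$, I would use that $G_{C+R}=\frac{1}{2}(G_C+G_R)$, apply the triangle inequality in the Haagerup norm, and plug in the values $\|G_C\|_{S_1^n\otimes_h S_1^n}=1$ and $\|G_R\|_{S_1^n\otimes_h S_1^n}=1/n$ already computed in Section \ref{Section: three models} (via the identifications $G_C \in R_n\otimes C_n$ and $G_R \in C_n\otimes R_n$ together with the completely complemented, completely isometric embeddings (\ref{CR in S_1})). This yields $\|G_{C+R}\|_h \leq \frac{n+1}{2n}$, hence $\omega_{qow}(G_{C+R}) \leq \frac{(n+1)^2}{4n^2}$ via Theorem \ref{main- connection}(2).

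For the matching lower bound, I would exhibit an explicit one-way protocol: Alice initializes a one-dimensional ancilla $|1\rangle_{A'}$, applies the SWAP on $A\leftrightarrow A'$, and sends $A'$ to Bob; Bob then applies the SWAP on $B\leftrightarrow A'$. Tracking the two branches of $|\psi\rangle$ through these operations and taking the inner product with $\langle\gamma|_{ABC}$, the $i1$-branch contributes amplitude $\frac{1}{\sqrt{2n}}$ for each $i$, while the $i2$-branch contributes $\frac{1}{\sqrt{2n}}\delta_{i1}$, giving total overlap $\frac{n+1}{2n}|1\rangle_{A'}$ and therefore winning probability $\frac{(n+1)^2}{4n^2}$. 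The main (only) delicate point is keeping straight which registers get swapped at which stage so that the final overlap is computed correctly; no operator space technology beyond Theorem \ref{main- connection} and the Section~\ref{Section: three models} norm computations is needed. Perfect parallel repetition then finishes the proof.
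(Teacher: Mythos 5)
Your proposal is correct and follows essentially the same route as the paper: perfect parallel repetition from Theorem \ref{perfect parallel Haagerup} reduces everything to the single-copy values, $V(G_{C+R})=1$ comes from the SWAP unitary (the paper states this directly; you route it through Lemma \ref{proj norm equal one}, which is the same content), the upper bound on $\omega_{qow}$ is the same triangle-inequality estimate in the Haagerup norm using $\|G_C\|_h=1$ and $\|G_R\|_h=1/n$, and the lower bound is the identical flip protocol. The only slip is calling the ancilla \emph{one-dimensional}: $\h_{A'}$ must be $\C^n$ (initialized in the state $\ket{1}$) for the swaps $A\leftrightarrow A'$ and $B\leftrightarrow A'$ to make sense, as your own overlap computation implicitly assumes.
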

\begin{proof}
To see the first part just note that we can obtain the state $|\gamma\rangle$ from the state $|\psi\rangle$ by simply flipping Alice's and Bob's part of the state. Then, we have $V(G_{C+R})=1$. On the other hand, this immediately implies that $V(G_{C+R}^2)=1$.

To show the estimate $\omega_{qow}(G_{C+R})=\frac{1}{4}\big(1+ \frac{1}{n}\big)^2$ we first note that
\begin{align*}
\|G_{C+R}\|_{S_1^n\otimes_h S_1^n}\leq \frac{1}{2}\Big(\|G_{C}\|_{S_1^n\otimes_h S_1^n}+ \|G_{R}\|_{S_1^n\otimes_h S_1^n}\Big)\leq \frac{1}{2}(1+\frac{1}{n}),
\end{align*}where in the last inequality we have used the estimates given in Section \ref{gaps}. According to Theorem \ref{main- connection} we conclude that $\omega_{qow}(G_{C+R})\leq \frac{1}{4}(1+\frac{1}{n})^2$.

Let us now prove that $\omega_{qow}(G_{C+R})$ attains the upper bound shown above. In order to do this they will perform the following strategy. In our protocol $H_{A'}=\C^n$ and Alice's initial state is initiated in $|\varphi\rangle=|1\rangle$.
\begin{enumerate}
\item[1.] The initial state after Alice and Bob receive their particles is $ |\psi\rangle|\varphi\rangle=\frac{1}{\sqrt{2n}}\sum_{i=1}^n\big(|i1\rangle|i1\rangle|1\rangle+ |1i\rangle|i2\rangle|1\rangle\big)$.
\item[2.] Then, Alice applies a flip on $A-A'$ to produce the state $\frac{1}{\sqrt{2n}}\sum_{i=1}^n\big(|11\rangle|i1\rangle|i\rangle+ |1i\rangle|i2\rangle|1\rangle\big)$ and sends her particle $A'$ to Bob.
\item[3.] Bob applies a flip on $A'-B$ to produce the state $$|\varphi'\rangle=\frac{1}{\sqrt{2n}}\sum_{i=1}^n\big(|1i\rangle|i1\rangle|1\rangle+ |11\rangle|i2\rangle|i\rangle\big).$$
\item[4.] They return their particles to the referee.
\end{enumerate}The winning probability can be easily checked to be $\frac{1}{4}\big(1+\frac{1}{n}\big)^2$.

Finally, in order to conclude $\omega_{qow}(G_{C+R}^2)=\frac{1}{16}\big(1+ \frac{1}{n}\big)^4$ we invoke Theorem \ref{perfect parallel Haagerup}.
\end{proof}

\section{Two families of games from operator space theory}\label{Section:SchurOHn}

\subsection{Schur Games}

We have seen that Rank-One Quantum Games correspond to elements in $S_1(\h_A) \otimes S_1(\h_B)$. They then correspond to maps from $B(\h_A)$ to $S_1(\h_B)$ via the identification
\[
\left(\sum_{i=1}^n A_i \otimes B_i \right) (C)=\sum_{i=1}^n tr(A_i C^{tr})B_i.
\]
(The transpose is due to the fact that we use the scalar pairing (\ref{Scalar Pairing})). We now define Schur games in terms of the corresponding maps from $B(\h_A)$ to $S_1(\h_B)$.
\begin{definition}
Let $G$ be a quantum entangled game with initial state $\ket{\psi}$ and final measurement $\ket{\gamma}\bra{\gamma}$ for $\ket{\psi},\ket{\gamma} \in \h_A \otimes \h_B \otimes \h_C$. Let $M_{AB}=tr_C \ket{\psi}\bra{\gamma}$. We will say that $G$ is a \emph{Schur game} if the associated map $\hat{M}_{AB}:B(\h_A)\rightarrow S_1(\h_B)$ to $M_{AB}$ is a Schur multiplier. That is, there exists an element $\phi \in B(\h_A)$ verifying that $$\hat{M}_{AB}(X)=\phi* X$$ for every $X\in B(\h_A)$. Here $*$ denotes the Schur (or Hadamard) product defined by $\bra{i} \phi * X \ket{j} = \phi_{ij}X_{ij}$.
\end{definition}
\begin{remark}\label{Shur-dimension}
Note that the dimensions of Alice and Bob's Hilbert spaces are the same for Schur games. In what follows, we take this dimension to be $n$.  Schur games are those that correspond to the ``diagonal'' elements in $S_1(\h_A) \otimes S_1(\h_B)$, those of the form \mbox{$\sum_{i,j=1}^n \phi_{ij} \ket{i}\bra{j} \otimes \ket{i}\bra{j}$}.
\end{remark}
\begin{remark}\label{RemarkSchurBimodule}
Let $D_{\vec{x}}$ denote the diagonal matrix whose diagonal entries are given by $\vec{x} \in \ell_n^\infty$. Then it is straightforward to check that a map $T:B(\h_A) \to S_1(\h_B)$ is a Schur multiplier if and only if $T(D_{\vec{x}} A D_{\vec{y}}) = D_{\vec{x}} T(A) D_{\vec{y}}$ for all $\vec{x}, \vec{y} \in \ell_n^\infty$. If $T$ satisfies this bimodule condition, then the  matrix $\phi$ is given by $\phi_{ij} = \langle T( \ket{i}\bra{j}) , \ket{i}\bra{j} \rangle$.
\end{remark}
\subsubsection{Characterization of Schur Games}
In this section we will give a very simple characterization of Schur games.
\begin{lemma}\label{SchurCharacterization}
A rank-one quantum game is a Schur game if and only if it has a representation $M=tr_C \ket{\psi}\bra{\gamma}$ where $\ket{\psi}$ and $\ket{\gamma}$ are of the form
\begin{align*}
\ket{\psi}&=\sum_{i,t} \alpha_{it}\ket{i}_A\ket{i}_B\ket{t}_C, \\
\ket{\gamma}&=\sum_{i,t} \beta_{it} \ket{i}_A\ket{i}_B\ket{t}_C,
\end{align*}
with $\N{\alpha}_2 \leq 1$, $\N{\beta}_2\leq 1$.
\end{lemma}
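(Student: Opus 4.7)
My plan is to prove both directions directly, reducing the hard direction to a singular value decomposition of the associated symbol $\phi$.

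For the easy direction ($\Leftarrow$), I would just compute the partial trace. If $\ket{\psi} = \sum_{i,t}\alpha_{it}\ket{i}_A\ket{i}_B\ket{t}_C$ and $\ket{\gamma} = \sum_{j,t}\beta_{jt}\ket{j}_A\ket{j}_B\ket{t}_C$, then
\begin{align*}
M_{AB} = \tr_C\ket{\psi}\bra{\gamma} = \sum_{i,j,t} \alpha_{it}\overline{\beta_{jt}}\,\ket{i}\bra{j}\otimes\ket{i}\bra{j} = \sum_{i,j}\phi_{ij}\ket{i}\bra{j}\otimes\ket{i}\bra{j},
\end{align*}
with $\phi_{ij} = \sum_t \alpha_{it}\overline{\beta_{jt}}$. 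By Remark~\ref{Shur-dimension} this is exactly the form of a Schur game.

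For the converse ($\Rightarrow$), suppose $G$ is a Schur game, so $M_{AB} = \sum_{i,j}\phi_{ij}\ket{ii}\bra{jj}$ for some $\phi\in M_n$. The first key observation is that the linear map $\ket{i}\bra{j}\mapsto\ket{ii}\bra{jj}$ extends to an isometric embedding $S_1^n \hookrightarrow S_1^{n^2}$: given the SVD $\phi = \sum_t s_t\ket{e_t}\bra{f_t}$, the image $\sum_t s_t \ket{\widetilde{e}_t}\bra{\widetilde{f}_t}$ where $\ket{\widetilde{e}_t}=\sum_i\langle i|e_t\rangle\ket{ii}$ is again a singular value decomposition because the doubled vectors $\{\ket{\widetilde{e}_t}\}$, $\{\ket{\widetilde{f}_t}\}$ remain orthonormal. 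Since Proposition~\ref{connection-projective} guarantees $\N{M_{AB}}_{S_1^{n^2}}\leq 1$, this isometry yields $\N{\phi}_{S_1^n} = \sum_t s_t \leq 1$.

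The second step is then a standard purification argument: set $\alpha_{it}=\sqrt{s_t}\langle i|e_t\rangle$ and $\beta_{jt}=\sqrt{s_t}\,\overline{\langle f_t|j\rangle}$, and take $\h_C=\C^n$. Then $\sum_{i,t}|\alpha_{it}|^2=\sum_t s_t\leq 1$ and likewise $\sum_{j,t}|\beta_{jt}|^2\leq 1$, while $\sum_t\alpha_{it}\overline{\beta_{jt}} = \sum_t s_t\langle i|e_t\rangle\langle f_t|j\rangle = \phi_{ij}$, so the resulting $\ket{\psi}$ and $\ket{\gamma}$ have the desired diagonal form and reproduce $M_{AB}$ via the first paragraph's computation. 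I do not expect any real obstacle here; the only point requiring genuine verification is the isometric identification of the ``diagonal subspace'' with $S_1^n$, which is clean once one writes out the SVD in both spaces.
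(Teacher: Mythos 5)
Your proof is correct, but the ``only if'' direction follows a genuinely different route from the paper's. The paper works directly with an arbitrary representation $\ket{\psi}=\sum_{r,s,t}\alpha_{rst}\ket{r}_A\ket{s}_B\ket{t}_C$, $\ket{\gamma}=\sum_{u,v,t}\beta_{uvt}\ket{u}_A\ket{v}_B\ket{t}_C$ of the given game: applying the associated map to matrix units, the Schur condition forces $\sum_t \alpha_{ist}\overline{\beta_{jvt}}=0$ unless $i=s$ and $j=v$, and the authors then observe that discarding the off-diagonal components of $\alpha$ and $\beta$ (keeping only $\alpha_{iit}$, $\beta_{jjt}$) leaves $tr_C\ket{\psi}\bra{\gamma}$ unchanged; the $\ell_2$-bounds are inherited from the original states. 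You instead discard the original representation entirely, prove that $\ket{i}\bra{j}\mapsto\ket{ii}\bra{jj}$ is a trace-norm isometry onto the diagonal subspace of $S_1^{n^2}$ via the SVD of the symbol $\phi$, deduce $\N{\phi}_1=\N{M_{AB}}_{S_1^{n^2}}\leq 1$ from Proposition \ref{connection-projective}, and re-purify. Both arguments are sound; the paper's is shorter and stays within the given $\mathcal H_C$, while yours costs the extra isometry verification but buys two things: a canonical representation with $\dim \mathcal H_C=\operatorname{rank}(\phi)\leq n$, and the identity $V(G)^{1/2}=\N{\phi}_1$, which the paper only records later (Section \ref{Section:SchurOHn}, in the discussion of separating $V$ and $\omega^*$) and which your isometry argument establishes in passing. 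One cosmetic point: the lemma only asks for $\N{\alpha}_2\leq 1$, $\N{\beta}_2\leq 1$, so your subnormalized vectors are acceptable as stated; if genuine unit vectors were needed one would pad $\mathcal H_C$ as in the proof of Proposition \ref{connection-projective}.
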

\begin{proof}
Assume that the rank-one quantum game is a Schur game, i.e., after fixing suitable orthonormal bases, the element $\hat M_{AB}$ in $CB(M_n,S_1^n)$ corresponding to the game is of the form $(x_{ij})\mapsto (t_{ij}x_{ij})$. We now examine the relationship between the element $M_{AB}\in CB(M_n,S_1^n)$ and the states $\ket{\psi}$ and $\ket{\gamma}$.

By taking a partial trace over $\h_C$, we obtain
\[
tr_C{\ket{\psi} \bra{\gamma}}= \sum_{r,s,t,u,v} \alpha_{rst}\overline{\beta_{uvt}} \ket{r}\bra{u}_A \otimes \ket{s}\bra{v}_B.
\]
Using the identification $S_1^n \otimes_{min} S_1^n\simeq CB(M_n,S_1^n)$ and using the scalar pairing to identify $M_n^*$ and $S_1^n$, we apply this to $\ket{i}\bra{j}\in M_n$ to get
\begin{align*}
\Big(tr_C{\ket{\psi} \bra{\phi}}\Big)\Big(\ket{i}\bra{j}\Big)&=\sum_{s,v}\left( \sum_{r,u,t} \alpha_{rst}\overline{\beta_{uvt}} \langle i\mid r\rangle \langle u\mid j\rangle \right) \ket{s}\bra{v}\notag \\
&= \sum_{s,t,v} \alpha_{ist}\overline{\beta_{jvt}} \ket{s}\bra{v}.
\end{align*}
For this to be a Schur multiplier, the above sum must equal $t_{ij}\ket{i}\bra{j}$. Replacing $\alpha_{ist}$ by $\delta_{st}\alpha_{ist}$ and $\beta_{jvt}$ by $\delta_{jv}\beta_{jvt}$ will not alter $tr_C{\ket{\psi} \bra{\gamma}}$ and thus we can express the states as
\begin{align*}
\ket{\psi}&=\sum_{i,t} \alpha_{iit}\ket{i}_A\ket{i}_B\ket{t}_C, \\
\ket{\gamma}&=\sum_{j,t} \beta_{jjt} \ket{j}_A\ket{j}_B\ket{t}_C,
\end{align*}
with $\sum_{i,t} |\alpha_{iit}|^2 \leq 1$, $\sum_{j,t} |\beta_{jjt}|^2 \leq 1$.
The reverse implication is trivial.
\end{proof}
\subsubsection{Equivalence of Entangled and One-Way Communication Values}\label{SchurEquivalenceSubsection}
In this section we show that the entangled value $\omega^*(G)$ and the entangled value of the game with one-way communication $\omega_{qow}(G)$ are equivalent for the particular games that we are considering. More precisely, we will show
\begin{theorem}
Let $G$ be a Schur game. Then the entangled value of $G$ and the entangled value of the game with one-way communication of $G$ are equivalent up to a multiplicative constant:
\[
\omega^*(G) \leq \omega_{qow}(G) \leq 4 \, \omega^*(G).
\]
\end{theorem}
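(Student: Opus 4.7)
The inequality $\omega^*(G) \leq \omega_{qow}(G)$ holds in general: any shared-entanglement strategy can be simulated with one-way communication (by having Alice create the entangled state and sending half through the channel), and this matches the trivial bound $\|M_{AB}\|_{min} \leq \|M_{AB}\|_h$ under the dictionary of Theorem~\ref{main- connection}. The content of the theorem is therefore the upper estimate $\omega_{qow}(G) \leq 4\,\omega^*(G)$, which by Theorem~\ref{main- connection} is equivalent to the tensor-norm inequality $\|M\|_{S_1^n \otimes_h S_1^n} \leq 2\|M\|_{S_1^n \otimes_{min} S_1^n}$ for every Schur element $M=\sum_{i,j}\phi_{ij}\,e_{ij}\otimes e_{ij}$, where $e_{ij}=|i\rangle\langle j|$.

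The plan is to combine three ingredients. First, the operator space Grothendieck inequality (Theorem~\ref{Grothendieck II}) yields $\|M\|_\mu \leq 2\|M\|_{min}$ and hence, by the variational description of $\|\cdot\|_\mu$ derived right after that theorem, a decomposition $M=M_1+M_2$ with $\|M_1\|_h+\|M_2\|_{h^t}\leq 2\|M\|_{min}+\varepsilon$ for arbitrary $\varepsilon>0$. Second, I use the Schur projection $P:S_1^n\otimes S_1^n\to S_1^n\otimes S_1^n$ sending $e_{ij}\otimes e_{kl}$ to $\delta_{ik}\delta_{jl}\,e_{ij}\otimes e_{ij}$, which is the identity on Schur elements. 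Using the identification $CB(M_n,M_n)=M_n\otimes_h M_n$ of Theorem~\ref{HaagerupMnCB}, the adjoint projection can be written as the standard average $T\mapsto \int (\mathrm{Ad}\,D_u)\circ T\circ (\mathrm{Ad}\,D_u^*)\,du$ over diagonal unitaries $D_u$ of $\mathbb{C}^n$; since each summand is a complete isometry of $CB(M_n,M_n)$, the average is a complete contraction on $M_n\otimes_h M_n$, and the self-duality of the Haagerup tensor norm (\ref{duality haagerup}) together with the analogous analysis of $h^t$ then make $P$ a complete contraction on both $S_1^n\otimes_h S_1^n$ and $S_1^n\otimes_{h^t}S_1^n$. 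Applying $P$ to $M=M_1+M_2$ and using $P(M)=M$, I obtain $M=P(M_1)+P(M_2)$ with $\|P(M_1)\|_h\leq\|M_1\|_h$ and $\|P(M_2)\|_{h^t}\leq\|M_2\|_{h^t}$, while both $P(M_1),P(M_2)$ are Schur.

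Third, for any Schur element $N=\sum\phi_{ij}\,e_{ij}\otimes e_{ij}$ one has $\|N\|_h=\|N\|_{h^t}$. This follows from the tautological observation that the flip $\sigma:V\otimes W\to W\otimes V$ is an isometry $V\otimes_{h^t}W\to W\otimes_h V$ (compare the defining infima term by term), combined with the fact that every Schur element is flip-invariant since $e_{ij}\otimes e_{ij}=\sigma(e_{ij}\otimes e_{ij})$. Putting the three steps together,
\begin{align*}
\|M\|_h\,\leq\,\|P(M_1)\|_h+\|P(M_2)\|_h\,=\,\|P(M_1)\|_h+\|P(M_2)\|_{h^t}\,\leq\,\|M_1\|_h+\|M_2\|_{h^t}\,\leq\,2\|M\|_{min}+\varepsilon,
\end{align*}
so letting $\varepsilon\to 0$ and squaring gives $\omega_{qow}(G)\leq 4\,\omega^*(G)$. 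The main obstacle I anticipate is the careful verification that the Schur projection is completely contractive for both Haagerup tensor norms on $S_1^n\otimes S_1^n$: the averaging argument is standard on $CB(M_n,M_n)$, but transporting it to $S_1^n\otimes_h S_1^n$ requires invoking the self-duality of the Haagerup norm and checking that $P$ is self-adjoint under the scalar pairing. The other two ingredients (Grothendieck and the flip isometry) are essentially bookkeeping once the right decomposition is fixed.
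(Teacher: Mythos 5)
Your overall architecture is sound and rests on the same two pillars as the paper's proof (the operator space Grothendieck decomposition $M=M_1+M_2$ and averaging over diagonal unitaries), but your formula for the averaging is wrong as stated, and this is the one step that would fail. The one-sided conjugation average $T\mapsto \int (\mathrm{Ad}\,D_u)\circ T\circ(\mathrm{Ad}\,D_u^*)\,du$ corresponds on $S_1^n\otimes S_1^n$ to averaging $\rho\otimes\sigma\mapsto D_u\rho D_u^*\otimes D_u\sigma D_u^*$, whose effect on $e_{ij}\otimes e_{kl}$ is multiplication by the mean of $u_i\bar u_j u_k\bar u_l$; this survives only when $\{i,k\}=\{j,l\}$ as multisets, so the resulting projection has range $\mathrm{span}\{e_{ij}\otimes e_{ji}\}+\mathrm{span}\{e_{ii}\otimes e_{jj}\}$ and in particular \emph{annihilates} the off-diagonal Schur elements $e_{ij}\otimes e_{ij}$, $i\neq j$. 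It is therefore not the identity on Schur elements and cannot serve as your $P$. The correct average uses two \emph{independent} diagonal unitaries, $\rho\otimes\sigma\mapsto D_x\rho D_y\otimes D_x^*\sigma D_y^*$ integrated over $G\times G$ --- exactly the bimodule condition of Remark \ref{RemarkSchurBimodule} and the $\int_{G\times G}$ average the paper uses. Once corrected, each elementary map is of the form $T\otimes S$ with $T,S$ complete isometries of $S_1^n$, so contractivity of $P$ on both $S_1^n\otimes_h S_1^n$ and $S_1^n\otimes_{h^t}S_1^n$ follows directly from the mapping property (\ref{metric mapping property}) for $h$ and $h^t$; your detour through $CB(M_n,M_n)$, Theorem \ref{HaagerupMnCB} and self-duality is unnecessary (and would anyway require matching up two different identifications of $M_n\otimes M_n$ with maps on $M_n$, which is where the sign of the error crept in).

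With that repair, your proof is correct and its second half genuinely differs from the paper's. After the Grothendieck decomposition and the averaging, the paper does not project at the level of tensor norms: it extracts a pointwise domination $|\phi_{ij}|\leq|\psi_{ij}|$ with $\N{\psi}_1\leq 2\,\omega^*(G)^{1/2}$ (the quantity $S(G)$ of Remark \ref{sgsup}), and then separately shows that any such domination bounds the Haagerup norm, via the explicit diagonal states $f=\sum_i|X_i|^2\ket{i}\bra{i}$, $g=\sum_j|Y_j|^2\ket{j}\bra{j}$ and a row/column Cauchy--Schwarz estimate; the $h$ versus $h^t$ asymmetry is absorbed because $\N{\psi}_1$ is insensitive to swapping the index roles. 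You instead stay entirely at the level of tensor norms and dispose of the $h^t$ piece by the flip-invariance of Schur elements together with the (correct) observation that the flip is an isometry from $\otimes_{h^t}$ to $\otimes_h$. Your route is structurally cleaner, but it does not produce the intermediate quantity $S(G)$, which the paper reuses in the following subsection to separate $V$ from $\omega^*$ via pointwise domination of matrices; so the paper's longer detour buys additional information beyond the stated theorem.
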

\begin{remark}
It is important to notice here that, even when the entangled value with no communication and with one-way communication are ``very close'', the amount of resources needed to attain such values can be very different in both situations. An extremal example is given by the game of Leung, Toner, and Watrous \cite{LTW}. In this case the initial state is $|\psi\>=\frac{1}{\sqrt{2}}|000\>_{ABC}+\frac{1}{2}(|11\>+|22\>)_{AB}|1\>_C$ whereas the final state is $|\gamma\>=\frac{1}{\sqrt{2}}(|000\>+|111\>)_{ABC} $. By Lemma \ref{SchurCharacterization}, this is a Schur multiplier game. They show in \cite{LTW} that one needs infinite entanglement in order to play the optimal strategy for the entangled value $\omega^*(G)$ which is in this case equal to $1$. It is however straightforward to see that with an auxiliary system $A'$ of dimension $2$, if Alice applies to $AA'$ a unitary with $|00\>\mapsto |00\>$, $\ket{10}\mapsto \ket{11}$, and $\ket{20}\mapsto\ket{10}$ and later Bob applies on $BA'$ a u
 nitary with  $|00\>
 \mapsto |00\>$, and $\frac{1}{\sqrt{2}}(|11\>+|20\>)\mapsto \ket{10}$, they can obtain $|\gamma\>$ starting from $|\psi\>$ with the communication of just one qubit.
\end{remark}
The following argument depends on the Grothendieck inequality for operator spaces and is taken from Section 4 of \cite{PS} (using improved constants from \cite{HM}). We include it here for the convenience of the reader.

Suppose that $G$ is a Schur game. It corresponds to an element in $S_1(\h_A) \otimes S_1(\h_B)$ of the form
\[
\sum_{ij} \phi_{ij} \ket{i}\bra{j} \otimes \ket{i}\bra{j}.
\]
Let us denote by $u:B(\h_A) \rightarrow S_1(\h_B)$ the associated linear map to $G$. Note that, by definition,
\[
\langle u(a),b\rangle =  \sum_{ij} \phi_{ij} a_{ij} b_{ij},
\]for ever $a\in B(\h_A)$ and $b\in B(\h_B)$.

We claim that there exists a matrix $( \psi_{ij})$ such that
\begin{align}\label{s1sup}
\N{(\psi_{ij})}_1\leq 2 \omega^*(G)^{1/2} \text{ and } \vert \phi_{ij} \vert \leq \vert \psi_{ij}\vert,
\end{align}
for all $i$ and $j$. Here, we denote $\N{(\psi_{ij})}_1=\|\sum_{i,j}\psi_{ij}| i\rangle\langle j|\|$.

Let $(\psi_{ij})$ be such a matrix. Then, it can be written in the form $\sum_k \lambda_k \ket{x^k}\bra{y^k}$ where $\lambda_k \geq 0$, $\N{\ket{x_k}}_2=\N{\ket{y_k}}_2=1$ and $\sum_k \lambda_k \leq 2 \omega^*(G)^{1/2}$. By setting $X_i=\left( \sum_k \lambda_k |\langle x^k|i\rangle|^2 \right)^{1/2}$ and $Y_j = \left( \sum_k \lambda_k |\langle y^k|j\rangle|^2 \right)^{1/2}$, we obtain the following: there exist vectors $(X_i)_{i=1}^n$ and $(Y_j)_{j=1}^n$ such that
\begin{align} \label{l2sup}
| \phi_{ij} | \leq X_i Y_j \text{    }\text{ and }\text{    } \N{(X_i)}_2 \cdot \N{(Y_j)}_2\leq 2\omega^*(G)^{1/2}.
\end{align}
Clearly, (\ref{l2sup}) also implies the existence of a matrix $(\psi_{ij})$ satisfying (\ref{s1sup}).

Let $(X_i)$ and $(Y_j)$ be vectors satisfying (\ref{l2sup}); let $f=\sum_i |X_i|^2 \ket{i}\bra{i}$ and $g=\sum_j |Y_j|^2 \ket{j}\bra{j}$. Then we have
\begin{align*}
|\langle u(a), b \rangle | & = | \sum_{ij} \phi_{ij} a_{ij} b_{ij}| \leq \sum_{ij} \vert X_i \vert  \vert a_{ij} \vert \vert b_{ij} \vert \vert Y_j \vert \\
 & \leq \left(
 \sum_{ij}  \vert X_i \vert^2 \vert a_{ij} \vert^2
 \right)^{1/2}
 \left(\sum_{ij}  \vert Y_j \vert^2 \vert b_{ij} \vert^2
 \right)^{1/2} \\
 & = \left( f(aa^*) g(b^*b) \right)^{1/2}
\end{align*}
We then have that, for all $a_i \in B(\h_A)$, $b_i \in B(\h_B)$, $i \in \{1, \dots, m\}$,
\begin{align*}
\left\vert \sum_i \langle u(a_i), b_i \rangle \right\vert & \leq \sum_i f(a_i a_i^*)^{1/2} g(b_i^*b_i)^{1/2} \\
& \leq \left(
            \sum_i f (a_i a_i^*)
            \right)^{1/2}
            \left(
            \sum_i g (b_i^*b_i)
            \right)^{1/2}\\
 & =  \left(
             f ( \sum _ia_i a_i^*)
            \right)^{1/2}
            \left(
             g (\sum_i b_i^*b_i)
            \right)^{1/2}\\
& \leq 2\omega^*(G)^{1/2} \N{\sum_i a_i a_i^*}^{1/2} \N{\sum_i b_i^*b_i}^{1/2}.
\end{align*}
This shows that when $u$ is considered as a linear functional on $B(\h_A) \otimes_h B(\h_B)$, it has norm less than or equal to $2 \omega^*(G)^{1/2}$. But by (\ref{duality haagerup}) we have that $(B(\h_A) \otimes_h B(\h_B))^*=S_1(\h_A) \otimes_h S_1(\h_B)$ and then by Theorem \ref{main- connection}(2), we have that
\[
\omega_{qow}(G)^{1/2} = \N{u}_{S_1(\h_A) \otimes_h S_1(\h_B)}^{1/2} \leq 2 \, \omega^*(G)^{1/2}.
\]
In order to conclude the proof we must prove our claim (\ref{s1sup}). According to Theorem \ref{Grothendieck} (see Equation (\ref{u1Grothendieck})), there is a decomposition of linear maps $u=u_1 + u_2$ and states $f_1,f_2 \in S_1(\h_A)$ and $g_1, g_2 \in S_1(\h_B)$ such that
\begin{align*}
|\langle u_1(a),b\rangle| \leq \omega^*(G)^{1/2} f_1(aa^*)^{1/2} g_1(b^*b)^{1/2}, %\label{u1Grothendieck} \\
|\langle u_2(a),b\rangle| \leq \omega^*(G)^{1/2} f_2(a^*a)^{1/2} g_2(bb^*)^{1/2}, %\label{u2Grothendieck}
\end{align*}
for all $a \in B(\h_A)$ and $b \in B(\h_B)$.

The linear maps $u_1$ and $u_2$ are not necessarily associated with Schur multipliers. Using an averaging argument, we replace them by linear maps $\tilde u_1$ and $\tilde u_2$ that do correspond to Schur multipliers. Let $G$ be the group of all diagonal unitary matrices on $\ell_2^n$ equipped with its normalized Haar measure $\mathrm{m}$. We define linear maps $\tilde u_i$ by means of
\[
\langle\tilde u_i(a),b\rangle = \int_{G \times G} \langle u_i (x a y), x^{-1} b y^{-1}\rangle \, \mathrm{dm}(x) \, \mathrm{dm}(y).
\]
Using (\ref{u1Grothendieck}) and the Cauchy-Schwartz inequality, we obtain
\[
|\langle\tilde u_1(a),b\rangle| \leq \omega^*(G)^{1/2}  \left( \int f_1(xaa^*x^{-1}) \, \mathrm{dm}(x) \right)^{1/2} \left( \int g_1(y b^* b y^{-1}) \, \mathrm{dm}(y) \right)^{1/2}.
\]
A similar inequality holds for $|\langle\tilde u_2(a),b\rangle|$.

By the translation invariance of the Haar measure, we have that $\langle\tilde u_i(x a y), x^{-1} b y^{-1}\rangle=\langle\tilde u_i (a),b\rangle$. It follows from Remark \ref{RemarkSchurBimodule} that the linear maps $\tilde u_i$ correspond to Schur multipliers. In fact, the matrices $(\phi^i_{kl})= (\tilde u_i (\ket{k}\bra{l}, \ket{k} \bra{l}) ) $ satisfy the relations
\begin{align*}
\langle\tilde u_i(a),b\rangle = \sum_{ij} \phi^i_{kl} a_{kl} b_{kl}.
\end{align*}
As $u = \tilde u = \tilde u_1 + \tilde u_2$, we have that $\phi = \phi^1 + \phi^2$.

The states $\tilde f^1( \cdot ) = \int_G f_1(x \cdot x^{-1}) \, \mathrm{dm}(x)$ and $\tilde g^1( \cdot) = \int_G g_1(y \cdot y^{-1}) \, \mathrm{dm}(y)$ are diagonal states.
We thus have that
\[
\left| \sum_{ij} \phi^1_{ij} a_{ij} b_{ij} \right| \leq \omega^*(G)^{1/2} \left( \sum_{ij} \tilde f^1_{ii} | a_{ij} |^2 \sum_{ij} \tilde g^1_{jj}|b_{ij}|^2  \right)^{1/2},
\]
and that
\[
\left| \phi^1_{ij} \right| \leq \omega^*(G)^{1/2} |\tilde f^1_{ii}|^{1/2} | \tilde g^1_{jj}|^{1/2},
\]
where $\sum_i  |\tilde f^1_{ii}|=1 = \sum_j | \tilde g^1_{jj}|$. Similarly, we can obtain the bound
\[
\left| \phi^2_{ij} \right| \leq \omega^*(G)^{1/2} |\tilde f^2_{jj}|^{1/2} | \tilde g^2_{ii}|^{1/2}.
\]
If we let $\psi$ be the matrix $\psi_{ij}= \omega^*(G)^{1/2}\left( |\tilde f^1_{ii}|^{1/2} | \tilde g^1_{jj}|^{1/2} + |\tilde f^2_{jj}|^{1/2} | \tilde g^2_{ii}|^{1/2}\right)$, it then follows that
\[
| \phi_{ij} | = | \phi^1_{ij} + \phi^2_{ij}| \leq |\psi_{ij}|,
\]
where $\N{\psi}_1$ satisfies
\[
\N{\psi}_1 \leq \omega^*(G)^{1/2} \left(\N{|(\tilde f^1_{ii}|^{1/2} | \tilde g^1_{jj}|^{1/2})_{i,j}}_1 +\N{(|\tilde f^2_{jj}|^{1/2} | \tilde g^2_{ii}|^{1/2})_{i,j}}_1 \right)\leq 2 \omega^*(G)^{1/2}.
\]
This completes the proof of (\ref{s1sup}).
\begin{remark} \label{sgsup}
Let $G$ be a Schur game with matrix $(\phi_{ij})$ and let $S(G)$ denote the following quantity
\[
S(G) = \inf \left\{
                   \N{\psi}_1 \, : \, \vert\phi_{ij}\vert \leq \vert \psi_{ij} \vert \text{ for all } i,j
                   \right\}.
\]
The proof of the preceding theorem also shows that the following chain of inequalities holds:
\[
\omega^*(G)^{1/2} \leq \omega_{qow}(G)^{1/2} \leq S(G) \leq 2 \, \omega^*(G)^{1/2}.
\]
\end{remark}
\subsubsection{Separation of the maximal value and the entangled value}
A Schur game $G$ corresponds to an element of the form
\[
M_\phi = \displaystyle \sum_{i,j=1}^n \phi_{i,j} \ket{i}\bra{j} \otimes \ket{i}\bra{j} \text{ in } S_1(\h_A) \otimes S_1(\h_B).
\]
Despite this simplicity, we will now show that the class of Schur games is rich enough to contain games whose maximal value and entangled value differ by an arbitrarily large multiplicative factor. We begin by recalling that $V(G)^{1/2}= \N{M_\phi}_{S_1^{n^2}}$.  This norm is easy to calculate for Schur games; it is the trace-norm of the matrix $(\phi_{i,j})$:
\[
\N{\sum_{i,j=1}^n \phi_{i,j} \ket{i}\bra{j} \otimes \ket{i}\bra{j} }_{S_1^{n^2}} = \N{( \phi_{i,j})  }_1.
\]
We seek to compare this to the entangled value $\omega^*(G)^{1/2}$. By Remark \ref{sgsup}, for Schur games we have
\[
\omega^*(G)^{1/2} \simeq \inf \{ \N{\psi}_1 \, : \,  |\phi_{ij}| \leq |\psi_{ij}|,  \forall i,j  \},
\]
where $\simeq$ denotes equivalence up to a universal constant.

Remark 5.6 from \cite{Oikhberg} then suggests how we can separate the values $\omega^*(G)$ and $V(G)$. Using \cite{Simon}, we can find matrices $\phi$ and $\psi$ for which $|\phi_{ij}| \leq |\psi_{ij}|$ for all $i,j$, but $\N{\phi}_1 \gg \N{\psi}_1$. If $G$ is the game corresponding to the Schur multiplier $M_\phi$, then $V(G)= \N{\phi}_1^2$ but $\omega^*(G)$ is bounded by the far smaller $\N{\psi}_1^2$.

Consider the following matrices:
\begin{align*}
A =
\left(
\begin{array}{cc}
1 & 1 \\
1 & -1
\end{array}
\right)
\quad \text{and} \quad
B =
\left(
\begin{array}{cc}
1 & 1 \\
1 & 1
\end{array}
\right).
\end{align*}
We have that $\N{A}_1=2 \sqrt{2}$ and $\N{B}_1=2$ and $|A_{ij}| \leq |B_{ij}|$.

It immediately follows that
%\begin{align*}
%\h_k  & = \mb C \oplus \mb C^2 \oplus \mb C^{2^2} \oplus \cdots \oplus \mb C^{2^k}, \\
%A^{(k)} & = 1 \oplus A \oplus A^{\otimes 2} \oplus \cdots \oplus A^{\otimes k}, \\
%B^{(k)} & =1 \oplus B \oplus B^{\otimes 2} \oplus \cdots \oplus B^{\otimes k} .
%\end{align*}
\begin{align*}
\left| (A^{\otimes n})_{ij} \right|  \leq  (B^{\otimes n})_{ij} \text{    }\text{   for every  }\text{     } i,j,\\
\N{ A^{\otimes n}}_1 = \left( 2 \sqrt{2} \right)^n= 2^{n/2} \N{B^{\otimes n}}_1.
\end{align*}
%
%We then have that
%\begin{align*}
%|A^{(k)}_{ij}| \leq |B^{(k)}_{ij}|, \text{ for all } i,j, \\
%\N{A^{(k)}}_1 = \frac{(2\sqrt{2})^{k+1}-1}{2 \sqrt{2}-1} = a_k, \\
%\N{B^{(k)}}_1 = 2^{k+1}-1
%\end{align*}
%
We now renormalize to obtain
\[
\phi_n =  2^{- \frac{3}{2}n} A^{\otimes n}.
\]

For each $n \in \mathbb N$, consider the rank-one quantum game $A_n$ with the following initial and final states:
\begin{align*}
\ket{\psi_n} & = 2^{-n}
\left(
    \left(
\ket{00}+\ket{11} \
   \right)_{AB} \ket{0}_C
 +  \left(
\ket{00} - \ket{11}
   \right)_{AB} \ket{1}_C
\right)^{\otimes n}   \\
\ket{\gamma_n} &= 2^{-n/2}
\left(
\ket{00}_{AB} \ket{0}_C + \ket{11}_{AB}\ket{1}_C
\right)^{\otimes n}
\end{align*}
The game $A_n$ corresponds to the Schur multiplier with matrix $\phi_n$.
We know by the previous comments that $V(G)=1$. In fact, this is very easy since a player with access to the registers of both Alice and Bob merely has to map each $\frac{1}{\sqrt{2}} \left( \ket{00} + \ket{11} \right)$ to $\ket{00}$ and each $\frac{1}{\sqrt{2}} \left(\ket{00} - \ket{11}\right)$ to $\ket{11}$. On the other hand, since $G$ is a Schur game we can conclude from Remark \ref{sgsup} that
\[
\omega^*(G)^{1/2} \leq \inf \{ \N{\psi}_1 \, : \,  |\phi_{ij}| \leq |\psi_{ij}|, \text{    } \forall i,j  \}.
\]
If we now apply this result with $\psi_n = 2^{- \frac{3}{2}n} B^{\otimes n}$, it follows that $\frac{1}{2^n} \geq \omega^*(A_n)$. We thus have that
\[
V(A_n) =1 \gg \frac{1}{2^n} \geq \omega^*(A_n).
\]
\subsection{$OH_n$-Games}
We conclude this work by introducing a second family of rank-one quantum games, motivated by the \emph{Operator Hilbert space}, $OH_n$. This is an operator space structure on $\C^n$ defined by
\[
\N{\sum_{i=1}^n x_i \otimes |i\rangle}_{M_N(\C^n)}= \N{\sum_{i=1}^n x_i\otimes \overline{x_i}}_{M_{N^2}}^{1/2},
\]
where $\sum_{i=1}^n x_i \otimes |i\rangle \in M_N(\C^n)$.

If $V$ is a vector space, then $\overline{V}$ is the same vector space but with the conjugate multiplication by a complex scalar: $\lambda \cdot \overline x = \overline{\bar  \lambda \cdot x}$, for $\lambda \in \mathbb C$, $x \in V$. Here $\overline x$ is the element in $\overline V$ corresponding to the element $x \in V$. If $V \subseteq B(H)$ is an operator space, then its \emph{conjugate} operator space structure is given by the corresponding embedding $\overline{V} \subseteq \overline{B(H)} = B(\overline H)$. Also, given a linear map $T:V\rightarrow W$ between operator spaces we can define $\overline{T}:\overline{V}\rightarrow \overline{W}$ as $\overline{T}(\overline{x})=\overline{T(x)}$ for every $\overline{x}\in \overline{V}$. It can be shown that $$OH_n^* = \overline{OH_n}$$ is a completely isometric identification.
\begin{definition}
Let $G$ be a quantum entangled game with initial state $\ket{\psi}$ and final measurement $\ket{\gamma}\bra{\gamma}$ for $\ket{\psi},\ket{\gamma} \in \h_A \otimes \h_B \otimes \h_C$. Let $M=tr_C \ket{\psi}\bra{\gamma}\in S_1(\mathcal H_A)\otimes S_1(\mathcal H_B)$ and let us denote by $\widehat{M}:B(\mathcal H_A)\rightarrow S_1(\mathcal H_B)$ the corresponding linear map. We will say that $G$ is a \emph{$OH_n$-game} if $\text{rank}(\widehat{M})=n$ and $tr\big(\widehat{M}(x)x^*\big)\geq 0$ for every $x\in B(\mathcal H_A)$.
\end{definition}
It is shown in \cite{Pisier} that in  this case $\widehat{M}=\overline{V^*}V$ for a certain linear map $V:B(\mathcal H_A)\rightarrow OH_n$ such that $\|\widehat{M}\|_{cb}=\|V\|_{cb}^2$.

Theorem \ref{Parallel entangled value} motivates the study of those games for which perfect parallel repetition is true or, at least, for which the quotient appearing in such a theorem cannot be large. This is the main reason to consider the $OH_n$ games. As we will see this family of games verifies some good multiplicativity properties. More precisely,
\begin{theorem}\label{violation OH-gameI}
Let $G$ be a $OH_n$ game. Then,
\begin{align*}
\frac{\omega^*(G^k)}{\omega^*(G)^k}\leq C^k (1+\ln n)^{2k},
\end{align*}where $n$ is the rank of $G$, $k$ is any natural number and $C$ is a universal constant independent of $n$ and $k$.
\end{theorem}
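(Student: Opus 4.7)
The plan is to exploit the factorization of $\widehat M$ through $OH_n$ provided by Pisier's theory, and to track how that factorization behaves under tensor powers. Since $G$ is an $OH_n$-game, we write $\widehat M = \overline{V^{*}} V$ with $V : B(\mathcal H_A) \to OH_n$ satisfying $\|\widehat M\|_{cb} = \|V\|_{cb}^2$; Theorem \ref{main- connection} then gives $\omega^*(G) = \|\widehat M\|_{cb}^2 = \|V\|_{cb}^4$. For the parallel repeated game,
\[
\widehat{M^{\otimes k}} \;=\; \widehat M^{\otimes k} \;=\; \overline{(V^{\otimes k})^{*}} \circ V^{\otimes k},
\]
where $V^{\otimes k}$ is the algebraic tensor power, with values in $OH_n^{\otimes k}$. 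Note that the identity $(\overline{V^{*}})^{\otimes k} = \overline{(V^{\otimes k})^{*}}$ uses only that conjugation, adjunction and tensoring commute at the level of linear maps.

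The first step is to bound $\|V^{\otimes k}\|_{cb}$ when the target $OH_n^{\otimes k}$ is endowed with the $k$-fold minimal operator space tensor product structure $(OH_n)^{\otimes_{min} k}$. By the operator space refinement of the multiplicativity statement in Proposition \ref{multiplicaivity} and the identification $B(\mathcal H_A^{\otimes k}) = B(\mathcal H_A)^{\otimes_{min} k}$, one immediately has
\[
\big\| V^{\otimes k} : B(\mathcal H_A^{\otimes k}) \to (OH_n)^{\otimes_{min} k} \big\|_{cb} \;=\; \|V\|_{cb}^k.
\]
The crux is then to relate $(OH_n)^{\otimes_{min} k}$ to $OH_{n^k}$, since only the latter combines with the self-duality $OH^{*} = \overline{OH}$ in a way compatible with the $OH$-factorization inequality $\|\overline{W^{*}} W\|_{cb} \le \|W\|_{cb}^2$. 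Using the distance estimate
\[
\big\| \mathrm{id} : (OH_n)^{\otimes_{min} k} \to OH_{n^k} \big\|_{cb} \;\leq\; C_0^{\,k}(1+\ln n)^{k/2},
\]
obtained by iterating the two-fold bound $\|\mathrm{id}: OH_n \otimes_{min} OH_n \to OH_{n^2}\|_{cb} = O(\sqrt{1+\ln n})$ from Pisier's theory of the operator Hilbert space, we compose to view $V^{\otimes k}$ as a map $\widetilde V_k : B(\mathcal H_A^{\otimes k}) \to OH_{n^k}$ with $\|\widetilde V_k\|_{cb} \leq C_0^{k}(1+\ln n)^{k/2}\|V\|_{cb}^k$. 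Since $\widehat{M^{\otimes k}} = \overline{\widetilde V_k^{*}}\,\widetilde V_k$ as a set map, submultiplicativity of the cb norm yields $\|\widehat{M^{\otimes k}}\|_{cb} \leq \|\widetilde V_k\|_{cb}^2 \leq C_0^{2k}(1+\ln n)^{k}\|V\|_{cb}^{2k}$. Squaring and using Theorem \ref{main- connection} together with $\omega^*(G) = \|V\|_{cb}^4$ gives $\omega^*(G^k) \leq C_0^{4k}(1+\ln n)^{2k}\,\omega^*(G)^k$, which is the claimed bound with $C := C_0^{4}$.

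The principal obstacle is the identity-map estimate displayed above: the operator Hilbert space is precisely the operator space whose behaviour under minimal tensor products is the least compatible with its self-dual structure, and the polylogarithmic gap between $(OH_n)^{\otimes_{min} k}$ and $OH_{n^k}$ is known to be essentially optimal. This estimate rests on the complex interpolation identity $OH_n = [R_n, C_n]_{1/2}$ together with non-commutative Khintchine-type inequalities, and it is precisely the source of the $(1+\ln n)^{2k}$ factor in the final bound; every other ingredient in the argument is a soft tensor-multiplicativity or duality fact that we have already used elsewhere in this work.
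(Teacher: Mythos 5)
Your overall architecture (factor $\widehat M=\overline{V^*}V$ through $OH_n$, tensorize $V$, and pay a logarithmic price somewhere) matches the spirit of the paper's proof, and you correctly identify the crux: $V^{\otimes k}$ must land in $OH_{n^k}$, not merely in $(OH_n)^{\otimes_{min}k}$, for the self-duality of $OH$ and the identity $\|\overline{W^*}W\|_{cb}=\|W\|_{cb}^2$ to apply. However, the estimate you use to bridge that gap is false. The claim $\|\mathrm{id}:OH_n\otimes_{min}OH_n\to OH_{n^2}\|_{cb}=O(\sqrt{1+\ln n})$ fails already at the level of Banach space norms: take $u=\sum_{i=1}^n e_i\otimes e_i$. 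Via $OH_n\otimes_{min}OH_n=CB(OH_n^*,OH_n)=CB(\overline{OH_n},OH_n)$ and the complete isometry $\overline{OH_n}\cong OH_n$, the element $u$ corresponds to the identity map, so $\|u\|_{OH_n\otimes_{min}OH_n}=1$; on the other hand $OH_{n^2}$ is $\ell_2^{n^2}$ as a Banach space, so $\|u\|_{OH_{n^2}}=\sqrt{n}$. Hence the identity from the min tensor square to $OH_{n^2}$ has (cb-)norm at least $\sqrt{n}$, not $O(\sqrt{1+\ln n})$. Substituting the true value $\sqrt{n}$ into your iteration gives $\omega^*(G^k)\leq n^{2(k-1)}\omega^*(G)^k$, which is essentially vacuous and nowhere near the claimed $(1+\ln n)^{2k}$ bound. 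Since this single estimate carries all of the logarithmic content of the theorem, the gap is fatal rather than cosmetic.

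The paper routes around exactly this obstruction by replacing the cb-norm with the completely $2$-summing norm $\pi_2^o$: one bounds $\|V^{\otimes k}:M_{N^k}\to OH_{n^k}\|_{cb}\leq\pi_2^o(V^{\otimes k})=\pi_2^o(V)^k$, using that $\pi_2^o$ (unlike the cb-norm) is genuinely multiplicative under tensor powers with target $OH_{n^k}$, and then applies Junge's logarithmic little Grothendieck inequality $\pi_2^o(V)\leq c_0\sqrt{1+\ln n}\,\|V\|_{cb}$ for maps $V:M_N\to OH_n$ once per tensor factor. This yields $\|\widehat M^{\otimes k}\|_{cb}\leq c_0^{2k}(1+\ln n)^k\|\widehat M\|_{cb}^k$ and hence the theorem with $C=c_0^4$. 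If you want to salvage your write-up, the fix is to replace your identity-map estimate by this $\pi_2^o$ argument; there is no purely cb-norm shortcut, precisely because $OH_n$ sits at logarithmic-to-polynomial cb-distance from anything the min tensor product sees.
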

On the other hand, one can show that the order $(1+\ln n)^{2k}$ in Theorem \ref{violation OH-gameI} is essentially optimal for these games. Specifically,
\begin{theorem}\label{lowerbound OH_n}
For every natural number $n$, there exists an $OH_n$-game $G$ such that
\begin{align*}
\frac{\omega^*(G^k)}{\omega^*(G)^k}\geq C_1C_2^k \frac{(1+\ln n)^{2k}}{(1+k\ln n)^2}
\end{align*}for every natural number $k$, where $C_1$ and $C_2$ are universal constants independent of $n$ and $k$.
\end{theorem}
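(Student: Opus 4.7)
The plan is to construct an explicit family of $OH_n$-games $G_n$ that saturate (up to the stated ratio) the upper bound in Theorem~\ref{violation OH-gameI}. The natural candidate comes from a canonical completely isometric embedding $\iota_n:OH_n\hookrightarrow S_1^N$ of Pisier's operator Hilbert space, realised for example via Haar unitaries of a free group in a tracial von Neumann algebra or via Gaussian random matrices. Writing $\{e_i\}_{i=1}^n$ for the ONB of $OH_n$ and $f_i:=\iota_n(e_i)\in S_1^N$, I set
$$M_n:=\alpha_n\sum_{i=1}^n f_i\otimes\bar f_i\in S_1^N\otimes S_1^N,$$
with $\alpha_n>0$ chosen so that $\|M_n\|_{S_1^{N^2}}=1$ (equivalently, $V(G_n)=1$). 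The associated linear map $\widehat{M}_n=\alpha_n\,\overline{\iota_n}^{\,*}\iota_n:M_N\to S_1^N$ is positive semidefinite in the Hilbert--Schmidt form and of rank $n$, so $G_n$ is genuinely an $OH_n$-game.

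To estimate $\omega^*(G_n)$, I combine Theorem~\ref{main- connection}(1) with the injectivity of the minimal tensor norm (\ref{injectivity}):
$$\omega^*(G_n)^{1/2}=\|M_n\|_{S_1^N\otimes_{\min}S_1^N}=\alpha_n\Big\|\sum_i e_i\otimes\bar e_i\Big\|_{OH_n\otimes_{\min}\overline{OH_n}}=\alpha_n\cdot\|id_{OH_n}\|_{cb}=\alpha_n,$$
where I used $\overline{OH_n}=OH_n^*$ to identify the canonical element with the identity map of $OH_n$. Pisier's classical estimate for the trace norm of this canonical $OH$-element then gives $\alpha_n^{-1}\simeq 1+\ln n$, whence $\omega^*(G_n)\simeq (1+\ln n)^{-2}$.

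For the $k$-fold tensor power, $M_n^{\otimes k}$ becomes (after rearrangement) $\alpha_n^k$ times the identity element of $OH_n^{\otimes k}\otimes\overline{OH_n^{\otimes k}}$, realised inside $S_1^{N^k}\otimes S_1^{N^k}$ through the tensor-power embedding $\iota_n^{\otimes k}$. The crucial observation to be proved is that its min-tensor norm inside $S_1^{N^k}\otimes_{\min}S_1^{N^k}$ is controlled by the \emph{effective} $OH$-dimension $n^k$ rather than by an iterated product of single-game quantities, namely
$$\omega^*(G_n^k)^{1/2}=\|M_n^{\otimes k}\|_{S_1^{N^k}\otimes_{\min}S_1^{N^k}}\simeq\frac{1}{1+k\ln n}.$$
Combining with the estimate for $\omega^*(G_n)$ then yields
$$\frac{\omega^*(G_n^k)}{\omega^*(G_n)^k}\simeq\frac{(1+\ln n)^{2k}}{(1+k\ln n)^2},$$
as required.

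The main obstacle is precisely this last estimate. One must compare $OH_n^{\otimes k}$, endowed with its inherited min-tensor operator space structure, to the single operator Hilbert space $OH_{n^k}$, and control the trace norm of Pisier's canonical element at the tensor-power level. This will require invoking Pisier's results on the tensor stability of $OH$-spaces and the asymptotic behavior of the canonical free-probabilistic embeddings---specifically the fact that for the natural tensor-power embedding, the trace norm of $\sum_{\vec i}f_{\vec i}\otimes\bar f_{\vec i}$ in $S_1^{N^{2k}}$ grows like $1+k\ln n$ rather than $(1+\ln n)^k$, in parallel with the $k=1$ estimate applied to the dimension $n^k$.
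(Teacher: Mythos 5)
Your choice of game is essentially the one the paper itself records in the closing Remark of the Appendix: take a complete embedding $j_n:OH_n\hookrightarrow S_1^N$ (note that \cite{Junge-Groth} only provides a complete \emph{isomorphism} with $\|j_n\|_{cb}\|j_n^{-1}\|_{cb}\leq C$, not a complete isometry, but this only costs universal constants) and set $\widehat{M}=j_n\circ j_n^*$. Your computation $\omega^*(G_n)^{1/2}=\alpha_n\big\|\sum_i e_i\otimes\bar e_i\big\|_{OH_n\otimes_{min}\overline{OH_n}}=\alpha_n$ via injectivity of the minimal norm is sound. One quantitative slip: the normalization is $\alpha_n^{-1}=\big\|\sum_i f_i\otimes\bar f_i\big\|_{S_1^{N^2}}\simeq\pi_1^o(id_{OH_n})\simeq\sqrt{n(1+\ln n)}$, not $1+\ln n$, so $\omega^*(G_n)\simeq (n(1+\ln n))^{-1}$ rather than $(1+\ln n)^{-2}$; since the ratio $\omega^*(G^k)/\omega^*(G)^k$ is scale-invariant this error is harmless for the theorem, but it signals that the individual values are not under control in your argument.

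The genuine gap is the lower bound on $\omega^*(G_n^k)$, which is the entire content of the theorem, and the fact you propose to derive it from is false. You assert that ``the trace norm of $\sum_{\vec i}f_{\vec i}\otimes\bar f_{\vec i}$ in $S_1^{N^{2k}}$ grows like $1+k\ln n$ rather than $(1+\ln n)^k$.'' The trace norm is \emph{exactly} multiplicative under tensor powers (singular values of a tensor product are the products of singular values); equivalently, $V(G^k)=V(G)^k$ by Theorem \ref{perfect parallel Haagerup}. So no sub-multiplicative trace-norm phenomenon exists, and the violation cannot come from there. What actually drives the result is the non-multiplicativity of the completely bounded \emph{factorization constant} of $id_{OH_m}$ through $B(\ell_2)$, Theorem \ref{thm-Junge-Groth}: there is a factorization $id_{OH_n}=w_n\circ u_n$ through $M_N$ with $\|u_n\|_{cb}\leq 1$ and $\|w_n\|_{cb}\leq C_0\sqrt{n/(1+\ln n)}$, together with the matching lower bound $\|u\|_{cb}\|w\|_{cb}\geq C_0'\sqrt{m/(1+\ln m)}$ valid for \emph{every} such factorization of $id_{OH_m}$. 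Tensorizing the upper bound gives $\|w_n^{\otimes k}\|_{cb}\leq C_0^k(n/(1+\ln n))^{k/2}$, while applying the lower bound at dimension $m=n^k$ (using that $u_n^{\otimes k}:OH_{n^k}\to M_{N^k}$ remains a complete contraction) gives $\|w_n^{\otimes k}\|_{cb}\geq C_0'\sqrt{n^k/(1+k\ln n)}$; since $\|\widehat M\|_{cb}=\|w_n\|_{cb}^2$ and $\|\widehat M^{\otimes k}\|_{cb}=\|w_n^{\otimes k}\|_{cb}^2$ for $\widehat M=\overline{w_n^*}\circ w_n$, the quotient of these two-sided estimates is exactly $C_1C_2^k(1+\ln n)^{2k}/(1+k\ln n)^2$. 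In your setup the analogous ingredient is the projection $P_n:S_1^N\to OH_n$ with $\|P_n\|_{cb}\preceq\sqrt{n/(1+\ln n)}$ and $P_n\circ j_n=id_{OH_n}$; without invoking this two-sided factorization estimate, your claimed asymptotics $\omega^*(G_n^k)^{1/2}\simeq(1+k\ln n)^{-1}$ is unsupported, and the appeal to ``tensor stability of $OH$'' points in the wrong direction.
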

Therefore, even when Theorem \ref{violation OH-gameI} tells us that perfect parallel repetition theorem is not far from being true for $OH_n$-games, it still fails for these kinds of games.

Unfortunately, constructions involving the $OH_n$ operator space are usually tough owing to the lack of nice embeddings $OH_n\hookrightarrow B(\mathcal H)$ and $OH_n\hookrightarrow S_1(\mathcal H)$. In particular, the proof of Theorems \ref{violation OH-gameI} and \ref{lowerbound OH_n} require the use of highly nontrivial techniques from operators spaces. Developing these proofs for a non-specialist reader would require a significant extension of the length of this work. Therefore we prefer to add brief proofs of these results in the appendix below. The interested reader will be able to provide the finer details of the proofs after carefully checking the corresponding references.
\section*{Acknowledgments}
During the process of preparing this work we learned that Oded Regev and Thomas Vidick came up with similar connections while  considering different kinds of games \cite{RegevVidick}. We must thank both authors for a very generous and complete reading of our preliminary draft. We also thank the referees for their detailed and helpful reports.
\section{Appendix: Proofs of Theorem \ref{violation OH-gameI} and Theorem \ref{lowerbound OH_n}}
\subsection{Proof of Theorem \ref{violation OH-gameI}}
Let us assume for simplicity that $\mathcal H_A=\mathcal H_B=\C^N$ so that $B(\mathcal H_A)=M_N$ and $S_1(\mathcal H_A)=S_1^N$. According to Theorem \ref{main- connection}, it suffices to show that
\begin{align*}%\label{violation OH-gameII}
\frac{\|\otimes^k \widehat{M}:M_{N^k}\rightarrow S_1^{N^k}\|_{cb}}{\|\widehat{M}:M_N\rightarrow S_1^N\|_{cb}^k}\leq C^\frac{k}{2} (1+\ln n)^k.
\end{align*}
Our observation is that we have
\begin{align*}
\|\otimes^k V:M_{N^k}\rightarrow OH_{n^k}\|_{cb}\leq \pi_2^o(\otimes^k V:M_{N^k}\rightarrow OH_{n^k})=\pi_2^o(V:M_N\rightarrow OH_n)^k.
\end{align*}Here, $\pi_2^o$ denotes the completely $2$-summing norm (see \cite{Pisier2}, \cite{Junge-Groth}).
Therefore,
\begin{align*}
\|\otimes^k \widehat{M}:M_{N^k}\rightarrow S_1^{N^k}\|_{cb}\leq \pi_2^o(V:M_N\rightarrow OH_n)^{2k}.
\end{align*}The main point is that $\pi_2^o(V)\leq c_0\sqrt{1+\ln n} \|V\|_{cb}$ for every map $V:M_N\rightarrow OH_n$, where $c_0$ is a universal constant independent of $n$ (see \cite[Equation (2.3)]{Junge-Groth}) and, hence, we obtain
\begin{align*}
\|\otimes^k \widehat{M}:M_{N^k}\rightarrow S_1^{N^k}\|_{cb}\leq c_0^{2k}(1+\ln n)^k\|\widehat{M}:M_N\rightarrow S_1^N\|_{cb}^k
\end{align*}as we wanted.
\subsection{Proof of Theorem \ref{lowerbound OH_n}}
The construction relies on the following result proved in \cite{Junge-Groth}:
\begin{theorem}\label{thm-Junge-Groth}
There exist universal constants $C_0> 0$ and $C_0'>0$ such that
\begin{enumerate}
\item[1.] For every $n$ we can find a natural number $N$, a complete contraction $u_n:OH_n\rightarrow M_N$ and a linear map $w_n:M_N\rightarrow OH_n$ such that $\|w_n\|_{cb}\leq C_0\sqrt{\frac{n}{1+\ln n}}$ and verifying $w_n\circ u_n=id:\ell_2^n\rightarrow \ell_2^n$.
\item[2.] The previous factorization is optimal. That is, for every $n$ and for every maps $u:OH_n\rightarrow B(\ell_2)$, $w:B(\ell_2)\rightarrow OH_n$ verifying $w\circ u=id:\ell_2^n\rightarrow \ell_2^n$ we have $\|u\|_{cb}\|w\|_{cb}\geq C_0'\sqrt{\frac{n}{1+\ln n}}$.
\end{enumerate}
\end{theorem}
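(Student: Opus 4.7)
The theorem splits into an existence statement (part 1) and a matching lower bound (part 2), and the plan is to attack them by essentially different techniques, both rooted in noncommutative probability.

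For part 1, I would produce the factorization through $M_N$ explicitly via a random matrix model. Take $N$ of polynomial order in $n$ and let $U_1,\dots,U_n \in M_N$ be either independent $N\times N$ Haar random unitaries or the generators of a free-group factor followed by a finite-dimensional trace-preserving approximation. Define $u_n:OH_n\to M_N$ by $u_n(e_i)=U_i/\sqrt{N}$. The defining identity
\[
\Big\|\sum_i x_i\otimes e_i\Big\|_{M_k(OH_n)}=\Big\|\sum_i x_i\otimes \overline{x_i}\Big\|^{1/2}_{M_{k^2}},
\]
combined with concentration estimates (or exact freeness in the limit), shows $\|u_n\|_{cb}\le 1$. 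Then build $w_n:M_N\to OH_n$ by $w_n(x)=c_{n,N}\sum_{i=1}^n \mathrm{tr}(xU_i^*)\, e_i$, with a normalizing constant $c_{n,N}$ chosen so that $w_n\circ u_n=\mathrm{id}_{\ell_2^n}$. The nontrivial step is to estimate $\|w_n\|_{cb}$: this reduces to a noncommutative Khintchine inequality for sums of tensors $\sum_i a_i\otimes U_i$, which notoriously carries an extra $1/\sqrt{1+\ln n}$ improvement beyond the naive $\sqrt{n}$ bound and is the source of the logarithmic gain.

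For part 2, the plan is a duality argument. Using the complete isometry $OH_n^*=\overline{OH_n}$, any factorization $\mathrm{id}_{\ell_2^n}=w\circ u$ with $u:OH_n\to B(\ell_2)$ and $w:B(\ell_2)\to OH_n$ produces, up to the quantity $\|u\|_{cb}\|w\|_{cb}$, a completely bounded isomorphism of $OH_n$ onto the subspace $u(OH_n)\subset B(\ell_2)$. It then suffices to prove the geometric statement: for any $n$-dimensional subspace $E\subset B(\ell_2)$, $d_{cb}(OH_n,E)\gtrsim \sqrt{n/(1+\ln n)}$. The natural way is to evaluate a canonical invariant on both sides. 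The element $t_n=\sum_i e_i\otimes \overline{e_i}\in OH_n\otimes_{\min}\overline{OH_n}$ has minimal tensor norm exactly $\sqrt{n}$ by self-duality, while for any $E\subset B(\ell_2)$ the corresponding tensor in $E\otimes_{\min}\overline{E}$ (transported by a putative cb-isomorphism with $OH_n$) admits a Khintchine-type upper bound of order $\sqrt{n(1+\ln n)}$, obtained by decomposing $E$ along its row and column parts inside $B(\ell_2)$. Comparing the two yields the claimed lower bound on $\|u\|_{cb}\|w\|_{cb}$.

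The main obstacle in both parts is pinning down the logarithmic factor. In part 1 a naive random matrix bound gives only $\sqrt{n}$; securing the $1/\sqrt{1+\ln n}$ improvement requires a sharp noncommutative Khintchine inequality (in the spirit of Haagerup and Pisier), combined with the view of $OH_n$ as the complex interpolation midpoint of $R_n$ and $C_n$. Symmetrically, in part 2 the matching lower bound hinges on the same interpolation structure, and the delicate step is producing a norm computation on a concrete test element that simultaneously captures the self-dual nature of $OH_n$ and the $R\cap C$-type rigidity of finite-dimensional subspaces of $B(\ell_2)$; this is precisely the direction that was only settled in Junge's work \cite{Junge-Groth}, and the non-specialist reader should ultimately be referred there for the fine technical details.
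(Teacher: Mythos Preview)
The paper does not prove this theorem; it cites it as Corollary~4.11 of \cite{Junge-Groth} and indicates that the lower bound (part~2) follows from the trace duality $\gamma_\infty(\mathrm{id}_{OH_n})\,\pi_1^o(\mathrm{id}_{OH_n})=n$ combined with the estimate $\pi_1^o(\mathrm{id}_{OH_n})\simeq\sqrt{n(1+\ln n)}$ from Corollary~4.8 and Proposition~4.9 of that reference. Your proposal is thus more ambitious than what the paper attempts: you sketch an actual argument rather than a citation.

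Your part~1 strategy (random/free unitaries) is in the spirit of constructions in \cite{Junge-Groth}, but the normalizations as written are not justified: setting $u_n(e_i)=U_i/\sqrt{N}$ for Haar unitaries does not obviously yield a complete contraction on $OH_n$, and the assertion that a noncommutative Khintchine inequality directly delivers $\|w_n\|_{cb}\lesssim\sqrt{n/(1+\ln n)}$ is precisely the hard content of \cite{Junge-Groth}, not a routine step one can wave through.

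Your part~2 contains a concrete error. The element $t_n=\sum_i e_i\otimes\overline{e_i}$ in $OH_n\otimes_{\min}\overline{OH_n}=CB(OH_n^*,\overline{OH_n})=CB(\overline{OH_n},\overline{OH_n})$ corresponds to the identity map and therefore has minimal norm $1$, not $\sqrt{n}$; the comparison you propose collapses. The route the paper points to is genuinely different: rather than testing against a single tensor, one exploits the duality between the cb-factorization constant $\gamma_\infty$ and the completely $1$-summing norm $\pi_1^o$, so that the logarithmic factor enters through the nontrivial \emph{upper} bound $\pi_1^o(\mathrm{id}_{OH_n})\lesssim\sqrt{n(1+\ln n)}$, which in turn rests on the embedding of $OH$ into a noncommutative $L_1$-space. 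Your closing sentence is right in spirit---the fine details belong to \cite{Junge-Groth}---and the paper's appendix adopts exactly that stance.
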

The above result is Corollary 4.11 in \cite{Junge-Groth} (see also \cite{Junge-Xu}). The inequality missing in the statement of that result follows immediately from combining the argument of Corollary 4.11 (i.e., $\gamma_\infty(id_{OH_n}) \pi_1^o(id_{OH_n})=n$) with the estimates for $\pi_1^o(id_{OH_n})$ given by Corollary 4.8 and Proposition 4.9.

We want to show that for some $\widehat{M}:M_N\rightarrow S_1^N$ we have
\begin{align*}
\frac{\|\otimes^k \widehat{M}:M_{N^k}\rightarrow S_1^{N^k}\|_{cb}}{\|\widehat{M}:M_N\rightarrow S_1^N\|_{cb}^k}\geq \sqrt{C_1}C_2^\frac{k}{2} \frac{(1+\ln n)^k}{1+k\ln n}.
\end{align*}
Fixing $n$ and $k$ we consider the map $w_n:M_N\rightarrow OH_n$ of Theorem \ref{thm-Junge-Groth}. The first property above tells us that
\begin{align*}
\|w_n:M_N\rightarrow OH_n\|_{cb}^k\leq C_0^k\Big(\sqrt{\frac{n}{1+\ln n}}\Big)^k.
 \end{align*}On the other hand, since $\|\otimes^ku_n:OH_{n^k}\rightarrow M_{N^k}\|_{cb}\leq 1$, the second part of Theorem \ref{thm-Junge-Groth} applied to $OH_{n^k}$ tells us that
 \begin{align*}
 \|\otimes^k w_n:M_{N^k}\rightarrow OH_{n^k}\|\geq C_0'\sqrt{\frac{n^k}{1+k\ln n}}.
 \end{align*}
Let us consider now the $OH_n$-game defined by the map $$\widehat{M}:\overline{w_n^*}\circ w_n:M_N\rightarrow \overline{OH_n}\simeq OH_n^*\rightarrow S_1^N.$$
Note that $$\otimes^k \widehat{M}=(\otimes^k\overline{w_n^*})\circ (\otimes^k w_n):M_{N^k}\rightarrow \overline{OH_{n^k}}\simeq OH_{n^k}^*\rightarrow S_1^{N^k}.$$Then, as we mentioned before, $$\|\widehat{M}:M_N\rightarrow S_1^N\|_{cb}=\|w_n:M_N\rightarrow OH_n\|_{cb}^2$$ and $$\|\otimes^k\widehat{M}:M_{N^k}\rightarrow S_1^{N^k}\|_{cb}=\|\otimes^kw_n:M_{N^k} \rightarrow OH_{n^k}\|_{cb}^2.$$Therefore,
$$\frac{\|\otimes^k\widehat{M}:M_{N^k}\rightarrow S_1^{N^k}\|_{cb}}{\|\widehat{M}:M_N\rightarrow S_1^N\|_{cb}^k}\geq \frac{(C_0')^2\frac{n^k}{1+k\ln n}}{C_0^{2k}\Big(\frac{n}{1+\ln n}\Big)^k}= (C_0')^2 \left(\frac{1}{C_0}\right)^{2k}\frac{(1+\ln n)^k}{1+k\ln n}.$$
\begin{remark}
The preceding construction can be altered to provide also the maximal value of the game involved. According to \cite{Junge-Groth} (see also \cite{HM1}), for every $n$ we can find an $N$ and a complete embedding $j_n:OH_n\hookrightarrow S_1^N$ such that $\|j_n\|\|j_n^{-1}\|\leq C$ for certain universal constant $C$ different from the ones appearing above. On the other hand, it was proved in \cite{Junge-Groth} (see also \cite{Junge-Xu}) that $$\pi_1^o(id:OH_n\rightarrow OH_n)\simeq \sqrt{n(1+\ln n)},$$where $\simeq$ denotes equality up to universal constants and $\pi_1^o$ denotes the completely $1$-summing norm (see \cite{Pisier2}). Thus, defining the map $$\widehat{M}=j_n\circ j_n^*:M_N\rightarrow S_1^N$$it can be deduced that the associated tensor $M\in S_1^N\otimes S_1^N$ verifies $$\|M\|_{S_1^N\otimes_{\min} S_1^N}\simeq1 \text{   }\text{ and } \text{ }\|M\|_{S_1^N\widehat{\otimes} S_1^N}\simeq \sqrt{n(1+\ln n)}.$$Now, since the projection constant of $OH_n$ is of order $\sqrt{\frac{n}{1+ \ln n}}$ (see \cite{Junge-Groth}), we can find a map $P_n:S_1^N\rightarrow OH_n$ such that $\|P\|_{cb}\preceq \sqrt{\frac{n}{1+ \ln n}}$ and $P_\circ j_n=id_{OH_n}$. Then, following exactly the same argument as above one can show that the map $G$ also verifies Equation (\ref{lowerbound OH_n}) (with different constants). If we normalize, we obtain a rank-one quantum game $\tilde{M}:=\frac{M}{\sqrt{n(1+\ln n)}}$ with maximal value $V(\tilde{M})\simeq 1$, entangled value $\omega^*(\tilde{M}) \simeq \frac{1}{n(1 + \ln n)}$ and verifying Equation (\ref{lowerbound OH_n}).
\end{remark}

\

\

\

\

\end{document}